\newtheorem{theorem}{Theorem}[section]
\newtheorem{proposition}[theorem]{Proposition}
\newtheorem{definition}[theorem]{Definition}
\newtheorem{lemma}[theorem]{Lemma}
\newtheorem{corollary}[theorem]{Corollary}
\newtheorem{fact}[theorem]{Fact}
\newtheorem{remark}[theorem]{Remark}
\tikzset{
    state/.style={
           rectangle,
           rounded corners,
           draw=black, very thick,
           minimum height=2em,
           fill=green!80!magenta!50,
           inner sep=2pt,
           text centered,
           },
   trans/.style={
           rectangle,
           rounded corners,
           draw=black, very thick,
           minimum height=2em,
           fill=cyan,
           inner sep=2pt,
           text centered,
           },
   middle/.style={
           rectangle,
           rounded corners,
           draw=white, very thick,
           minimum height=2em,
           fill=white,
           inner sep=2pt,
           text centered,
           }
}
\definecolor{lightgray}{gray}{0.5}
\renewcommand\epsilon\varepsilon 
\renewcommand\phi\varphi 
\newcommand\NN{\mathbb{N}} 
\newcommand\RR{\mathbb{R}} 
\newcommand\CC{\mathbb{C}} 
\newcommand\rounds{R} 
\newcommand\bits{T} 
\newcommand{\sd}{\textbf{int}}
\newcommand{\stt}{\sigma}
\renewcommand{\CC}{\emph{CC}}
\newcommand{\SC}{SC }
\newcommand{\co}{}
\newcommand{\str}{\mathcal{S}}
\newcommand{\intstr}{\mathcal{S}^{\sd}}
\newcommand{\bigO}{O}
\newcommand{\algo}{\textbf{Alg}}
\newcommand{\cutl}{W}
\newcommand{\cutr}{\overline{W}}
\newcommand{\ravg}{C^{avg}}
\newcommand{\AND}{AND}
\newcommand{\OR}{OR}
\newcommand{\embedcost}{A_{\ell +1}}
\begin{document}
\date{}
\title{Streaming Communication Protocols\thanks{This work has been partially supported by 
the ERC project QCC and
    the French ANR Blanc project RDAM}} 

\author{Lucas Boczkowski\thanks{E-mail: {\tt lucas.boczkowski@liafa.univ-paris-diderot.fr}}}
\author{	Iordanis Kerenidis\thanks{E-mail:  {\tt iordanis.kerenidis@liafa.univ-paris-diderot.fr}}}
\author{
   Fr\'ed\'eric Magniez\thanks{E-mail: {\tt frederic.magniez@cnrs.fr}}
}
\affil{CNRS, IRIF, Univ Paris Diderot, Paris, France.}
\begin{titlepage}
\maketitle
\date{}

\def\thefootnote{\fnsymbol{footnote}}

\thispagestyle{empty}

\begin{abstract}
We define the Streaming Communication model that combines the main aspects of communication complexity and streaming. We consider two agents that want to compute some function that depends on inputs that are distributed to each agent. The inputs arrive as data streams and each agent has a bounded memory. Agents are allowed to communicate with each other and also update their memory based on the input bit they read and the previous message they received. 

We provide tight tradeoffs between the necessary resources, i.e. communication and memory, for some of the canonical problems from communication complexity by proving a strong general lower bound technique. Second, we analyze the Approximate Matching problem and show that the complexity of this problem (i.e. the achievable approximation ratio) in the one-way variant of our model is strictly different both from the streaming complexity and the one-way communication complexity thereof. 
\end{abstract}
\end{titlepage}


\section{Introduction}
In the last decade we have witnessed a big shift in the way data is produced and computation is performed.
First, we now have to deal with enormous amounts of data that we cannot
even store in memory (internet traffic, CERN experiments, space expeditions). 
Second, computations do not happen in a single processor or machine, but with multi-core processors
and multiple machines in cloud architectures.
All these real-world changes necessitate that we revisit and
extend our models and tools for studying the efficiency and hardness of
computational problems.

Imagine the following situation : some input is spread among two or more agents. The agents
want to compute some function $f$ which depends on everybody's input.
This is an archetypal problem of Communication Complexity (CC) \cite{Yao79}, which offers a way to estimate the number of bits that need to be exchanged, under various settings, in order to achieve that goal. 
There are many different CC models, depending whether the agents can speak directly between them or through a referee, and whether they can use multiple rounds of communication or just a single one. Communication complexity has found a variety of applications both in networks and distributed computing but also in other areas of theoretical computer science, including, circuit lower bounds, fomulae size, VLSI design, etc. 
All these communication models however have a feature in common. They assume the agents are computationally unbounded and that the input is delivered all at once. 

In a distributed context as that of sensor networks, not only are there several 
computing agents, the input might not even be given all at once.
The \emph{streaming model} \cite{ams} has been defined precisely to capture the fact that the input of an agent is so big it cannot be stored or read several times. Instead it comes bit by bit. Some function of the stream needs to be computed, but the available space is not big enough to store the entire input. The streaming model has been extensively studied in recent years with a plethora of interesting upper and lower bounds on the necessary memory to solve specific streaming problems \cite{mutu}. 
More recently, the turnstile model has received a lot of attention.
In this model, streams are made of both insertions and deletions,
and the function to be computed depends on the remaining elements (and eventually their respective frequencies).
Indeed, any streaming algorithm in the turnstile model can be turned into an algorithm based solely on the updates of linear sketches~\cite{Li14}.

\paragraph{The Streaming Communication model.}
We would like to combine the two above mentioned models to include both that inputs are distributed among different agents and also are 
coming as streams at each agent. Each agent is given a bounded memory to store what she sees.
We refer to this extension as the \emph{Streaming Communication (SC) model}.
Even though communication arguments have often been invoked in proving lower bounds for regular streaming models,
as in the seminal work of \cite{ams,BabaiFS86}, this model has not been rigorously defined previously,
in spite of its theoretical appeal and relevance for actual communication networks.
More formally, in the SC model consider two agents, Alice and Bob, want to compute some function $f$
that depends on inputs $(x,y)$ that are respectively distributed to each agent, $x$ to Alice and $y$ to Bob.
Both inputs arrive as data streams and each agent has a bounded memory of a given size $S$. 
Agents may or may not speak every time they receive a bit. They can also update their memory based on the previous bit they read, the previous message they received and of course the actual content of their memory. 

Additionally to the memory size $S$ of Alice and Bob, the other relevant parameters we consider
are the number $R$ of communication rounds and the number $T$ of bits in the full transcript (the concatenation of all messages).
In the {\em one-way} SC model, there is only a single message from Alice to Bob at the end of the streams.
Observe that we do not bound the size of each message, since we show that those can always be assumed to be of at most $S+1$ bits
(\textbf{Proposition~\ref{proposition:size}}).

\paragraph{Related models}
Before we present our results in the streaming communication model,
we review some related works in the communication and streaming models.
As explained previously, our goal is to provide rigorous tradeoffs between the
two resources: memory storage and communication between the players, in a model where
inputs are coming as streams.

The most relevant works to ours are the papers\cite{Gib1,Gib2}. There, two parties receive two streams and at the end of the streams each party sends their workspace to a referee which uses both workspaces to compute some function of the union of the two streams. In this model, we can also see elements both from streaming algorithms and communication complexity, albeit of the restricted form of simultaneous message passing. Here we provide a more general framework for communication and we look at a much wider variety of problems and protocols.

One of the powerful and often used techniques in streaming algorithms is
linear sketches, which naturally provide very simple SC protocols even in the one-way setting, by just combining the linear sketches at the end of the protocol. In fact, in the turnstile model, when streams are made of sequences of insertions and deletions and the function to be computed is a function of the remaining elements (and eventually their respective frequencies),
any streaming algorithm can be turned into an algorithm based solely on the updates of linear sketches~\cite{Li14}. 
Hence, here we focus on other stream models, such as the one of insertion only, which are more challenging in the context of SC protocols.

%
%

In the distributed functional monitoring setup initially proposed by \cite{CormodeMY08}, $k$ servers receiving a stream have to allow a coordinator to continuously monitor a given quantity. Several works have expanded the results on this model (see e.g. \cite{Chan11,LiuRV12,Huang12, Woodruff12}). These all focus on communication and do not consider
\emph{both} resources,
memory storage and communication simultaneously.   They can be viewed as extending \cite{Gib1,Gib2} with greater number of players. The communication model however is still restricted to a referee scenario.

Another line of work studies bounded-memory versions of communication \cite{LamTT92,BeameTY90,Brody13}. 
Several models have been proposed that share the same structure. 
The input is given all at once, but the players only have
bounded space to store the conversation while further restrictions can be placed
on the algorithms used by the players, for example to be straight line programs \cite{LamTT92}, or branching programs \cite{BeameTY90}.

\paragraph{Our results.}

%


Our first results, detailed in Section \ref{sec:generalities}, show connections between our new model and its two parent models, Communication Complexity and Streaming. 
We show that the total transcript size $T$ and the product $RS$ (of the number of rounds and the memory size)
are both lower bounded by the communication complexity $C(f)$ of the function $f$ we wish to compute, up to logarithmic factors
(\textbf{Proposition~\ref{prop:clocksimth}}).
Those factors come from an inherent notion of clock in our SC model.

The comparison with streaming algorithms is more subtle. Since the $i$-th bit of Alice's input arrives at the same time as the $i$-th bit of Bob's input, the correct comparison is with a single streaming model where the stream is the one we get by interleaving Alice's and Bob's stream. 
We denote by $\intstr(f)$ the memory required by a streaming algorithm for computing a function $f$, when the two streams are interleaved in a single stream. We first observe than interleaving streams instead of concatenating them can lead to an exponential blow up (\textbf{Theorem~\ref{thm:example2}}).
Then we show that $\intstr(f)$ is a lower bound on twice the memory size $S$ of players (\textbf{Proposition~\ref{summem}}).

Then it is natural to ask if there is always a polynomial relation between, on one hand, the parameters of a protocol in the SC model ($S,R$ and $T$), and on the other hand, the communication complexity $C(f)$ (randomized or deterministic) of the function when the input is all given in the beginning and the memory $\intstr(f)$ necessary in the single stream model.
We show that this is not true in general by providing an example for which $\intstr(f)=C(f)= O(\log n)$ 
but $R \cdot S = \Omega(n)$, when $S=\Omega(\log n)$ (\textbf{Theorem~\ref{thm:example}}).
This implies that the SC complexity of a function $f$ may not be immediately derived neither by its communication complexity nor by its streaming complexity. This is one of the main reasons why our model is interesting and necessitates novel techniques for its study.  \medskip

The first of our two main results is a general technique for proving {tradeoffs} between memory and communication in our model. The smaller the memory, the more frequent communication has to be. For instance, one expects that for functions whose communication complexity is $n$, i.e. where all bits are necessary, players with a memory of size $S$ \emph{have to} speak at least every $S$ rounds (either deterministic or randomized), since if they remain silent for more than $S$ rounds, they start to lose information about their input.
More precisely,
assume any function $f$ that can be written as $f(x,y)=G(g_1(x^1,y^1),g_2(x^2,y^2),\ldots,g_L(x^L,y^L)),$
 where $G$ is a function satisfying some assumptions. Then, any randomized protocol computing $f$ must have
 $R\cdot S = \Omega( \sum_{\ell \in L} C(g_\ell) )$ (\textbf{Theorem~\ref{thm:genlem}}).
We can apply our theorem to many canonical communication functions, including $IP_n, DISJ_n$ or $TRIBES_n$, and show that any protocol satisfies $ R \cdot S =\Omega(n)$ (\textbf{Theorem~\ref{thm:corgenlem}}).\medskip

%
In Section \ref{sec:matching}, we study problems arising in the context of graph streaming. 
We work in the insert only model, meaning that the graph is presented as a stream of its edges in an arbitrary order.
Indeed, as opposed to the turnstile model, where any algorithm can be turned into a linear sketches based on~\cite{Li14},
the situation is much more intriguing for problems where linear sketches are not used.
In particular, in the context of streaming algorithms for graph problems, \emph{Approximate Matching} has been extensively studied, and its streaming complexity is still unknown. Given a stream of edges (in an arbitrary order) of an $n$-vertex graph $G$
and some space restriction, the goal is to output a collection of edges from $G$ forming a matching, as big as possible in $G$. The matching size estimation is a different and somehow easier problem~\cite{KKS14}.

It is known that any streaming algorithm for Approximate Matching using $\widetilde{O}(n)$ memory cannot achieve a ratio better than $\frac{e}{e-1}$~\cite{Kapralov13}, whereas
the best known algorithm is a simple greedy algorithm which provides a $2$-approximation. In the one-way CC model, without memory constraints, it has been also showed that a $\frac{3}{2}$-approximation is the tight bound when Alice's message is restricted to $\widetilde{O}(n)$ bits~\cite{Goel12}.
Both these works use in a clever way the so-called Ruzsa-Szemer\'edi graphs. 

We study both the general SC model and its one-way variant. 
Our main bounds are for the one-way variant, the weaker model combining the restrictions of both CC and streaming models:
we show a lower bound of $\frac{e+1}{e-1} \approx 2.16$ for the approximation factor unless $S=n^{1+ \Omega(\frac{1}{\log \log n})}$
(\textbf{Corollary~\ref{cor:lowerb}}),
which is strictly higher than both the single stream lower bound of $\frac{e}{e-1} \approx 1.58$  with same space constraints and the one-way communication lower bound of $1.5$. We also provide a one-way SC protocol achieving an approximation ratio of $3$ with the same space constraints
(\textbf{Theorem~\ref{thm:greedy}}), thus leaving as an open question the optimal approximation ratio.
Moreover, we show that how often the players communicate makes a big difference, namely we show how to implement the simple greedy algorithm when Alice and Bob can communicate during the protocol that provides a ratio of $2$, strictly better than our lower bound for the one-way SC model.


Let us emphasize that all previous lower bounds, including the ones in the turnstile models~\cite{AssadiKLY15,Konrad15}, do not readily apply to the one-way SC model for the Approximate Matching problem.
However, our main lower bound in the one-way SC model uses as a black-box the hard distributions of graph streams of~\cite{Goel12, Kapralov13}.
Therefore, further improvements in the streaming context may lead to improvements in our model.
Given a hard distribution $\mu$ of graphs for the approximate matching for streaming algorithms,
we show how to extend this distribution to produce a hard distribution $\mu_2$ in our one-way SC model (\textbf{Theorem~\ref{thm:reduction}}).

\section{The streaming communication model}\label{sec:prelim}

We provide some background on communication complexity and streaming and then, we define our model and describe some initial results.

\subsection{Communication Complexity}\label{sec:ccback}

We start by reviewing some results in the usual models of communication complexity (CC), defined by Yao \cite{Yao79}. For more details about the communication complexity model, please refer to \cite{NK}. In the communication complexity models, generically denoted by \CC, two players aim at computing some function which depends on their disjoint inputs, by communicating. Each player determines her message based on previous messages and her input. The goal is to minimize the total length of the protocol transcript.


In the randomized case, we will allow the players to share public randomness. Allowing for public randomness makes our lower bounds stronger, while the protocols we provide will be deterministic. We will also consider the expected, rather than maximal, length of transcripts and define the average randomized communication complexity of a function.

\begin{definition}
For a given protocol $\Pi$, we denote by
 $\Pi(x,y,r)$ the transcript with inputs $x,y$ and public randomness $r$.
The worst case (resp. expected) communication complexity of a function $f$
with error $\epsilon$ is defined as
$C_{\varepsilon}(f) = \min_{\Pi } \max_{x,y} \max_r |\Pi(x,y, r)|$ and
$C^{avg}_{\varepsilon}(f) = \min_{\Pi } \max_{x,y} \mathbb{E}_r(|\Pi(x,y, r)|)$,
where the minimum is taken over protocols computing $f$ with error $\varepsilon$, and the expectation on the second line is with
respect to the randomness $r$ used in $\Pi$. 
\end{definition}

The following proposition relates the average and worst case randomized communication complexities.

\begin{proposition}[\cite{NK}]
For any $\epsilon, \delta>0$, it holds that, $
\delta \cdot C_{\varepsilon+\delta}(f) \leq \ravg_{\varepsilon}(f) \leq C_{\varepsilon}(f).$
\end{proposition}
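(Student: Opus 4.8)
The plan is to treat the two inequalities separately, both being elementary.

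The right-hand inequality $\ravg_{\varepsilon}(f)\le C_{\varepsilon}(f)$ is immediate: take a worst-case protocol $\Pi$ witnessing $C_{\varepsilon}(f)$, so that $\Pi$ errs with probability at most $\varepsilon$ on every input and $|\Pi(x,y,r)|\le C_{\varepsilon}(f)$ for all $x,y,r$. Then $\mathbb{E}_r|\Pi(x,y,r)|\le C_{\varepsilon}(f)$ for every $(x,y)$, so the very same $\Pi$ is an $\varepsilon$-error protocol of average cost at most $C_{\varepsilon}(f)$, and taking the minimum over protocols gives the bound.

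For the left-hand inequality I would turn an optimal average-case protocol into a worst-case one by truncation. Let $\Pi$ achieve $\ravg_{\varepsilon}(f)$, write $\ell:=\ravg_{\varepsilon}(f)$, so $\mathbb{E}_r|\Pi(x,y,r)|\le \ell$ for all $(x,y)$. Define $\Pi'$ as follows: the two players simulate $\Pi$ while jointly keeping track of the number of transcript bits exchanged so far — this is well defined, since in two-party communication the whole transcript is common knowledge — and as soon as this count would exceed $L:=\lfloor \ell/\delta\rfloor$ they halt and output a fixed default value, say $0$; otherwise they output whatever $\Pi$ outputs. By construction $|\Pi'(x,y,r)|\le L\le \ell/\delta$ for all $x,y,r$, so $\Pi'$ is a worst-case protocol of cost at most $\ell/\delta$.

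It remains to control the error of $\Pi'$. Fix $(x,y)$ and let $A$ be the event, over $r$, that $|\Pi(x,y,r)|>L$, i.e. that $\Pi'$ aborts. Transcript lengths are integers and $L+1>\ell/\delta$, so Markov's inequality gives
\[
\Pr_r[A]=\Pr_r\big[\,|\Pi(x,y,r)|\ge L+1\,\big]\le \frac{\ell}{L+1}\le \delta .
\]
Off the event $A$ the protocol $\Pi'$ reproduces $\Pi$ exactly, hence
\[
\Pr_r\big[\Pi'(x,y,r)\neq f(x,y)\big]\le \Pr_r[A]+\Pr_r\big[\Pi(x,y,r)\neq f(x,y)\big]\le \delta+\varepsilon .
\]
Thus $\Pi'$ computes $f$ with error $\varepsilon+\delta$ at worst-case cost at most $\ell/\delta$, whence $C_{\varepsilon+\delta}(f)\le \ravg_{\varepsilon}(f)/\delta$, which is the claim. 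There is essentially no hard step here; the only points deserving care are that both players can agree on the truncation moment — automatic, since the transcript is shared — and the precise choice of threshold: taking the floor $L=\lfloor\ell/\delta\rfloor$ together with integrality of transcript lengths is what makes the constant come out as exactly $1/\delta$. If one insists that the output be written into the transcript, aborting costs one extra bit and the bound weakens only by a harmless additive constant.
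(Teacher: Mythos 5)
Your proof is correct: the right-hand inequality is indeed immediate, and your truncation-plus-Markov argument (abort once the transcript would exceed $\lfloor \ravg_{\varepsilon}(f)/\delta\rfloor$ bits, losing at most $\delta$ in error) is exactly the standard argument behind this fact, which the paper itself does not reprove but simply cites from the Kushilevitz--Nisan text. Your attention to the two minor points — that both players can detect the truncation moment since the transcript is common, and that the floor plus integrality of lengths yields the clean $1/\delta$ factor — is exactly what is needed, so nothing is missing.
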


Depending on the context, we denote by $C(f)$ either the deterministic communication complexity of $f$ or the randomized complexity for a fixed $\varepsilon=1/3$. 

Some of the canonical functions studied in communication complexity are the equality problem, denoted $EQ_n$ where the players output $1$ iff their inputs $x,y \in \{0,1\}^n$ are equal, the disjointness problem, denoted $DISJ_n$ where the goal is to check whether the $n$-bit strings interpreted as sets intersect or not, and the inner product problem $IP_n$ where the players need to output the inner product of their inputs modulo $2$.

The functions $DISJ_n$ and $IP_n$ are ``hard" functions for \CC, in the sense that almost all the input must be sent even when we allow for randomization, error and expected length. The following two bounds, which we will need later,
can be derived for example from \cite{YJKS04}, where the notion of information cost is used. 
\begin{theorem}\label{thm:av_std}
Any protocol for $DISJ_n$ or $IP_n$ with error $1/2 - \varepsilon$ has communication complexity
$\Omega (\varepsilon^2n)$.
\end{theorem}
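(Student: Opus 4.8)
The stated bound is actually weaker than what is true for $IP_n$ (where discrepancy gives more), but since it is phrased uniformly for both functions, the plan is to give a single information‑theoretic argument covering $DISJ_n$ and $IP_n$ at once, exploiting their common coordinate‑wise structure $DISJ_n(x,y)=\bigvee_{i=1}^{n}(x_i\wedge y_i)$ and $IP_n(x,y)=\bigoplus_{i=1}^{n}(x_i\wedge y_i)$. I would run the $\cite{YJKS04}$‑style information‑statistics argument in three steps: information cost lower‑bounds communication, a direct sum reduces the $n$‑bit problem to $n$ copies of the one‑bit $AND$ problem, and a single‑coordinate Hellinger bound contributes $\Omega(\varepsilon^2)$ per coordinate.

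First I would recall the two standard facts used as black boxes. (i) For every input distribution $\mu$, the information cost of a protocol is at most its transcript length, so $C_{1/2-\varepsilon}(f)\ge \mathrm{IC}^{\mu}_{1/2-\varepsilon}(f)$; this lets us reason about information rather than bits. (ii) A direct‑sum inequality: choosing $\mu=\eta^{\otimes n}$ for a suitable single‑coordinate distribution $\eta$ on $\{0,1\}\times\{0,1\}$ supported on the "collision‑free" pairs $\{(0,0),(0,1),(1,0)\}$, together with the auxiliary "which player owns coordinate $i$" variables $D=(D_1,\dots,D_n)$, one gets $\mathrm{IC}^{\mu}_{\delta}(f)\ge n\cdot \mathrm{IC}^{\eta}_{\delta}(AND)$. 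The point is that on the support of $\eta$ at most one of $x_i,y_i$ equals $1$, so both $\bigvee$ and $\bigoplus$ of the coordinate $AND$s behave identically there, and conditioning on $D$ makes the coordinates independent, so any protocol for $f$ embeds a protocol for $AND$ in each coordinate.

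The heart of the proof — and the step I expect to be the main obstacle — is the single‑coordinate bound $\mathrm{IC}^{\eta}_{1/2-\varepsilon}(AND)=\Omega(\varepsilon^2)$, which is exactly where the $\varepsilon^2$ (rather than $\varepsilon$) comes from. It goes through the Hellinger‑distance machinery for two‑party protocols. Writing $\Pi_{xy}$ for the transcript distribution on input $(x,y)$, the "cut‑and‑paste" identity gives $h^2(\Pi_{10},\Pi_{01})=h^2(\Pi_{11},\Pi_{00})$; correctness of the protocol with advantage $\Theta(\varepsilon)$ forces $\|\Pi_{11}-\Pi_{00}\|_{\mathrm{tv}}=\Omega(\varepsilon)$, hence $h^2(\Pi_{10},\Pi_{01})\ge\tfrac12\|\Pi_{11}-\Pi_{00}\|_{\mathrm{tv}}^2=\Omega(\varepsilon^2)$; by the triangle inequality for Hellinger distance one of $h^2(\Pi_{00},\Pi_{10})$, $h^2(\Pi_{00},\Pi_{01})$ is $\Omega(\varepsilon^2)$; and finally squared Hellinger distance lower‑bounds the relevant conditional mutual information, so the information cost under $\eta$ is $\Omega(\varepsilon^2)$. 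Chaining the three steps yields $C_{1/2-\varepsilon}(f)\ge \mathrm{IC}^{\mu}_{1/2-\varepsilon}(f)\ge n\cdot \mathrm{IC}^{\eta}_{1/2-\varepsilon}(AND)=\Omega(\varepsilon^2 n)$ for $f\in\{DISJ_n,IP_n\}$ (the statement is of course only meaningful for $\varepsilon$ not much smaller than $n^{-1/2}$).

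As a sanity check for $IP_n$ alone, I would note that the discrepancy (Lindsey/Hadamard) bound already gives the much stronger $C_{1/2-\varepsilon}(IP_n)\ge \tfrac n2-\log\tfrac1\varepsilon-O(1)$, so there the claim is immediate; but no such clean combinatorial bound is available for $DISJ_n$, whose hard‑distribution discrepancy is not exponentially small, which is why the information‑complexity route above — the one implicit in the reference to $\cite{YJKS04}$ — is the argument that covers both functions simultaneously.
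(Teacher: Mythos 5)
Your proposal is correct and follows exactly the route the paper itself relies on: the paper gives no proof of this theorem but cites the information-statistics approach of \cite{YJKS04}, i.e.\ information cost bounded by communication, a direct-sum reduction to the one-bit $AND$ gadget under a collision-free product distribution, and the cut-and-paste/Hellinger argument yielding $\Omega(\varepsilon^2)$ per coordinate, which is precisely what you reconstruct (and your observation that the same embedding serves both $DISJ_n$ and $IP_n$ on that support is the right justification for the uniform statement).
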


%

\subsection{Streaming algorithms}
In the streaming model, the input comes as a stream to an algorithm whose task is to compute some function of the stream while using only a limited amount of memory and making a single or a few passes through the input stream.
See \cite{mutu} for a general introduction to the topic. If possible, the updates should also be fast. It was defined in the seminal work of \cite{ams} where the authors provided upper and lower bounds for computing some stream statistics. Since then, a plethora of results have appeared for computing statistics of the stream, as well as for graph theoretic problems. For the graph problems, we will assume that the graph is revealed to the algorithm as a stream, one edge at a time. In the more recent turnstile model, streams are made of both insertions and deletions,
and the goal is to compute some function that depends only the remaining elements (and eventually their respective frequencies). As we have said, any problem in the turnstile model can be solved via linear sketches~\cite{Li14}.

\subsection{The new model of Streaming Communication protocols}\label{sec:definition}\label{sec:model}
We show how to extend the original model of communication complexity to account for streaming inputs.
In the Streaming Communication $\SC$ model we consider that the inputs $x,y$ 
are not given all at once to the two players Alice and Bob but rather come as a stream. Moreover each player 
only has limited storage, $S$ bits of memory. In the randomized case,
the players also have access to a shared random bit string $r$ which may be infinite. They may use as many coins as they like from these strings.

A \emph{protocol} $\Pi$ in the streaming communication model is specified by four functions $\Phi^A,\Phi^B,\Psi^A, \Psi^B$. 
Each time slot $i$ is divided in two phases:
\begin{compactenum}
\item Each party receives a message from the other party ($m^B_i$ and $m^A_i$ resp.) and updates their memory (that was in state $\stt^{A}_i$ and $\stt^{B}_i$ resp.) according to the function $\Phi^{A}$ and $\Phi^{B}$ resp. This function also depends and the shared random string $r$, which is not restricted in size. 
\item Messages $m^A_{i+1}$ and $m^B_{i+1}$ are produced using the functions $\Psi^{A}$ and $\Psi^{B}$ resp., that depend on the current memory states $\stt^{A}_{i+1}$ and $\stt^{B}_{i+1}$ resp., the newly read input bit, and the randomness $r$. The messages might be empty and they could also be arbitrarily big in principle, though we will see in Proposition~\ref{proposition:size} that their size can be assumed to be $S+1$ without loss of generality.
\end{compactenum}

 The memory state $\stt^{A/B}_{i+1} \in \{ 0,1 \}^{S}$ and the next message $m^{A/B}_{i+1} \in \{0,1 \}^{*}$ can be defined recursively as
$\stt^{A/B}_{i+1} := \Phi^{A/B}(m^{B/A}_{i},\stt^{A/B}_i,r)$ and
$m^{A/B}_{i+1}:= \Psi^{A/B}(\stt^{A/B}_{i+1},x_{i+1}, r)$.
 Moreover, we assume that the streams end with a special EOF symbol and that once the streams are finished, the players only get one last round of communication, and then they have to output something.
%


\begin{definition}[SC protocols]
An SC protocol $\Pi$ uses $S$ bits of memory, $\rounds$ rounds, and $\bits$ bits when
\begin{compactitem}
\item The memory size of each player is at most $S$ bits;
\item The (expected) number of time slots where either $m_i^{A} \neq \emptyset$ or $m_i^{B} \neq \emptyset$ is at most $\rounds$;
\item The (expected) size of all exchanged messages is at most $\bits$ bits.
\end{compactitem}
The expectation is over the randomness of the protocol and worst-case over the inputs. An SC protocol is said to be {\em one-way}  if there is a single message from Alice to Bob after the streams have been received, and only Bob computes the function.
\end{definition}

Note, that our model carries an implicit notion of time due to the players reading their streams synchronously, and hence, the ability to send empty messages can be used to reduce communication \cite{Williams10}. However the gain is only logarithmic in the number of available time slots (see Section~\ref{sec:simclock}). We could have avoided such extra power, by defining a model where agents know when they should speak
or read a bit, based on the previous messages they received and their memory content.
Nevertheless, we opted for our model, as it is simpler to state and the necessary resources do not change by more than a factor logarithmic in the input size.

When we prove lower bounds or communication-memory tradeoffs, we do not consider the complexity of $\Phi^{A/B}$. These functions could be of arbitrarily high complexity.
To make things simple, we assume they are the same functions for every round $i \in [n]$, but they can depend on $n$. This  framework captures the streaming model as a special case, when the output depends on the stream of Alice only. 

\subsection{Properties of the SC model}\label{sec:generalities}

%

\label{sec:compr}

Several times in our proofs,
 we will consider an SC protocol and use it to solve problems in the standard models of CC.
 It is convenient to have a bound on how big the messages $m^{A/B}$ can be.
 
 The length of the messages $m^{A/B}$ could be very big in the SC protocol, but we now show that the SC protocol can be simulated replacing them by length $S+1$ messages.
 
 \begin{proposition}\label{proposition:size}
 In the SC model, we may always assume that the size of the messages is at most $S+1$ bits, up to redefining the transition functions $\Phi^{A/B}$.
 \end{proposition}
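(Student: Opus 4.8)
The plan is to observe that the only reason a message $m^{A/B}_{i}$ ever needs to be long is that it conveys information the receiver will act on, but the receiver has only $S$ bits of memory: after processing the message via $\Phi$, the receiver's state lies in $\{0,1\}^S$, so the message can carry at most "enough to determine the next state plus whether a message is sent." Concretely, fix a protocol $\Pi$ with memory size $S$. I would define a new protocol $\Pi'$ in which, whenever in $\Pi$ a player (say Alice) would send a message $m$, she instead sends the single bit $1$ followed by... no — better: she sends her \emph{current memory state} $\stt^A_{i+1}\in\{0,1\}^S$ together with one extra bit flagging "nonempty message" (or the empty string when $\Pi$ sends nothing). That is $S+1$ bits. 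The receiver Bob, upon getting $(\stt^A_{i+1}, \text{flag})$, can in $\Phi^B$ first internally reconstruct the message $m$ that $\Pi$ would have sent, since in $\Pi$ we have $m^A_{i+1} = \Psi^A(\stt^A_{i+1}, x_{i+1}, r)$ — wait, this depends on Alice's input bit $x_{i+1}$, which Bob does not have.

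So the cleaner route: redefine things so that the message sent is exactly the new memory state. Let me restructure. In $\Pi$, Bob's behavior after receiving $m^A_i$ is governed entirely by $\Phi^B(m^A_i,\stt^B_i,r)$, which outputs a state in $\{0,1\}^S$, and then by $\Psi^B$. So the map $m^A_i \mapsto \Phi^B(m^A_i,\stt^B_i,r)$ has range of size at most $2^S$. The idea is to push the computation of $\Phi^B$ onto Alice's side: Alice, who in $\Pi$ knows $m^A_i$ she is about to send (it is a function of her state and input and $r$) and can also track $\stt^B_i$ — but she cannot, since $\stt^B_i$ depends on Bob's stream. This is the crux of the obstacle, and it is why the statement is phrased "up to redefining the transition functions": I would argue by a symmetric/simultaneous argument. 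Both players, in $\Pi'$, maintain as their memory exactly the state they would have in $\Pi$. When Alice wants to send in $\Pi$, in $\Pi'$ she instead sends her current state $\stt^A_{i+1}$ (that is what determines, together with Bob's side, everything downstream) padded with a flag bit; Bob's new $\Phi^{B\prime}$ takes $(\stt^A_{i+1},\text{flag})$ and his own state and simulates the original $\Phi^B$ — and here the point is that $\Phi^B$ in $\Pi$ took $m^A_i$, but $m^A_i$ in $\Pi$ is itself determined by $\stt^A_{i+1}$, $x_{i+1}$, and $r$; since the \emph{value} of $m^A_i$ was only ever used through $\Phi^B$, and we are free to redefine $\Phi^B$, we simply let $\Phi^{B\prime}(\stt^A,\stt^B,r)$ be whatever produces the correct joint evolution. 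The clean way to make this rigorous is: the entire transcript-and-states process of $\Pi$ on inputs $(x,y)$ with randomness $r$ is a deterministic function of $(x,y,r)$; I will exhibit $\Pi'$ with $|$messages$|\le S+1$ whose induced states agree with those of $\Pi$ at every step, by strong induction on the time slot.

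The induction: suppose up to slot $i$ the states $\stt^{A}_{\le i}, \stt^{B}_{\le i}$ in $\Pi'$ equal those in $\Pi$. In slot $i$, if $\Pi$ sends no message either way, $\Pi'$ does the same and states evolve identically. If $\Pi$ has Alice send $m^A_i$ (WLOG), then in $\Pi'$ Alice sends the pair $(\sigma, 1)$ where $\sigma := \stt^{A}_{i}\in\{0,1\}^S$ is her current state — $S+1$ bits. Define $\Phi^{B\prime}(\,(\sigma,1),\,\tau,\,r\,) := \Phi^{B}(\,\Psi^{A}(\sigma, \star, r)\,?\,\ldots)$ — the issue of $x_{i+1}$ reappears. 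I resolve it by noting we may assume WLOG, by a standard transformation, that each $\Psi^A$-message is a function of the state \emph{only} (fold the just-read input bit into the state update $\Phi^A$, incurring at most a doubling that we absorb, or rather: reorder so $\Phi^A$ reads $x_{i+1}$ before $\Psi^A$ fires). Under that normalization $m^A_i = \Psi^A(\stt^A_i, r)$ is reconstructible from $\sigma$ and $r$, so $\Phi^{B\prime}((\sigma,1),\tau,r) := \Phi^B(\Psi^A(\sigma,r),\tau,r)$ gives Bob exactly his $\Pi$-state $\stt^B_{i+1}$; Alice's own update $\Phi^{A\prime}$ is defined analogously using the (empty or not) message she receives, encoded the same way. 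The flag bit $0$ with empty payload handles the "no message" case so the receiver's $\Phi'$ can distinguish. This completes the induction, hence $\Pi'$ computes the same function with identical resources except messages of size $\le S+1$.

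The main obstacle, as flagged above, is the dependence of a player's outgoing message on the input bit just read, which prevents the other player from reconstructing it from the sent state alone; I expect the bulk of the (easy) work to be the bookkeeping normalization that makes outgoing messages state-only functions, and checking it costs nothing in $S$, $R$, $T$. Everything else is a routine simultaneous induction on time slots.
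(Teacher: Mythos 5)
Your instinct to replace each message by the sender's memory state is the right one, and you correctly identify the crux: the outgoing message $m^{A}_{i+1}=\Psi^{A}(\stt^{A}_{i+1},x_{i+1},r)$ depends on the freshly read input bit, which the receiver cannot reconstruct from the state alone. The gap is in how you resolve this. Your normalization --- ``fold the just-read input bit into the state update'', or ``reorder so $\Phi^{A}$ reads $x_{i+1}$ before $\Psi^{A}$ fires'' --- is not free: in this model $\Phi^{A/B}$ never takes the input bit as an argument, and making the post-read state encode both $\stt^{A}_{i+1}\in\{0,1\}^{S}$ and $x_{i+1}$ requires $S+1$ bits of memory in general, since the state space may already be fully used. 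So your transformation either violates the memory bound $S$ (and then your message, state plus flag, is $S+2$ bits, or $S+1$ bits with memory $S+1$ --- not the statement being proved), or it amounts to delaying the incorporation of $x_{i+1}$ and hence the message by a slot, which in this clocked model changes the semantics (silent slots carry information, and the streams end with EOF followed by a single last round) and is not analyzed. Dismissing this as costless bookkeeping is exactly the step that fails.

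The fix is one line, and it is what the paper does: the extra bit is not a flag but the input symbol itself. The sender transmits the pair $(\stt^{A}_{i+1},x_{i+1})$, exactly $S+1$ bits (empty messages stay empty, so no flag is needed to distinguish the silent case), and the receiver's redefined update is $\Phi^{B}{}'\bigl((\sigma,x),\tau,r\bigr):=\Phi^{B}\bigl(\Psi^{A}(\sigma,x,r),\tau,r\bigr)$, i.e.\ the receiver first reconstructs the original message by applying the sender's $\Psi^{A}$ to what was received and then applies the original $\Phi^{B}$. This keeps the memory at $S$ bits, leaves the round structure and timing untouched, and requires no induction or prior normalization of the protocol.
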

 \begin{proof}
 Consider a protocol $\Pi$ with associated functions $\Phi^{A/B}, \Psi^{A/B}$. 
%
The players can exchange their $S$ size memory and the last input symbol instead of the actual messages. 
Hence, it is possible to redefine functions $\Phi^{A/B}, \Psi^{A/B}$ and
 directly assume messages have length $\leq S+1$. 
The new equations with $S+1$ bit messages would read
$\stt^{A/B}_{i+1} := \Phi^{A/B}(\Psi^{B/A}(\stt^{B/A}_{i},y_{i}, r),\stt^{A/B}_i,r)$ and
$m^{A/B}_{i+1}:= \Psi^{A/B}(\stt^{A/B}_{i+1},x_{i+1},r)$.
\end{proof}

\label{sec:simclock}\label{lowerb}
Any protocol in the SC model can be simulated with another protocol in the usual CC model with a small overhead. Note that due to the implicit time in the SC model, we cannot immediately conclude that the SC model is harder than the usual communication model. 
Nevertheless, this time issue induces only an extra logarithmic factor. 

\begin{proposition}\label{prop:clocksimth}
We can simulate any protocol $\Pi$ in the SC model with parameters $S,R,T$ with another protocol $\Pi'$ in the normal communication model such that its communication cost $C(\Pi')$ is bounded as 
$C(\Pi') \leq  \bits (1 + 2\log n)$ and
$C(\Pi') \leq  \rounds (S + 2\log n+1)$.
\end{proposition}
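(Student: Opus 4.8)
The idea is to take an SC protocol $\Pi$ and produce an ordinary (interactive) communication protocol $\Pi'$ that simulates it round by round. The subtlety, already flagged in the paragraph before the statement, is the implicit clock: in the SC model a player can convey information simply by staying silent for a number of time slots, so a naive simulation that only transmits the nonempty messages loses the timing information. The fix is to have each player, whenever they speak, also transmit how many time slots have elapsed since their last message — this costs an extra $\lceil \log n\rceil$ (or $2\log n$ to be safe about encoding) bits per message, and with this timing annotation the receiver can reconstruct exactly the state evolution that $\Pi$ would have produced.

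\textbf{Step 1: Normalize message size.} First I would invoke Proposition~\ref{proposition:size} to assume every message of $\Pi$ has length at most $S+1$. This is what lets the "one message" cost in the round-based bound be $S+1$ rather than unbounded.

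\textbf{Step 2: Describe $\Pi'$.} In $\Pi'$, Alice and Bob hold their full inputs $x,y$ from the start. They internally simulate the time slots $i=1,2,\dots$ of $\Pi$, maintaining the memory states $\stt^A_i,\stt^B_i$; since each holds its own input and the shared randomness $r$, each can compute its own transition and message functions $\Phi,\Psi$ locally. Whenever the simulation of $\Pi$ produces a nonempty message $m^A_i$ (resp. $m^B_i$), the corresponding player sends in $\Pi'$ the pair consisting of that message together with the binary encoding of $i$ (or of the gap since the previous message). The other player, upon receiving it, knows both the content and the exact time slot, and can therefore advance its own local simulation — feeding empty messages into $\Phi$ for all the intervening silent slots — so that the simulated states stay synchronized with a genuine run of $\Pi$. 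At the end, when the EOF symbols have been processed, Bob (or whoever) outputs what $\Pi$ outputs. Correctness is immediate: $\Pi'$ reproduces the transcript of $\Pi$ on $(x,y,r)$ exactly.

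\textbf{Step 3: Count the bits.} Each nonempty message of $\Pi$ becomes a message of $\Pi'$ of length at most (message length) $+ 2\log n$, since the time index lies in $[n]$ (the streams have length $n$). Summing over all nonempty messages: there are at most $\rounds$ of them, and their total content is at most $\bits$ bits, giving the first bound $C(\Pi') \le \bits(1 + 2\log n)$ — here I'd note that $2\log n$ extra bits per message, times at most $\bits$ messages (since each nonempty message is $\ge 1$ bit), contributes $\le 2\bits\log n$, while the contents contribute $\le \bits$; actually to be careful the clean statement is that total length $\le \bits + (\text{number of messages})\cdot 2\log n \le \bits + \bits\cdot 2\log n$. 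For the second bound, using Step 1 each message has content $\le S+1$, and there are $\le \rounds$ messages, each costing $\le S+1+2\log n$, so $C(\Pi') \le \rounds(S + 2\log n + 1)$. In the randomized/expected setting, $\rounds$ and $\bits$ are expectations, and since these bounds are linear in the message count and total length, they pass through the expectation, yielding the same inequalities for expected communication cost.

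\textbf{Main obstacle.} There is no deep difficulty here; the one point requiring care is making the clock encoding unambiguous and confirming it really does cost only $O(\log n)$ — one must be sure the receiver can parse where the timestamp ends and the message begins (e.g. by fixing the timestamp to exactly $\lceil\log n\rceil$ bits, or prefixing its length), and one must check that a player with no input-dependent knowledge of $\Pi$'s schedule can nonetheless locally simulate the silent rounds, which works precisely because $\Phi^{A/B}$ on an empty incoming message is a well-defined self-update. The rest is bookkeeping to get the two stated inequalities in the right form.
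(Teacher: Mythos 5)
There is a genuine gap in your simulation: as written, $\Pi'$ is not a well-defined protocol in the standard communication model, because the identity of the next speaker is not determined by the transcript. In your Step 2, each player runs a private simulation of $\Pi$ (feeding in empty incoming messages) and ``sends whenever their simulation produces a nonempty message.'' But whether Alice or Bob is the one whose next nonempty $\Pi$-message comes first depends on both private inputs, and neither player knows the other's schedule. Concretely, suppose Bob's simulation says he first speaks at slot $6$ while Alice's says she first speaks at slot $10$: correctness requires Bob's message to arrive before Alice commits to hers (her state after slot $6$ changes once she receives it), yet Alice has no way to know she should stay quiet and let Bob go first, and Bob has no way to know Alice is not about to send something for an earlier slot. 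So your claim that ``correctness is immediate: $\Pi'$ reproduces the transcript of $\Pi$ exactly'' fails precisely in the situation the proposition is about, namely when the two players' speaking times interleave in an input-dependent way. In the standard CC model there is no asynchronous ``send when ready''; the speaker order must be a function of the transcript.

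This missing coordination is exactly what the paper's proof supplies, and it is also where the $2\log n$ really comes from: in each iteration \emph{both} players declare the time $t^A$, resp.\ $t^B$, at which they would next send a nonempty message of $\Pi$ given the current transcript (two declarations of $\log n$ bits each), and then only the player with the smaller declared time sends the corresponding $\Pi$-message; the clock variable $t$ is updated to $\min(t^A,t^B)$ and the loop repeats. Each iteration delivers at least one nonempty $\Pi$-message, so there are at most $R$ iterations, giving $C(\Pi')\le T + 2R\log n$, and the two stated bounds follow from $R\le T$ and $T\le R(S+1)$ (Proposition~\ref{proposition:size}), exactly as in your Step 3. Your per-message timestamp idea is close in spirit and your bookkeeping would give the right numbers, but a one-sided timestamp attached to a message cannot replace the two-sided exchange of intended speaking times, because it only informs the receiver after the ordering problem has already been (silently and incorrectly) resolved. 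If you repair Step 2 by adding this handshake, the rest of your argument (normalizing message size via Proposition~\ref{proposition:size}, linearity of the bounds under expectation) goes through.
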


We now compare the SC model to streaming algorithms, that is when there is a single player and a single stream.
There are various ways to combine streams $x$ and $y$ in a single stream.
Since $x_i$ is presented to Alice at the same time as $y_i$ to Bob, in a single player model $x_i$ should be presented just before $y_i$ to the player. This explains why we consider the interleaved streaming model.
\begin{definition}
 Let $\intstr(f)$ be the amount
of memory required for a streaming algorithm to compute $f$ where the input stream is $x$ and $y$ interleaved, 
that is to say, $x_1,y_1,x_2,y_2,\ldots,x_n,y_n$.
\end{definition}

It turns out that interleaving streams instead of concatenating streams may affect the memory requirement of the function for a standard streaming algorithm by an exponential factor (see Appendix~\ref{sec:int_vs_std}).
\begin{theorem}\label{thm:example2}
There is a function $f$ such that $ \intstr(f) =  \Omega(n)$, whereas there is a streaming algorithm to compute $f$ with memory $O(\log n)$
when streams are concatenated.
\end{theorem}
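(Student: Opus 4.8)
The plan is to exhibit one function $f:\{0,1\}^n\times\{0,1\}^n\to\{0,1\}$ whose natural concatenation is trivial to stream while any interleaved streaming algorithm must essentially store a full $\mathrm{INDEX}$ instance. Write $\langle x\rangle\in[n]$ for the index encoded in binary by the last $\lceil\log n\rceil$ bits of $x$ (fix any convention when the encoded value exceeds $n$), and set $f(x,y):=y_{\langle x\rangle}$.

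\textbf{Upper bound for the concatenated order} $x_1,\dots,x_n,y_1,\dots,y_n$: the algorithm keeps a position counter and a sliding window holding the last $\lceil\log n\rceil$ symbols read. When the $x$-block ends, this window equals $\langle x\rangle=j$, which it stores; during the $y$-block it waits for the $j$-th symbol and outputs it. This uses $O(\log n)$ bits, giving the second half of the statement.

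\textbf{Lower bound on} $\intstr(f)$: I reduce from the one-way communication complexity of $\mathrm{INDEX}_m$ with $m=n-\lceil\log n\rceil$, which is $\Omega(m)=\Omega(n)$ (see, e.g., \cite{NK}). The obstacle specific to the interleaved model is that the stream $x_1,y_1,\dots,x_n,y_n$ mixes both players throughout, so no single cut yields a clean two-party protocol. I circumvent this by restricting attention to inputs in which the first $m$ bits of $x$ and the last $\lceil\log n\rceil$ bits of $y$ are zero (and $\langle x\rangle\in[m]$): on such inputs $f$ coincides with $\mathrm{INDEX}_m$ on the array $y_1,\dots,y_m$ and index $\langle x\rangle$, and moreover the first $2m$ symbols of the interleaved stream, namely $0,y_1,0,y_2,\dots,0,y_m$, depend only on the array, while the remaining symbols (the bits of $\mathrm{bin}(\langle x\rangle)$ interleaved with zeros, then EOF) depend only on the index. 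Hence from any $\intstr(f)$-memory interleaved streaming algorithm for $f$ we get a one-way protocol: the array player runs the algorithm on the first $2m$ symbols and sends the memory state; the index player resumes the simulation on the remaining symbols and reads off the output $y_{\langle x\rangle}$, the answer to $\mathrm{INDEX}_m$. Thus $\intstr(f)\ge\Omega(n)$, and together with the upper bound this yields the exponential gap.

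The one non-routine point is exactly the choice to pin the "irrelevant" coordinates of $x$ and $y$ to zero so that the two contiguous halves of the interleaved stream become functions of disjoint parts of the input; without this, the cut would leak bits of both players across it and the reduction would fail. The endpoint/rounding details in defining $\langle x\rangle$ and the EOF handling need a quick check but do not affect the asymptotics.
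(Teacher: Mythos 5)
Your proposal is correct, and its lower bound rests on exactly the same mechanism as the paper's: cut the interleaved stream at a point where everything before the cut depends only on an INDEX array and everything after it only on the index, then turn the memory state at the cut into the single message of a one-way protocol, invoking the $\Omega(m)$ one-way bound for INDEX. Where you differ is in the witness function. The paper works with a language, $DYCK(2)$ restricted to three alternations, and engineers a padded class of words $u v b \overline{b}\overline{v} w \overline{w}\overline{u} w'\overline{w'}$ so that the first half of the interleaved stream is $(u,\overline{u})$ (Alice's INDEX array) and the second half is determined by Bob's data; its $O(\log n)$ upper bound for the concatenated order is imported as a black box (hashing-based streaming recognition of Dyck words with few alternations). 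You instead define the ad hoc function $f(x,y)=y_{\langle x\rangle}$ with the index pinned to the tail of $x$, which makes the concatenated-order upper bound a trivial sliding window plus a counter and makes the prefix/suffix separation in the interleaved order immediate once the "irrelevant" coordinates are fixed to zero. Your route is more elementary and fully self-contained, at the price of using a purpose-built function rather than a natural, previously studied language; both give the same $O(\log n)$ versus $\Omega(n)$ separation. The only details worth stating explicitly are the convention for out-of-range values of $\langle x\rangle$ (so that $f$ is total and the upper bound still applies), the observation that a lower bound proved on the restricted promise set is a fortiori a lower bound for $\intstr(f)$, and, if $\intstr$ is meant to allow randomized algorithms, that the randomized one-way complexity of INDEX is still $\Omega(m)$ and the memory state remains a sufficient statistic for continuing the simulation with fresh coins.
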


First let us observe that $S^{\sd}(f)$ provides a lower bound in the SC model.
\begin{proposition}\label{summem}
Let $f$ be a function. Then, any protocol in the SC model for the function $f$, where Alice and Bob use memories of size $S$, must have
$ 2S \geq S^{\sd}(f).$
\end{proposition}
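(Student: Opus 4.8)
The plan is to simulate an SC protocol $\Pi$ with memory $S$ by a single streaming algorithm $\mathcal{A}$ on the interleaved stream $x_1, y_1, x_2, y_2, \ldots, x_n, y_n$, using only $2S$ bits of memory (plus a negligible amount for bookkeeping, which can be absorbed or handled by noting that the definition of $S^{\sd}$ allows for the additive constants, or by a slightly more careful accounting). The key observation is that the SC model has an implicit synchronous clock: at time slot $i$, Alice reads $x_i$ and Bob reads $y_i$. In the interleaved stream, a single algorithm sees $x_i$ and then immediately $y_i$, so it has all the information needed to carry out the $i$-th step of $\Pi$ for both players.

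Concretely, $\mathcal{A}$ maintains in its memory the pair of states $(\sigma^A_i, \sigma^B_i)$, which takes $2S$ bits. Using Proposition~\ref{proposition:size}, we may assume all messages in $\Pi$ have length at most $S+1$, so at each time slot the messages $m^A_{i+1}$ and $m^B_{i+1}$ are short and can be computed and consumed within $\mathcal{A}$'s working memory without storing anything beyond the two states. When $x_i$ arrives, $\mathcal{A}$ has the current states and can reconstruct the messages $m^A_i, m^B_i$ that would have been exchanged (or recompute them on the fly, since $\Psi^{A/B}$ depends only on the state, the last input bit, and the shared randomness $r$ — and $\mathcal{A}$ simply fixes one string $r$, as a deterministic streaming algorithm corresponds to a fixing of the public randomness, and randomized streaming is handled by passing $r$ through). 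It then applies $\Phi^A$ to update Alice's state; when $y_i$ arrives on the next symbol of the interleaved stream, it applies $\Phi^B$ and $\Psi^B$ symmetrically. Since $\Phi^{A/B}, \Psi^{A/B}$ may have arbitrarily high complexity in our model and $S^{\sd}(f)$ only counts memory, this is legitimate. At the end of the stream (the EOF symbols), $\mathcal{A}$ simulates the final round of communication and outputs whatever Bob outputs.

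The correctness is immediate: $\mathcal{A}$ computes exactly the same function as $\Pi$ with the same error probability, using $2S$ bits of memory for the two states. Hence $S^{\sd}(f) \leq 2S$, which is the claimed inequality. The one subtlety to address carefully — and the step I expect to require the most attention — is the precise bookkeeping of memory: strictly speaking $\mathcal{A}$ also needs to know whether the next incoming symbol is an $x$-bit or a $y$-bit (i.e., a single parity bit) and to hold a message of length $\leq S+1$ transiently while computing a transition. One should argue that either (i) these $O(\log n)$ or $O(S)$ transient bits can be folded into the asymptotic statement, or (ii) the parity is determined by the stream position and the transient message need not be stored simultaneously with both full states if the transitions are interleaved appropriately, so that the peak memory is genuinely $2S$. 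For the clean bound $2S \geq S^{\sd}(f)$ as stated, the natural reading is that the dominant term is $2S$ and lower-order additive terms are suppressed; the argument above delivers exactly this.
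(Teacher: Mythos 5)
Your proposal is correct and follows essentially the same route as the paper's proof: simulate both players on the interleaved stream, keep only the pair of memory states $(\stt^A,\stt^B)$ in $2S$ bits, and compute the messages on the fly rather than storing them. Your extra care about the transient message and the parity bit is fine but not needed beyond what the paper's own (terser) argument already asserts.
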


It is natural to ask if there is a polynomial relation bounding the parameters $S, R, T$ of a protocol in the SC model in terms of the streaming complexity $\intstr(f)$ and the communication complexity $C(f)$, 
at least when $S=O(\intstr(f))$.
This appears to not hold in general (See Appendix~\ref{sec:gap_sc}).

\begin{theorem}\label{thm:example}
There exists a function $f$ such that $\intstr(f)=C(f)= O(\log n)$ but any protocol computing $f$ in the $SC$ model must have $R \cdot S = \Omega(n)$. This holds for $S=\Omega(\log n)$.
\end{theorem}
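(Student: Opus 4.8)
I take $f$ to be the greater-than function $\mathrm{GT}_n$: Alice gets $x\in\{0,1\}^n$, Bob gets $y\in\{0,1\}^n$, and $f(x,y)$ is the value of $x$ at the first coordinate where $x$ and $y$ differ (say, $0$ if $x=y$) --- equivalently, $f(x,y)=1$ iff $x>y$ when both strings are read most-significant-bit first. The two ``easy'' upper bounds are immediate: a single pass over the interleaved stream $x_1,y_1,x_2,y_2,\dots$ only has to keep a three-valued flag (``equal so far / already larger / already smaller''), flipped the first time a pair disagrees, so $\intstr(f)=O(1)\le O(\log n)$; and $C(f)=O(\log n)$ by the standard randomized binary search for the first difference --- maintain an interval $[\ell,h]$ certified to contain it and, over $O(\log n)$ rounds, exchange a public-coin fingerprint of the appropriate prefix to learn in which half it lies (i.e.\ use that the randomized communication complexity of $\mathrm{GT}_n$ is $\Theta(\log n)$).

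The content is the lower bound $R\cdot S=\Omega(n)$, and the slogan is that $\mathrm{GT}$ is cheap in plain communication only because of the \emph{adaptive} binary search, which a single streaming pass forbids. Fix the distribution $\mu$: pick $p\in[n]$ and $x\in\{0,1\}^n$ uniformly and independently, set $y_i=x_i$ for $i<p$, $y_p=1-x_p$, and $y_i$ uniform i.i.d.\ for $i>p$; then $f(x,y)=x_p$, under $\mu$ either player can output $f$ iff it can locate $p$, and neither player's input alone carries any information about $p$. Given any \SC protocol computing $\mathrm{GT}_n$, fix its public coins to a typical value (averaging, plus Markov on the expected number of rounds) so that it becomes deterministic, still has small error on $\mu$, and on almost all of $\mu$ uses only $O(R)$ communication time-slots. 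On such a \emph{good} input the communication slots cut $[n]$ into $O(R)$ \emph{gaps} --- maximal intervals with no communication.

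Consider the gap $(a,c]$ that contains $p$, and condition on $p\in(a,c]$; write $u=x_{(a,c]}$ and $v=y_{(a,c]}$ ($m:=c-a$ bits each). These agree on a prefix of uniformly random length and then differ, and $f=\mathrm{GT}_m(u,v)$. Since no message crosses the gap, Alice's state at time $c$ is a function of $u$ alone (and of her pre-gap state, which is driven only by the prefix $x_{[1,a]}$, independent of $p$), Bob's state at time $c$ is a function of $v$ alone, and the arbitrarily interactive remainder of the protocol can no longer re-read $u$ or $v$, while the suffixes $x_{(c,n]},y_{(c,n]}$ are independent of the answer. Hence the protocol induces a public-coin \emph{simultaneous-message} protocol for $\mathrm{GT}_m$ with messages of length $\le S$. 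But this costs $\Omega(m)$: already on the sub-family where $u$ and $v$ differ in exactly one (unknown, uniform) coordinate $q$, a referee must not only locate $q$ --- a short linear sketch does that --- but also read off $u_q$, and since $q$ is adversarial one of the two messages has to encode its string robustly enough to permit retrieval at an arbitrary coordinate, hence has $\Omega(m)$ bits, exactly as in the \textsc{Index} lower bound. So every gap containing $p$ has length $O(S)$.

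It remains to combine ``$p$ is uniform in $[n]$'', ``the $O(R)$ gaps partition $[n]$'', and ``the gap containing $p$ is short'' into $R\cdot S=\Omega(n)$. Here one uses that the gap $(a,c]$ containing $p$ is selected \emph{obliviously} to where in it $p$ sits --- $a$ is fixed before time $p$ by data agreeing on both sides, and on $(a,c]$ the silence of each player depends only on its own (marginally uniform, $p$-oblivious) share of the inputs --- so that, conditioned on being good, $p$ falls into each gap with probability proportional to its length; a Cauchy--Schwarz estimate then makes the expected length of the gap containing $p$ at least $\Omega(n/R)$, which together with the length bound above gives $R\cdot S=\Omega(n)$ (the $S=\Omega(\log n)$ hypothesis is what lets the model carry the $\Theta(\log n)$-bit clock used implicitly throughout). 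The two steps I expect to take real work are (i) the $\Omega(m)$ lower bound on the simultaneous-message complexity of $\mathrm{GT}_m$, and (ii) turning ``each relevant gap is short'' into the clean product bound --- i.e.\ making the obliviousness precise and handling (via a standard success-amplification) the fact that a constant error budget only directly forbids \emph{most} relevant gaps from being long.
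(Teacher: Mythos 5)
Your choice of $\mathrm{GT}_n$ and the two upper bounds are fine, but the pivotal step (i) is wrong as you justify it. On the sub-family where $u$ and $v$ differ in exactly one coordinate $q$, the claim that a referee ``must be able to retrieve an arbitrary coordinate, as in INDEX,'' so that one message needs $\Omega(m)$ bits, is false: Alice can send $|u|_1$ and Bob $|v|_1$ (that is $O(\log m)$ bits each), and $|u|_1-|v|_1=2u_q-1$ determines $u_q$ exactly, since the strings agree off $q$ and $v_q=1-u_q$. So the simultaneous-message complexity of that sub-family is $O(\log m)$, and the INDEX-style ``someone must encode the whole string'' reasoning is precisely the kind of argument that correlated two-sided inputs defeat. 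To make the gap step work you must use the actual conditional distribution (common prefix, one flip at $q$, then an independent uniform suffix on Bob's side), where the counting trick is drowned by the suffix noise; the right engine there is the one-way lower bound for $\mathrm{GT}$/augmented INDEX --- note that an SMP protocol with $S$-bit messages yields a one-way protocol with an $S$-bit message, since Bob can play the referee knowing $v$ --- not an SMP argument on the one-flip family. As written, the central $\Omega(m)$ claim is unsupported and the justification offered for it is incorrect.

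The structural issues you defer in (ii) are also not routine: the right endpoint of the gap containing $p$ is determined by the very bits you then want to treat as a fresh hard instance, so conditioning on silence skews the distribution; and when you re-plug arbitrary $(u,v)$ into the gap, the protocol may start talking inside it, so Alice's end-of-gap state is no longer a function of $u$ alone and the claimed sketch-then-interact structure breaks. Controlling this adaptivity (an abort-if-communication-exceeds-threshold rule paid for by a Markov argument, and choosing inputs block by block so the bound holds simultaneously on all blocks) is where essentially all the work lies. For comparison, the paper sidesteps $\mathrm{GT}$ entirely: it takes $f(XY',YX')=IP(X,Y)\wedge (X=X')\wedge (Y=Y')$, whose echo structure makes both $C(f)$ and $\intstr(f)$ be $O(\log n)$, and proves $R\cdot S=\Omega(n)$ by an inductive, block-by-block reduction to interactive $IP_{\Theta(S)}$ (with exactly the abort trick above), where the second half of the stream forces each side to replay the first half, so the simulation can be completed by exchanging memory states. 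Your $\mathrm{GT}$ route may be salvageable using the one-way $\mathrm{GT}$ hardness under the correct conditional distribution together with a careful treatment of the adaptive gap structure, but both the key lower bound and the reduction currently have genuine gaps.
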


\section{Communication primitives}\label{sec:comm}
We provide a general theorem that provides tight tradeoffs both in the deterministic and randomized case for a variety of functions, including $DISJ_n, IP_n, TRIBES_n$. 


\subsection{A general lower bound}\label{sec:genlb}
In this section we show a general result that gives a lower bound for a large class of functions. We will obtain the lower bounds for the usual primitives $DISJ, IP, TRIBES$ as a corollary.
We treat $\varepsilon$ as a constant.
Let us start with a high level description of the result. 

Assume $f$ can be written as the composition of an outer function $G$ and inner gadgets $g_\ell$. We want to 
lower bound the communication needed to compute $f$ in the SC model in terms of the communication complexity of 
each $g_\ell$ and the available memory. 

\begin{definition}
We call a function $G$ on $L$ variables \emph{non trivial} if the following holds.
There exists a word $a \in \{0,1\}^{L}$ such that for all $\ell$ there exists a postfix 
$b_{\ell}\in \{0,1\}^{L-\ell-1}$ such that $G(a_{\leq \ell}ub_{\ell})$
 depends on the bit $u$. 
More formally
$ G(a_{\leq \ell}0b_{\ell}) \neq G(a_{\leq \ell}1b_{\ell})$.
\end{definition}

This may look as a restrictive condition. 
 In fact, most natural functions that depend on every bit are "non trivial" in this sense. 
 For the function $\bigoplus$, $a,b$ 
 can be chosen arbitrarily. 
 The functions $\OR$ and $\AND$ are also non trivial. For instance for $\AND$, $a=b=1^L$ will do.

We borrow the next definition from \cite{YJKS04} (extending it slightly).

\begin{definition}[Block-decomposable functions.]
 Let $I_1,\ldots,I_L$ be an interval partition of $[n]$, which we refer to as \emph{blocks}. For $\ell \in [L]$, 
 let $t_\ell=|I_\ell|$ be the length of $I_\ell$.
 Given strings $x,y \in \{0,1\}^n$, write $x^\ell$ (resp. $y^\ell$) for the restriction of $x$ (resp. $y$) to indices in block $I_\ell$.
We say $f : \{0,1 \}^n \times \{0,1\}^n\rightarrow \{0,1\}$ is \emph{$G$-decomposable} with primitives $(g_\ell)$, where $G :
\{0,1\}^{L} \rightarrow \{0,1\}$ and $g_\ell : \{0,1\}^{t_\ell} \times \{0,1\}^{t_\ell}  \rightarrow \{0,1\}$,
if for all inputs $x,y$ we have
$ f(x,y)=G(g_1(x^1,y^1),g_2(x^2,y^2),\ldots,g_L(x^L,y^L))$.
\end{definition} 
 
\begin{theorem}\label{thm:genlem}

 Assume the function $f:\{0,1 \}^n \times \{0,1\}^n\rightarrow \{0,1\}$
 is $G$-decomposable with primitives $(g_\ell)_{\ell \in L}$, and that $G$ is non-trivial.
Let $C_{\varepsilon+\delta}(g_\ell)$ be the worst-case randomized communication complexity of $g_\ell$ in the usual communication model. Then,
any randomized protocol computing $f$ with error $\varepsilon$ in the SC model with $S$ bits of memory, $\rounds$ expected communication rounds and $\bits$  expected bits of total communication, must have
$$\rounds \geq  \sum_{\ell  \leq L} \frac{\delta C_{\varepsilon+\delta}(g_\ell) - \co S}{S + 2\log t_\ell+1}, \quad
\bits  \geq  \sum_{\ell \leq  L} \frac{\delta C_{\varepsilon+\delta}(g_\ell) - \co S}{1 + 2\log t_\ell}.$$
 We can get a similar bound in the deterministic case, where we use the deterministic communication complexity of the $g_\ell$'s. Last, if $G$ is $\bigoplus$ we may remove $\delta$ from the above bounds, changing the complexities 
 $C_{\epsilon+\delta}(g_\ell)$ to  $C^{avg}_\varepsilon(g_\ell)$.
\end{theorem}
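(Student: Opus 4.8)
The plan is to convert, for each block $\ell\in[L]$, the given SC protocol for $f$ into a standard two-party protocol for the gadget $g_\ell$, and to charge the cost of that protocol to the part of $\Pi$'s transcript that is produced while $\Pi$ reads the interval $I_\ell$; since those parts partition the whole transcript, summing over $\ell$ gives the tradeoff.

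Fix a randomised SC protocol $\Pi$ for $f$ with memory $S$, $R$ expected rounds and $T$ expected bits, and fix $\ell$. Since $G$ is non-trivial, there is a fixed word $a$ and a postfix $b$ so that $G$ is sensitive to coordinate $\ell$ once the coordinates below $\ell$ are frozen to $a_1,\dots,a_{\ell-1}$ and those above $\ell$ to $b$; freeze the blocks $j\neq\ell$ to inputs realising these gadget values (if some such $g_j$ is constant then $C(g_j)=0$ and it is dropped). With this hard-coding $f$ equals $g_\ell$ on $I_\ell$ up to a fixed negation. The derived protocol $\Pi_\ell$ for $g_\ell$ then runs as follows. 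The frozen inputs on blocks $j<\ell$ are public and the public randomness $r$ is shared, so both players simulate the run of $\Pi$ on $I_1,\dots,I_{\ell-1}$ \emph{exchanging no bits}, each reconstructing the pair of memory states $\Pi$ reaches at the start of $I_\ell$; from there they feed their true inputs $(x^\ell,y^\ell)$ to $\Pi$ and exchange exactly the bits $\Pi$ emits while reading $I_\ell$; finally Alice sends her $S$-bit state to Bob, who simulates the rest of $\Pi$ alone and outputs $g_\ell(x^\ell,y^\ell)$. So $\Pi_\ell$ computes $g_\ell$ with error $\varepsilon$, its transcript being at most $S$ bits longer than what $\Pi$ emits inside $I_\ell$, and its number of rounds at most one more than $\Pi$ communicates inside $I_\ell$.

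Two conversions produce the stated numbers. First, $\Pi_\ell$'s cost is expected over $r$ and worst case over the block-$\ell$ inputs, so it bounds $\ravg_\varepsilon(g_\ell)$; using $\delta\,C_{\varepsilon+\delta}(g_\ell)\le\ravg_\varepsilon(g_\ell)$ (the average-versus-worst-case relation stated earlier) brings in the factor $\delta$. Second, inside $I_\ell$ the protocol $\Pi$ may stay silent for several of the $t_\ell$ slots and so leak information through message timing, as in Proposition~\ref{prop:clocksimth}; to make $\Pi_\ell$ a genuine protocol one tags each transmitted bit with a $\le 2\log t_\ell$-bit timestamp (for the $T$-bound) or, capping each round at $S+1$ bits by Proposition~\ref{proposition:size}, spends $\le S+2\log t_\ell+1$ bits per round (for the $R$-bound). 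Hence for every $\ell$ the number of bits $\Pi$ emits inside $I_\ell$ is at least $\bigl(\delta\,C_{\varepsilon+\delta}(g_\ell)-S\bigr)/(1+2\log t_\ell)$, and the number of rounds at least $\bigl(\delta\,C_{\varepsilon+\delta}(g_\ell)-S\bigr)/(S+2\log t_\ell+1)$. It remains to sum these. I would run $\Pi$ on a single worst-case input built block by block: keep every prefix consistent with $a$ (so that using it as the hard-coding of the earlier blocks keeps the reduction in block $\ell$ valid) while choosing the block-$\ell$ input, among those whose gadget value is $a_\ell$, to maximise the communication inside $I_\ell$. As $\Pi$ reads online, the bits and rounds $\Pi$ spends inside $I_\ell$ depend only on the prefix through $I_\ell$, so they are well defined and sum to at most $T$ (resp.\ $R$). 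Combining with the per-block bounds gives the theorem; the deterministic case is the same with deterministic protocols and complexities. If $G=\bigoplus$ every point is sensitive, so no hard-coding is needed — the players freeze the other blocks to fixed inputs and correct for the resulting XOR offset — and the reduction is valid for all blocks at once with no constraint on the block inputs, keeping $\ravg_\varepsilon(g_\ell)$ and no $\delta$.

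The crux, and the step I expect to be hardest, is that last summation: the reductions for distinct $\ell$ hard-code the other blocks differently, so it is not automatic that one execution of $\Pi$ pays $\Omega(C(g_\ell))$ inside every block simultaneously. Non-triviality is exactly what repairs this — the single word $a$ exhibits a common family of prefixes that stays sensitive in all blocks, so the worst-case input can be grown greedily without leaving it — but one still loses by having to keep the block-$\ell$ input consistent with $a_\ell$ rather than taking the true worst case; that loss, together with the average-versus-worst-case gap, is what the term $-S$ and the factor $\delta$ are there to absorb (and, for outer functions such as $\AND$ or $\OR$, one reads the per-block contribution as information carried across the boundary of $I_\ell$ and adds it up by a direct-sum argument in the spirit of \cite{YJKS04}).
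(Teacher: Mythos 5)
Your overall skeleton matches the paper's proof: a greedy, block-by-block construction of a hard input kept consistent with the witness word $a$, a per-block embedding of $g_\ell$ into $f$ using non-triviality (prefix $a_{\le\ell}$, suffix $b_\ell$, Bob finishing the simulation after receiving Alice's $S$-bit state), the clock/timestamp overhead from Proposition~\ref{prop:clocksimth}, and the simpler treatment of $G=\bigoplus$ via $C^{avg}_\varepsilon$. However, there is a genuine gap at exactly the step you flag as the crux. Your per-block bound is obtained from $\delta\,C_{\varepsilon+\delta}(g_\ell)\le \ravg_\varepsilon(g_\ell)\le \max_{(x^\ell,y^\ell)}\mathbb{E}_r[\mathrm{cost}(\Pi_\ell)]$, i.e.\ it controls the expected communication inside $I_\ell$ only for the worst block-$\ell$ input over \emph{all} inputs. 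But your greedy construction is forced to pick a block-$\ell$ input with $g_\ell(x^\ell,y^\ell)=a_\ell$, and nothing in your argument rules out that $\Pi$ communicates a lot inside $I_\ell$ only on inputs with value $1-a_\ell$ while staying nearly silent on every $a_\ell$-valued input; in that case your chosen input inherits no lower bound, and the final summation collapses. The claim that this loss is ``absorbed by the $-S$ term and the factor $\delta$'' is not substantiated, and it is not how those terms arise in the paper.

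What is missing is the truncation device of Lemma~\ref{lem:lowerbmain}. There, $R_{\ell+1}$ is defined as the maximum expected number of in-block rounds over the \emph{restricted} set of inputs with gadget value $a_{\ell+1}$, and the derived protocol $\Pi_{\ell+1}$ aborts as soon as the exchanged bits reach the budget $\embedcost=\frac1\delta\bigl(R_{\ell+1}(S+2\log t_{\ell+1}+1)+S\bigr)$, at which point Bob outputs $1-a_{\ell+1}$. On $a_{\ell+1}$-valued inputs Markov's inequality bounds the abort probability by $\delta$; on $(1-a_{\ell+1})$-valued inputs aborting is harmless because the forced output is then correct. This yields a protocol for $g_{\ell+1}$ with \emph{worst-case} cost $\embedcost$ and error $\varepsilon+\delta$, so $\embedcost\ge C_{\varepsilon+\delta}(g_{\ell+1})$, which is precisely the stated per-block bound on the restricted maximum — the quantity your greedy construction actually needs, and the true source of the factor $\delta$. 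Without this asymmetric cutoff (or some substitute argument converting a guarantee on one side of $g_\ell$ into a worst-case lower bound), your proof does not go through for general non-trivial $G$; your treatment of the $\bigoplus$ case, where no restriction on block inputs is needed, is fine and coincides with the paper's.
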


\subsection{Applications}
Before proving Theorem \ref{thm:genlem}, we give a few corollaries. Note that the upper bounds are trivial.
Remind the function $TRIBES$, which is an $AND$ of Set Intersections is defined by
$  TRIBES_n(x,y) := \AND_{i \leq \sqrt{n}}\circ \OR_{j \leq \sqrt{n}} \left(x_{ij} \bigwedge y_{ij}\right)$.
%
\begin{theorem}\label{thm:corgenlem}
  Any randomized protocol in the SC model that computes the function $IP_n$, $DISJ_n$, or $TRIBES_n$ and uses $S$ memory and $R$ communication rounds must have 
$ R \cdot S =\Omega(n)$.
 \end{theorem}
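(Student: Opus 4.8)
The plan is to derive everything from Theorem \ref{thm:genlem} by exhibiting, for each of the three functions, a suitable block-decomposition into primitives together with a non-trivial outer function $G$, and then invoking the known communication lower bounds for the primitives. Throughout I treat $\varepsilon$ as a fixed constant (say $1/3$) and choose $\delta$ to be a small constant as well, so that $C_{\varepsilon+\delta}(g_\ell)=\Omega(|g_\ell|)$ stays linear in the primitive's input length; the denominators $S+2\log t_\ell+1$ contribute only logarithmic factors, and the subtracted $\co S$ term is dominated once the primitives are large enough. I will pick the block sizes so that $\sum_\ell C_{\varepsilon+\delta}(g_\ell)=\Omega(n)$ while keeping $\log t_\ell=O(\log n)$, which yields $R\ge \Omega(n/(S+\log n))$, i.e. $R\cdot S=\Omega(n)$ in the regime $S=\Omega(\log n)$; for $S=O(\log n)$ the bound $R\cdot S=\Omega(n)$ follows a fortiori (one may also note $R=\Omega(n/\log n)$ directly, or absorb it into the statement's implicit assumption $S\ge\log n$).

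The three instantiations are: for $IP_n$, take $G=\bigoplus_{\ell\le L}$ and $g_\ell(x^\ell,y^\ell)=\bigoplus_{j\in I_\ell}(x_j\wedge y_j)=IP_{t_\ell}$; this is visibly $G$-decomposable, $\bigoplus$ is non-trivial (with $a,b$ arbitrary), and by Theorem \ref{thm:av_std} each $C_{\varepsilon+\delta}(IP_{t_\ell})=\Omega(t_\ell)$. Since $G=\bigoplus$, I use the sharper clause of Theorem \ref{thm:genlem} with $C^{avg}_\varepsilon$ in place of $C_{\varepsilon+\delta}$ and no $\delta$ factor. For $DISJ_n$, take $G=\AND_{\ell\le L}$ and $g_\ell(x^\ell,y^\ell)=\mathrm{NOT\text{-}DISJ}_{t_\ell}(x^\ell,y^\ell)$ (equivalently write $DISJ_n$ as the $\AND$ of block-disjointness predicates); $\AND$ is non-trivial (take $a=b=1^L$), and each block-disjointness predicate has randomized complexity $\Omega(t_\ell)$ again by Theorem \ref{thm:av_std}. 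For $TRIBES_n$, the function is already presented as $\AND_{i\le\sqrt n}\circ\OR_{j\le\sqrt n}(x_{ij}\wedge y_{ij})$, so set $L=\sqrt n$, let block $I_i$ consist of the $\sqrt n$ coordinates $\{(i,j):j\le\sqrt n\}$, let $g_i=\OR_{j}(x_{ij}\wedge y_{ij})=\mathrm{NOT\text{-}DISJ}_{\sqrt n}$, and $G=\AND_{\sqrt n}$, which is non-trivial; each $g_i$ has randomized complexity $\Omega(\sqrt n)$, and there are $\sqrt n$ of them, so $\sum_i C_{\varepsilon+\delta}(g_i)=\Omega(n)$ with $t_i=\sqrt n$, hence $\log t_i=O(\log n)$.

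In each case I then plug into Theorem \ref{thm:genlem}: with $L$ blocks of equal size $t=n/L$ and $C_{\varepsilon+\delta}(g_\ell)\ge c\,t$, the bound reads $R\ge L\cdot\frac{\delta c\,t-\co S}{S+2\log t+1}$. Choosing $L$ so that $t=\Theta(S/(\delta c))$ (or $t=\sqrt n$ for $TRIBES$) makes the numerator $\Theta(S)$ and the denominator $\Theta(S+\log n)$, giving $R=\Omega(L)=\Omega(n/S)$ when $S=\Omega(\log n)$, which is exactly $R\cdot S=\Omega(n)$. I expect the only mildly delicate point to be bookkeeping the constants so that the numerator $\delta C_{\varepsilon+\delta}(g_\ell)-\co S$ is bounded below by a positive constant multiple of $S$ — this forces the block length to be at least a constant times $S$, and one must check that such a choice of $L$ is consistent with $L\le n$ and with the logarithmic-factor regime; all of this is routine once the decomposition is fixed. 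The upper bounds $R\cdot S=O(n)$ are trivial (e.g. one round with $S=n$, or $S=O(1)$ and $R=n$ by streaming the bits across), so the tradeoff is tight.
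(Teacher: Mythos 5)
Your treatment of $IP_n$ and $DISJ_n$ is essentially identical to the paper's proof: the same decompositions ($G=\bigoplus$ with $g_\ell=IP_{t_\ell}$, using the $C^{avg}_\varepsilon$ clause of Theorem~\ref{thm:genlem}; $G=AND$ with block-disjointness primitives), the same appeal to Theorem~\ref{thm:av_std}, and the same choice of block length $\Theta(S)$. (Two minor remarks: for $DISJ_n$ the primitive is $DISJ_{t_\ell}$ itself, not its negation — your parenthetical already fixes this, and it is immaterial since the complexities coincide; also your $\log n$ worry is unnecessary, since with $t_\ell=\Theta(S)$ the denominator is $S+O(\log S)=\Theta(S)$, so you get $R=\Omega(n/S)$ with no $S=\Omega(\log n)$ assumption.)

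The one genuine gap is in your $TRIBES_n$ case, which the paper leaves as ``a similar argument'' but which you instantiate with the block size pinned to $t_i=\sqrt{n}$. With that choice the numerator $\delta C_{\varepsilon+\delta}(g_i)-S$ in Theorem~\ref{thm:genlem} is negative as soon as $S$ exceeds a small constant times $\sqrt{n}$, so the theorem gives a vacuous bound and your conclusion $R\cdot S=\Omega(n)$ is not established in the regime $\sqrt{n}\lesssim S\ll n$ (where the claim is not trivial, since it demands $R=\Omega(n/S)\gg 1$). The fix is the same trick you already used for $IP$ and $DISJ$: group $\Theta(S/\sqrt{n})$ tribes into each block, so that $t_\ell=\Theta(S)$, and lower bound the primitive — an $AND$ of $r$ set-intersection instances on $\sqrt{n}$ coordinates — by $\Omega(r\sqrt{n})$. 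This linear bound is exactly the $TRIBES$ lower bound proved via information cost in \cite{YJKS04}, but it is not literally contained in Theorem~\ref{thm:av_std} as stated (which covers only $DISJ$ and $IP$), so it must be invoked explicitly; without it, your argument covers $TRIBES_n$ only for $S=O(\sqrt{n})$.
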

 \begin{proof}
We start by the argument for $DISJ_n$.
We write 
$DISJ_n(x,y) = AND_{\ell=1}^{n/k}\left(DISJ_{\ell}(x^{\ell},y^{\ell})\right)$,
and use the previous result with $G=AND$ and $g_{\ell} = DISJ_{k}$. It follows from Theorem \ref{thm:av_std} that
$ R_{\varepsilon+\delta}(DISJ_k) = \Omega(k)$. We omit the dependency on $\varepsilon$ and $\delta$ in the term $\Omega()$, treating these parameters as fixed constants. The number of rounds for any randomized protocol in the SC model is at least
$ \sum_{\ell = 1}^{\frac{n}{k}} \frac{\Omega( k) - \co S}{S + 2\log k}$.
We get the result choosing $k = \Omega(S)$.

In the case of $IP_n$, the function $f = IP_n$ is the composition of $G = \bigoplus$ over $\frac{n}{k}$ coordinates with $g_{\ell}=IP$ (for each $\ell \leq \frac{n}{k}$), over $k$ coordinates.
 Theorem \ref{thm:av_std} gives $\ravg_{\varepsilon}(g_\ell) \geq \Omega(k)$. 
 Theorem \ref{thm:genlem} yields the bound, taking $k=10S$.
We omit the case of $TRIBES_n$ as it follows from a similar argument.
 \end{proof}

\section{Approximate~Matching in the Streaming Communication model}\label{sec:matching}

%
%

\label{sec:approxm}

The main problem we are going to consider is that of computing an approximate matching. The stream corresponds to edges (in an arbitrary order)
of a bipartite graph $G=(P,Q,E)$ over vertex set $P,Q$, and the algorithm has to output a collection of edges which forms a matching. All edges in the output have to be in the original graph. In the vertex arrival setting, each vertex from $Q$ arrives together with all the edges it belongs to.
Our goal is to understand what is the best approximation ratio we can hope for, for a given memory (and message size). 

%

We start by some notations.
In a graph $G=(P,Q,E)$, if $U \subseteq P \cup Q$ and $V \subseteq P \cup Q$ are subsets of the vertices, we denote by $E(U,V) \subseteq E$ the edges with endpoints in $U$ and $V$. We also denote by $OPT(G)$ the maximum size of a matching in $G$.

Observe now that when communication can occur at any step, the greedy algorithm, which is currently the best algorithm in the standard streaming model, can be implemented easily by having Alice communicate to Bob every time she adds an edge to her matching. 
\begin{proposition}
The greedy algorithm, which achieves a $2$-approximation, can be implemented using $n \log n$ bits of communication and $n$ rounds.
\end{proposition}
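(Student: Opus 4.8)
The plan is to describe the greedy matching protocol in the SC model and carefully count the two resources, memory being unconstrained here (beyond $O(n\log n)$, which is implicit). First I would recall the greedy algorithm in the classical streaming setting: maintain a matching $M$, initially empty; upon reading an edge $e=(u,v)$, add $e$ to $M$ if neither $u$ nor $v$ is currently matched by $M$, and otherwise discard $e$. This always produces a maximal matching, which is well-known to be a $2$-approximation of $OPT(G)$ since every edge of a maximum matching shares an endpoint with some edge of any maximal matching.

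Next I would explain how to simulate this in the one-directional-communication-at-every-step variant of the SC model. Alice holds her incoming stream of edges; Bob will be the one maintaining the running matching $M$, stored as a list of at most $n/2$ edges, i.e. $O(n\log n)$ bits of memory. Each time Alice reads an edge $e_i$ from her stream, she forwards it to Bob; Bob then applies the greedy decision using his stored $M$, possibly adding $e_i$. At the end of the stream Bob outputs $M$. Since Alice simply relays each edge, one round of communication is used per edge read, giving $R = n$ rounds (here $n$ denotes the number of edges in the stream). Each forwarded edge is a pair of vertex labels, costing $2\log n = O(\log n)$ bits, so the total transcript is $T = O(n\log n)$ bits, matching the claimed bound.

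Finally I would remark on correctness and on why no subtlety arises: because Bob sees every edge in the same order Alice does and maintains the exact greedy state, the output is identical to the output of the classical greedy streaming algorithm, hence a maximal matching and a $2$-approximation. The memory used by Alice is $O(\log n)$ (just the current edge), and the memory used by Bob is $O(n\log n)$; the proposition as stated does not constrain memory, so this is fine, though one could also phrase it with $S = O(n \log n)$. The only thing worth a sentence is to note that this crucially uses communication at \emph{every} step: it is precisely the interactivity that lets Bob keep a shared global matching, which is exactly the contrast the paper wants to draw with the one-way model where such a simulation is impossible.

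The main obstacle here is essentially presentational rather than mathematical: there is no real difficulty, so the ``hard part'' is just being precise about which player maintains the matching, confirming that relaying raw edges (rather than some compressed message) is within the model, and making sure the round/bit counts are stated with the right logarithmic factors. One small check is that Bob's memory genuinely suffices to store a matching together with a lookup of which vertices are matched, which fits in $O(n\log n)$ bits.
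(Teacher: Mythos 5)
There is a genuine gap: your protocol does not meet the resource bounds in the statement. In this section $n$ is the number of \emph{vertices} of the bipartite graph $G=(P,Q,E)$, while the stream may contain many more edges --- indeed the interesting instances (e.g.\ the Ruzsa--Szemer\'edi-based hard distributions) have $n^{1+\Omega(1/\log\log n)}$ edges, and in general up to $\Theta(n^2)$. Having Alice forward \emph{every} edge she reads costs one round per edge and $\Theta(\log n)$ bits per round, i.e.\ $|E|$ rounds and $\Theta(|E|\log n)$ bits, which can be far larger than the claimed $n$ rounds and $n\log n$ bits. Re-declaring $n$ to be the number of edges, as you do, silently changes the statement; it is not what the proposition (or the surrounding section, where memory and message size are measured as $O(n\log n)$ in the number of vertices) asserts.

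The idea the paper relies on is to communicate only when an edge is \emph{added} to the matching, not every time an edge is read. Both players jointly maintain the current greedy matching $M$ (or at least its set of matched vertices): when Alice reads an edge whose endpoints are both unmatched in the shared $M$, she adds it and sends it to Bob; Bob does the symmetric thing with his own stream, sending Alice any edge he adds. Since a matching on $n$ vertices has at most $n/2$ edges, there are at most $O(n)$ communicating time slots, each carrying one edge of $O(\log n)$ bits, giving $n$ rounds and $O(n\log n)$ bits, with $O(n\log n)$ memory per player. Correctness is as in your argument: the resulting $M$ is maximal in $G$ because any discarded edge had a matched endpoint at the time it arrived, so $|M|\ge OPT(G)/2$. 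Note also that the communication must go in both directions (or Alice must at least learn which vertices Bob has matched), since otherwise Alice cannot apply the greedy rule consistently with the shared matching; your one-directional relay avoids this issue only by paying per edge, which is exactly what breaks the bound.
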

Thus, we now focus on the one-way SC model, where the communication is restricted to happen once the streams have been fully read. In this setting, we will get different lower and upper bounds than in the streaming model.
We start by a positive result.

\begin{theorem}[Greedy matchings]\label{thm:greedy}
If Alice sends Bob a maximal matching of her graph, then Bob can compute a $3$-approximation. In particular
a $3$-approximation can be computed in a deterministic one-way SC protocol using $\bigO(n\log n)$ memory and message size.
\end{theorem}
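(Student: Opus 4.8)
The plan is to have Alice compute a maximal matching $M_A$ in her portion of the stream using the standard greedy streaming algorithm (whenever an edge arrives both of whose endpoints are currently unmatched in $M_A$, add it), store it, and send it to Bob as her single message; since $M_A$ has at most $n/2$ edges this takes $O(n\log n)$ bits. Bob, meanwhile, runs the greedy algorithm on his own stream while also maintaining the endpoints used by $M_A$ as he receives it, and at the end outputs a maximal matching $M$ of the graph $M_A \cup M_B$ where $M_B$ is what Bob greedily collected; equivalently, Bob can simply start from $M_A$ and greedily extend it using his own edges as they stream in, provided he has $O(n\log n)$ memory to hold $M_A$ and the partial matching. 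This is a deterministic one-way SC protocol with the claimed memory and message-size bounds, so the only thing to prove is the approximation ratio.

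For the ratio, the key step is a charging argument against a fixed maximum matching $M^*$ of the whole graph $G = G_A \cup G_B$ (edges of $G_A$ revealed to Alice, edges of $G_B$ to Bob). I would first note that $M_A$ is a maximal matching of $G_A$, hence every edge of $M^* \cap E(G_A)$ has at least one endpoint matched by $M_A$; and $M$ is maximal in $M_A \cup M_B$ where $M_B$ is maximal in $G_B$, so every edge of $M^*$ has at least one endpoint matched by $M$. The standard maximal-matching bound gives $|M| \ge |M^*|/2$ only directly, which is not enough — I need to be more careful about why $3$ and not $2$ is the right constant here, and in fact the clean way is: $M$ is a maximal matching of the subgraph $H := M_A \cup M_B$, so $|M| \ge |\mathrm{OPT}(H)|/2$. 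It then suffices to show $\mathrm{OPT}(H) \ge \tfrac{2}{3}\mathrm{OPT}(G)$, i.e. that the union of a maximal matching of $G_A$ and a maximal matching of $G_B$ retains a $3/2$-approximate matching of $G$. For this, take $M^*$ optimal in $G$ and split it as $M^*_A \sqcup M^*_B$ (edges in $G_A$ resp.\ $G_B$; assign arbitrarily if an edge lies in both). Each edge of $M^*_A$ is blocked by $M_A$, each edge of $M^*_B$ is blocked by $M_B$, and I would build an explicit large matching inside $H$ by, for each $M^*$-edge, keeping a private witness edge of $M_A$ or $M_B$ — here an Augmenting-path / connected-component analysis of $M^* \cup M_A \cup M_B$ shows that one can always recover at least $\tfrac{2}{3}|M^*|$ disjoint edges from $M_A \cup M_B$, because in each connected component the $M_A\cup M_B$-edges outnumber $2/3$ of the $M^*$-edges.

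The main obstacle I anticipate is making this last combinatorial step tight and clean: a naive "each optimal edge has a blocking edge, so at least half survive" gives only ratio $2$ on the wrong quantity, and one has to argue the $3/2$ loss from passing to $H$ composes with the $2$ loss from greedy-on-$H$ to give exactly $3$, or else argue directly that $|M| \ge |M^*|/3$. The cleanest route is probably the direct one: let $M^* = M^*_A \sqcup M^*_B$; since $M_A$ is maximal in $G_A$, $|M_A| \ge |M^*_A|/2$ and $V(M_A)$ covers an endpoint of every edge of $M^*_A$; since $M$ is maximal in $M_A \cup M_B$ it covers an endpoint of every edge of $M_A$ and of every edge of $M_B$, and $|M| \ge \max(|M_A|, |M_B|)/1$ is too weak, so instead count: edges of $M^*_A$ are covered by $V(M_A) \subseteq V(M) \cup (\text{dropped } M_A\text{-endpoints})$ — I would track exactly how many $M^*$-edges can fail to be "seen" by $M$ and bound the deficiency by $2|M|$, yielding $|M^*| \le 3|M|$. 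I expect to spend most of the write-up getting the bookkeeping of these vertex-covering relations exactly right, and to present the final inequality $\mathrm{OPT}(G) \le 3|M|$ via a single counting identity over $V(M)$.
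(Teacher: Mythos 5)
Your protocol itself is workable (with one caveat: in the one-way SC model Alice's message arrives only after the streams end, so Bob cannot ``start from $M_A$ and greedily extend it as his edges stream in''; the version where Bob stores his own greedy matching $M_B$ and combines with $M_A$ at the end is the one that fits the model). The gap is in the approximation analysis. Your main route factors the ratio as $|M|\ge \mathrm{OPT}(H)/2$ for $H:=M_A\cup M_B$ together with $\mathrm{OPT}(H)\ge \tfrac{2}{3}\,\mathrm{OPT}(G)$, and this second inequality is simply false. Take $P=\{u,w,y\}$, $Q=\{v,x,z\}$, $G_A=\{(u,v),(u,x),(y,v)\}$, $G_B=\{(w,v),(w,z)\}$. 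Then $M_A=\{(u,v)\}$ and $M_B=\{(w,v)\}$ are maximal (and greedily obtainable) matchings of $G_A$ and $G_B$, so $H$ is the path $u\,v\,w$ and $\mathrm{OPT}(H)=1$, while $\{(u,x),(y,v),(w,z)\}$ is a matching of size $3$ in $G$; hence $\mathrm{OPT}(H)=\tfrac13\mathrm{OPT}(G)$. The same example kills the component-wise claim that one can always recover $\tfrac23|M^*|$ \emph{disjoint} edges of $M_A\cup M_B$: in this component the two $H$-edges share the vertex $v$. Your fallback ``direct'' argument ($|M^*|\le 3|M|$ via bounding a deficiency by $2|M|$) is only announced, not carried out, so the key counting step of the theorem is missing.

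The missing step can be done as the paper does it, and it is exactly the clean form of your ``direct route''. First, $\mathrm{OPT}(G)\le |V(M_A\cup M_B)|$: map each $e\in \mathrm{OPT}(G)$, say $e\in G_A$, to one of its endpoints that is matched in $M_A$ (such an endpoint exists by maximality of $M_A$ in $G_A$; similarly with $M_B$ if $e\in G_B$); since $\mathrm{OPT}(G)$ is a matching its edges have pairwise disjoint endpoints, so this map is an injection into $V(M_A\cup M_B)$. Second, $M_A\cup M_B$ is a union of two matchings, hence has maximum degree $2$ and no isolated vertices, so it contains a matching of size at least $|V(M_A\cup M_B)|/3$ (each component is a path or cycle on $k\ge 2$ vertices and contributes at least $\lceil k/3\rceil$ edges, even to a merely maximal matching), which Bob can compute and output. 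Combining the two inequalities gives the factor $3$; note that the loss of $3$ is incurred in one shot against $|V(H)|$, not by composing a factor $2$ (greedy on $H$) with a factor $3/2$ (passing to $H$), which is the decomposition your counterexample above rules out.
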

\begin{proof}
Let $G_1,G_2$ be the respective graphs that Alice and Bob get, and
let $M_1,M_2$ be their respective computed maximal matchings.
We show that there is a matching in $M_1\cup M_2$ of size at least $|OPT(G)|/3$.

The proof goes in two steps. Let $\ell$ be the size of $V(M_1 \cup M_2)$.
We prove first that there is a matching of size $\geq \frac{\ell}{3}$ in the graph $M_1 \cup M_2$,
and then that the number of edges in $OPT(G)$ is at most $\ell$,

The first part is easy. Observe that $M_1 \cup M_2$ has maximal degree $2$. Then there must be a matching of size at least $\ell/3$ from Theorem $7$ in \cite{Biedl}.

For the second part, we construct an injection $OPT(G) \hookrightarrow V(M_1 \cup M_2)$. Let $e=(u_1,u_2)\in OPT(G)$. Assume without loss of generality that $e\in G_1$. 
Then either $u_1$ or $u_2$ is matched in $M_1$ (maybe both), by maximality of $M_1$. Map $e$ to (one of) its matched endpoint in $M_1$.
\end{proof}

Our lower bound is obtained using a black box reduction that we develop in the following sections.
It is a direct consequence of the combinaison of Theorem \ref{thm:reduction} and Theorem \ref{thm:kapramain}. 



\begin{corollary}[A $(e+1)/(e-1)$ lower bound]\label{cor:lowerb}Any protocol achieving a ratio of $\frac{e+1}{e-1} - \eta \approx 2.16- \eta$, for some constant $\eta$, in the vertex arrival setting needs communication $n^{1+ \Omega(1/\log \log n)}$
where the hidden constant in the $\Omega(.)$ depends on $\eta$.
\end{corollary}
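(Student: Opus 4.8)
The plan is to derive Corollary~\ref{cor:lowerb} by plugging the known streaming hardness of Approximate Matching into the black-box reduction of Theorem~\ref{thm:reduction}. Concretely, Theorem~\ref{thm:kapramain} (which is the Kapralov-style bound, referenced as \cite{Kapralov13}) provides a distribution $\mu$ on $n$-vertex graph streams for which no streaming algorithm using space $n^{1+o(1/\log\log n)}$ can output a matching achieving approximation ratio better than $\frac{e}{e-1}-\eta'$; Theorem~\ref{thm:reduction} then upgrades $\mu$ to a distribution $\mu_2$ on instances of the \emph{one-way SC} Approximate Matching problem in the vertex-arrival setting, such that any one-way SC protocol with small communication and memory achieving a good ratio on $\mu_2$ would yield a good streaming algorithm on $\mu$. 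The corollary is obtained by reading off the resulting parameters.

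First I would recall the precise statement of Theorem~\ref{thm:reduction}: given a hard distribution $\mu$ witnessing that ratio $\rho$ is impossible for streaming with space $s$, it produces $\mu_2$ witnessing that the ratio worsens — roughly, the SC adversary can split the hard graph between Alice and Bob so that Bob's view after receiving Alice's single message is no better than a streaming algorithm's view with memory $S$ plus message length $T$. The key combinatorial point (carried by the Ruzsa--Szemer\'edi-type construction underlying \cite{Goel12,Kapralov13}) is that the ratio degrades from $\frac{e}{e-1}$ to $\frac{e+1}{e-1}$: intuitively the optimal matching of the combined graph is larger (by roughly a factor coming from the extra ``$+1$'' in the numerator) while the matching recoverable from a bounded-size summary does not grow proportionally. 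So the second step is to instantiate $\mu$ with the Kapralov construction, for which the space threshold below which ratio $\frac{e}{e-1}-\eta'$ is impossible is $n^{1+\Omega(1/\log\log n)}$, the exponent governed by the length of the Ruzsa--Szemer\'edi family used.

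Then I would carry out the arithmetic of the reduction: a one-way SC protocol for matching on $\mu_2$ with memory $S$ and message size $T$ simulates, via Theorem~\ref{thm:reduction}, a streaming algorithm on $\mu$ using space $O(S+T)$ and achieving ratio within an additive constant of $\frac{e+1}{e-1}$ minus the SC penalty. If $S + T = n^{1+o(1/\log\log n)}$, this would beat the Theorem~\ref{thm:kapramain} threshold, a contradiction; hence $S+T = n^{1+\Omega(1/\log\log n)}$, which (since the total communication dominates or ties the memory in the statement, and we may fold $S$ into the communication bound using Proposition~\ref{proposition:size} so messages are $S+1$ bits) gives communication $n^{1+\Omega(1/\log\log n)}$, with the hidden constant tracking $\eta$ through $\eta'$. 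Choosing constants so that $\frac{e+1}{e-1}-\eta$ on the SC side corresponds to $\frac{e}{e-1}-\eta'$ on the streaming side completes the deduction.

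The main obstacle I expect is not in this final bookkeeping step but in making the constants match cleanly: one has to verify that the ratio loss in Theorem~\ref{thm:reduction} is exactly an additive shift by $\frac{e+1}{e-1}-\frac{e}{e-1}=\frac{2}{e-1}$ in the relevant regime, and that the $\Omega(1/\log\log n)$ exponent in the space lower bound survives the reduction intact (i.e.\ the reduction only blows up the number of vertices by a constant factor, not a super-constant one, so $n^{1+\Omega(1/\log\log n)}$ is preserved). Assuming Theorem~\ref{thm:reduction} and Theorem~\ref{thm:kapramain} as stated, this is a direct substitution; the delicate part is purely tracking how $\eta$, $\eta'$, and the hidden $\Omega(\cdot)$ constant depend on one another.
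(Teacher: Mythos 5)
Your proposal is correct and takes essentially the same route as the paper: the corollary is simply the black-box combination of Theorem~\ref{thm:reduction} with the $(1/e,\,n,\,n^{1+\Omega(1/\log\log n)},\,\eta)$-hard distribution of Theorem~\ref{thm:kapramain}, the ratio coming from evaluating $\frac{1+\alpha}{1-\alpha}-O(\eta)$ at $\alpha=1/e$, which equals $\frac{e+1}{e-1}-O(\eta)$. One minor nit: the gap to the streaming bound is $\frac{e+1}{e-1}-\frac{e}{e-1}=\frac{1}{e-1}$ (not $\frac{2}{e-1}$), and no additive-shift verification is actually needed, since the reduction's conclusion is stated directly as a function of $\alpha$ and only requires plugging in $\alpha=1/e$ and tracking $\eta$.
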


\subsection{Hard distributions for streaming algorithms}

Our notion of hard distribution is tailored to capture the distributions appearing in \cite{Goel12, Kapralov13}.
They are distributions over streams of graphs, that is over graphs and edge orderings.

We will use the following definition for constructing families of hard distributions when $n \rightarrow \infty$ and $\alpha,\eta$ are fixed. Therefore $O()$ and $o()$ notations have to be understood in that context.


\begin{definition}[Hard distribution]\label{def:natural}
A distribution  $\mu$ over streams of bipartite graphs $G = (P,Q,E)$
is an \emph{$(\alpha, n, m(n), \eta)$-hard distribution}
when  $P$ and $Q$  are sets of size $n$ 
and the following holds 
\begin{compactenum}
\item There is a cut $\cutl, \cutr$ of vertices 
such that
$\lvert \cutl \cap Q\rvert + \lvert \cutr \cap P \rvert \leq (1- \alpha + \eta)n$.

\item There is a matching $M$ of size $(1-\eta)n$ in $G$ that can be decomposed into $M_0 \cup M'$
such that \\
(i) $P_0:=V(M_0) \cap P$ and $Q_0:=V(M_0) \cap Q$ are of fixed size (in the support of $\mu$)
larger than $(\alpha -\eta)n$ and smaller than $\alpha n$; and \\
(ii) $P_0 \subseteq \cutl$ and $Q_0 \subseteq \cutr$. 
\item Every streaming algorithm $\algo$ with $o(m)$ bits of memory that outputs $E^*$ with $E^{*}\subseteq E$
must satisfy
$\lvert E^* \cap  E((\cutl \cap P) \times (\cutr \cap Q)) \rvert = o(n)$ with probability $1-o(1)$.
\end{compactenum}
\end{definition}
In particular, a streaming algorithm with small memory can only maintain on hard distributions
a small fraction of edges $E((\cutl\cap P) \times (\cutr\cap Q))$ and therefore of $M_0$.
In addition, observe that for every matching $E^*$ and cut $(\cutl,\cutr)$ (see Figure \ref{fig:maxflow})
$\lvert M \rvert  \leq \lvert \cutl \cap Q \rvert + \lvert \cutr\cap P\rvert + E((\cutl\cap P) \times (\cutr\cap Q))$.
Thus edges from $E(\cutl\cap P \times \cutr\cap Q) $ are also crucial for obtaining a good matching. 

The existence of hard distributions is ensured by~\cite{Goel12, Kapralov13}.
The hard distributions families are not exactly presented as we present them here. The following Theorem follows from
results in  \cite{Kapralov13} involving more parameters,
for the purpose of the construction itself. We disregard those since we use the existence of hard distribution families as a black box.

\begin{theorem}[\cite{Kapralov13}]\label{thm:kapramain}
For all $\eta>0$ and  $n$,
there is a $(1/e, n,  m(n),\eta)$-hard distribution $\mu$ over graphs of $n$ vertices with  $m(n) = n^{1+\Omega(\frac{1}{\log \log n})}$, where the notation $\Omega(.)$ hides a dependency on $\eta$. 
\end{theorem}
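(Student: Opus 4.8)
The statement is essentially a re-packaging of the construction and one-pass streaming lower bound of~\cite{Kapralov13} into the language of Definition~\ref{def:natural}, so the plan is not to re-derive that lower bound but to extract the relevant combinatorial data from it and check the four clauses with $\alpha = 1/e$. First I would recall the construction underlying the $\frac{e}{e-1}$ bound: one starts from an explicit Ruzsa--Szemer\'edi graph $H$ on a linear number of vertices whose edge set is the disjoint union of $t = n^{\Omega(1/\log\log n)}$ \emph{induced} matchings, each of size $\Theta(n)$ (this count, and hence the eventual memory bound $m(n)=n^{1+\Omega(1/\log\log n)}$, is exactly where the $\log\log n$ comes from, via the Fox/Behrend-type bound on RS graphs), then layers $H$ into a bipartite graph on $P,Q$ with $|P|=|Q|=n$ and plants, inside one randomly chosen induced matching, a ``hidden'' near-perfect matching. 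The distribution $\mu$ is the resulting distribution over (graph, edge-ordering) pairs; the extra parameters of~\cite{Kapralov13} (recursion depth, the various $\epsilon$'s, orderings) only affect constants and are absorbed into the $\eta$ slack here, which is why we may quote the result as a black box.

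Next I would name the objects required by Definition~\ref{def:natural}. Take $M$ to be the near-perfect matching guaranteed by the construction, of size $(1-\eta)n$, and split it as $M = M_0 \cup M'$ where $M_0$ is the planted hidden matching and $M'$ the remaining ``structural'' edges. Let $P_0 := V(M_0)\cap P$ and $Q_0 := V(M_0)\cap Q$; by construction these have a \emph{fixed} size depending only on $n$ (the size of an induced matching of $H$), which one arranges to lie in $((\alpha-\eta)n,\alpha n)$ with $\alpha = 1/e$. The cut $(\cutl,\cutr)$ is the one witnessing the max-flow/min-cut bound displayed right after the definition: put $P_0\subseteq\cutl$, $Q_0\subseteq\cutr$, and route the rest of the vertices so that $M'$ crosses cheaply; one then reads off from the RS parameters that $|\cutl\cap Q| + |\cutr\cap P| \le (1-\alpha+\eta)n = (1-\tfrac1e+\eta)n$, giving clauses~1 and~2.

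The only substantive point is clause~3, and here I would invoke Kapralov's lower bound in its ``recovery'' form rather than its ``approximation ratio'' form: any streaming algorithm $\algo$ with $o(m(n))$ bits of memory, run on a stream drawn from $\mu$, outputs a subgraph $E^*\subseteq E$ with $|E^* \cap E((\cutl\cap P)\times(\cutr\cap Q))| = o(n)$ with probability $1-o(1)$. This is exactly what~\cite{Kapralov13} proves (via a reduction to the one-way communication complexity of a Boolean-Hidden-(Hyper)Matching-type problem over the $t$ induced matchings); the usual statement ``no $\frac{e}{e-1}-\eta$ approximation in space $n^{1+\Omega(1/\log\log n)}$'' then follows from clause~3 by the inequality $|M| \le |\cutl\cap Q| + |\cutr\cap P| + |E^*\cap E((\cutl\cap P)\times(\cutr\cap Q))|$, but we only need clause~3 itself. \textbf{The main obstacle} is therefore bookkeeping rather than mathematics: faithfully translating the randomized approximation-hardness theorem of~\cite{Kapralov13} into the crossing-edge-recovery formulation above, confirming that $|P_0|,|Q_0|$ can be made deterministic in the support of $\mu$ while still fitting the window $((\alpha-\eta)n,\alpha n)$, and, if needed, boosting the success probability from constant to $1-o(1)$ by a standard parallel-repetition/amalgamation argument at the cost of only constant factors in the memory (still $n^{1+\Omega(1/\log\log n)}$).
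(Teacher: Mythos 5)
Your proposal is correct and takes essentially the same route as the paper: both treat \cite{Kapralov13} as a black box and match its internal statements (the cut bound, the near-perfect matching with the planted part $M_0$, and the ``recovery'' bound on retained crossing edges, i.e.\ Lemma~13, Claim~12 and Lemma~14 there) to clauses~1--3 of Definition~\ref{def:natural} with $\alpha=1/e$. The only caveat is cosmetic: your description of the construction as a single Ruzsa--Szemer\'edi graph with one planted induced matching is really the $3/2$-hard instance of \cite{Goel12}, whereas the $1/e$ here comes from the $k$-level recursive construction of \cite{Kapralov13} (the $(1-1/k)^{-k}\to e$ factor), but since you only quote the result as a black box this does not affect the argument, and the probability-boosting step you worry about is unnecessary (Lemma~14's expectation bound plus Markov already gives $1-o(1)$).
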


\begin{proof}[Sketch of proof]
Specifically, point $(1)$ of our definition follows from \cite[Lemma $13$]{Kapralov13},  point $(2)$ follows from \cite[Claim $12$]{Kapralov13},
and point $(3)$ follows from \cite[Lemma $14$]{Kapralov13}. 
\end{proof}

\subsection{Lifting hard distributions  to streaming communication protocols}



Let $\mu$ be an $(\alpha,n, m,\eta)$-hard distribution for streaming algorithms. 
We will show how to extend it to a distribution $\mu_2$ for the two party version of the approximate matching problem. At a high level, we give the players two copies of the same graph $G$ randomly chosen according to $\mu$,
but embedded into two different but overlapping subsets of vertices (see Figure \ref{fig:double2}). 
The non-overlapping parts correspond to edges from $E(\cutl\cap P \times \cutr\cap Q)$ (see Definition~\ref{def:natural}),
and are therefore hard to maintain, but necessary to setup a large matching.


From now on, identify $P$  with the set $[n]$.
Set $\beta := 1+\alpha$.  
Our labelings are defined over vertex set $P' \times Q' := [\beta n] \times [\beta n]$,
and are encoded by injections from $[n]$ to $[\beta n]$
 (where for simplicity $\beta n$ is understood as an integer).
Given a hard distribution $\mu$, we define a distribution $\mu_2$ as follows.
\begin{definition}[The distribution $\mu_2$]\label{def:mu2}
Let $\mu$ be a hard distribution, where $P$ and $Q$ are identified with $[n]$.
Then sampling  a bipartite graph over vertex set $P'\times Q'= [\beta n] \times [\beta n]$ from $\mu_2$ is defined as follows
\begin{compactitem}
\item Sample $G \sim \mu$. Let $(\cutl,\cutr)$ and $P_0,Q_0$ be the corresponding cut and sets from Definition~\ref{def:natural}.
\item Sample $\sigma$, $\tau$ uniformly at random such that
 $\sigma(P_0)\cup\tau(P_0)=\emptyset=\sigma(Q_0)\cup\tau(Q_0)$ are disjoint, and  $\sigma,\tau$ are equal on $P \setminus P_0$.  Such injections $\sigma,\tau$ are called \emph{$G$-compatible}.
 \end{compactitem} 
In addition, define $G_\sigma := (\sigma(P), \sigma(Q),E_\sigma)$, where 
$E_{\sigma} = \{ \left( \sigma(u), \sigma(v)\right) \mid (u,v) \in E\}$, and $G_\tau$ similarly.
Alice is given $G_{\sigma}$ and Bob is given $G_{\tau}$ with the same order as under $\mu$.
\end{definition}

In this construction, observe that edges sent to Alice and Bob may overlap.
In fact the distribution can be tweaked to make edges disjoint using a simple gadget, while preserving the same lower bound.
We present this in Appendix~\ref{sec:disjoint}. We can now state our main result for the reduction.

\begin{theorem}[Generic reduction]\label{thm:reduction}
If there exists an $(\alpha, n, m, \eta)$-hard distribution for approximate matching, then any protocol in the one-way SC model whose approximation ratio is
$\frac{1-\alpha}{1+\alpha} - O(\eta)$
has to use $\Omega(m)$ memory.
\end{theorem}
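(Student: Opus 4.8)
The plan is to argue by contradiction: suppose $\Pi$ is a one-way SC protocol using $o(m)$ memory and achieving approximation ratio $\frac{1-\alpha}{1+\alpha} - O(\eta)$ on all graphs, hence in particular in expectation over $\mu_2$. First I would observe that a one-way SC protocol on $\mu_2$ consists of: Alice running a streaming algorithm on $G_\sigma$ (a relabeling of $G$) with $S = o(m)$ memory, sending a single message of size $\le S+1$ (by Proposition~\ref{proposition:size}) to Bob, who then runs a streaming algorithm on $G_\tau$ and outputs a matching $E^*$ of $G_\sigma \cup G_\tau$. The key structural point is that Alice's run on $G_\sigma$ \emph{is} a streaming algorithm on a copy of $G$ with $o(m)$ memory, so point~(3) of Definition~\ref{def:natural} applies to her: the portion of any matching she could certify inside $E((\cutl\cap P)\times(\cutr\cap Q))$ (transported through $\sigma$) has size $o(n)$ with probability $1-o(1)$. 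The same holds symmetrically for Bob on $G_\tau$.

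**Upper bound on the output matching.** Next I would bound $|E^*|$ from above. Since $\sigma,\tau$ agree on $P\setminus P_0$ and $Q\setminus Q_0$, the combined vertex set $\sigma(P)\cup\tau(P)$ inside $P'$ has size at most $n + |P_0| \le (1+\alpha)n = \beta n$, and likewise on the $Q$ side; the ``extra'' vertices $\tau(P_0)\setminus\sigma(P_0)$ and $\sigma(P_0)\setminus\tau(P_0)$ are precisely where the two embedded copies diverge. I want to produce a cut in the combined graph $G_\sigma\cup G_\tau$ witnessing $|E^*| \le$ (small). Using the cut $(\cutl,\cutr)$ of $\mu$ pulled back through $\sigma$ and through $\tau$ and gluing them along the common part, the cross edges of $E^*$ split into: edges living in Alice's copy crossing $\sigma(\cutl\cap P)\times\sigma(\cutr\cap Q)$ — these are $o(n)$ by point~(3) applied to Alice; edges living in Bob's copy crossing the analogous set — $o(n)$ by point~(3) applied to Bob; and edges not crossing, which by point~(1) number at most $(1-\alpha+\eta)n$ per copy but, because the two copies share all of $P\setminus P_0$ and $Q\setminus Q_0$, the non-crossing contribution is bounded by roughly $(1-\alpha+\eta)n$ plus the sizes $|P_0|,|Q_0|\le \alpha n$ of the divergent parts. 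Carefully accounting via the flow/cut inequality $|M| \le |\cutl\cap Q| + |\cutr\cap P| + |E((\cutl\cap P)\times(\cutr\cap Q))|$ stated in the text, applied in the combined graph, gives $|E^*| \le (1-\alpha+\eta)n + \alpha n + o(n) = n + O(\eta n)$, i.e. $|E^*| \lesssim n$.

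**Lower bound on $OPT$.** Then I would lower-bound $OPT(G_\sigma\cup G_\tau)$. By point~(2), $G$ has a matching $M = M_0\cup M'$ of size $(1-\eta)n$ with $M_0$ supported on $P_0\times Q_0$ and $|V(M_0)\cap P|=|P_0|\ge(\alpha-\eta)n$. Transport $M'$ (supported outside $P_0,Q_0$, where $\sigma=\tau$) through $\sigma$; transport $M_0$ through $\sigma$; transport $M_0$ through $\tau$. Because $\sigma(P_0)$ and $\tau(P_0)$ are disjoint (and similarly on $Q$), $\sigma(M_0)$ and $\tau(M_0)$ use disjoint vertices, and both are vertex-disjoint from $\sigma(M')$ (since $M_0,M'$ are vertex-disjoint in $G$ and the embeddings agree off $P_0,Q_0$). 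Hence $\sigma(M')\cup\sigma(M_0)\cup\tau(M_0)$ is a valid matching in $G_\sigma\cup G_\tau$ of size $|M'| + 2|M_0| \ge (1-\eta)n + (\alpha-\eta)n = (1+\alpha)n - O(\eta n)$. So $OPT(G_\sigma\cup G_\tau) \ge (1+\alpha - O(\eta))n$.

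**Conclusion and the main obstacle.** Combining, the approximation ratio of $\Pi$ on $\mu_2$ is at least $\frac{(1+\alpha)n - O(\eta n)}{n + O(\eta n)} \ge \frac{1+\alpha}{1} - O(\eta)$ — wait, I need the ratio the \emph{right} way: $OPT/|E^*| \ge \frac{(1+\alpha)n - O(\eta n)}{n+O(\eta n)}$, which after simplification and re-inverting to match the theorem's normalization $\frac{1-\alpha}{1+\alpha}$ requires being careful about whether the hard distribution's guarantee on $|E^*|$ is an \emph{upper} bound of $n(1-\alpha)+o(n)$ rather than $n$; indeed revisiting point~(1), the \emph{non-crossing} part of $E^*$ in a single copy is $\le(1-\alpha+\eta)n$, and the two copies' extra vertices only add to the \emph{crossing} capacity, which point~(3) kills — so in fact $|E^*|\le(1-\alpha)n + O(\eta n)+o(n)$, giving ratio $\ge \frac{1+\alpha}{1-\alpha} - O(\eta)$, and thus no protocol can beat $\frac{1-\alpha}{1+\alpha}$... this is the inverse; the theorem states the ratio as $\ge 1$ quantity $\frac{1-\alpha}{1+\alpha}$ must actually read $\frac{1+\alpha}{1-\alpha}$, or equivalently I have OPT and output swapped — the point is the contradiction with a ratio strictly below the claimed threshold. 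The main obstacle is exactly this bookkeeping: correctly splitting $E^*$'s cross-edges between ``Alice's private region'', ``Bob's private region'', and ``shared region'', and verifying that point~(3) can be invoked \emph{separately} for Alice and for Bob even though they share randomness and Bob sees Alice's message — this requires noting that Alice's transcript has only $o(m)$ bits so conditioning on it changes Bob's effective memory by only $o(m)$, and that a relabeling of a hard instance is still hard. Once that reduction of ``two-party matching on $\mu_2$'' to ``two independent hard streaming instances glued on a shared core'' is justified, the counting is routine.
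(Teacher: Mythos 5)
Your proposal is correct and follows essentially the same route as the paper: you build the combined cut $\cutl'=\sigma(\cutl)\cup\tau(P_0)$, $\cutr'=\sigma(\cutr)\cup\tau(Q_0)$, bound the vertex terms by point (1), kill the cross edges by applying point (3) separately to Alice's and Bob's marginals (which, as in the paper's Lemma~\ref{lem:union}, are just uniformly relabeled $\mu$-instances, so each player alone yields an $o(m)$-memory streaming algorithm for $\mu$ — note that in the one-way model Bob's stream is processed before Alice's message even arrives), and lower-bound $OPT$ by $\sigma(M)\cup\tau(M_0)$, giving $|E^*|\le(1-\alpha+O(\eta))n+o(n)$ against $OPT\ge(1+\alpha-O(\eta))n$. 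Your hesitation over $\frac{1-\alpha}{1+\alpha}$ versus $\frac{1+\alpha}{1-\alpha}$ is only a normalization issue that the paper itself shares (the theorem statement and the last line of its proof use reciprocal conventions), and your self-corrected edge count matches the paper's.
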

\begin{proof} 
Define a cut for the two player instance as 
$\cutl' := \sigma(\cutl)\cup \tau(P_0)$,
$\cutr' := \sigma(\cutr) \cup \tau(Q_0)$.
 It follows from the max-flow min-cut argument that any matching $E^{*}$ has size at most
$\lvert E^* \rvert \leq \lvert \cutr' \cap Q' \rvert + \lvert \cutl' \cap P'\rvert +
 \lvert E^* \cap E(\cutl'\cap P' \times \cutr'\cap Q') \rvert$.

Moreover note that by construction $\lvert \cutr' \cap Q' \rvert =  \lvert \sigma(\cutr )\cap Q' \rvert$.
Indeed $\tau(P_0) \cap Q' = \emptyset$.
Then we can write $
 \lvert \sigma(\cutr )\cap Q' \rvert = \lvert W \cap Q\rvert$ and similarly for $ \lvert \cutl' \cap P'\rvert $
(see also Figure \ref{fig:double2}). It follows that
$$ \lvert \cutr '\cap Q' \rvert + \lvert \cutr' \cap P'\rvert
=  \lvert \cutr \cap Q \rvert + \lvert \cutr \cap P\rvert 
 \leq \left(1-\alpha+\eta\right)n.
$$

The set of edges the protocol outputs is
included in $E^*_A \cup E^*_B$ by definition. 
Under the high probability event that these
sets only have an overlap of $o(n)$ with
the ``important edges"
$E(\cutl'\cap P \times \cutr'\cap Q)$
(see Lemma \ref{lem:union} below) then, if $E^*$ is the output matching by a protocol using $o(m)$ memory, then using Lemma \ref{lem:union} the matching $E^* \subseteq E^*_A \cup E^*_B$ is of size at most $\left(1-\alpha + \eta\right)n +o(n) \leq \left(1-\alpha +2\eta\right)n$ (for large enough $n$).

On the other hand, there is a matching between $\sigma(P)$ and $\sigma(Q)$ of size $(1- \eta)n$ and a matching of size $(\alpha - \eta) n$ between $\tau(P_0)$ and $\tau(Q_0)$
(using Point $(2)$ in the definition of a hard distribution for approximate matching, Definition \ref{def:natural}). 
We identified $\sigma(P) \sqcup \tau(P_0)$ with $[\beta n]$ and hence under  $\mu_2$ there is a matching of size $(\alpha-\eta)n + (1-\eta)n=\beta n-2\eta n$. This shows that the approximation ratio of a protocol
using $o(m)$ memory is at most
$\frac{\beta n -2 \eta n}{\left(1-\alpha +2\eta\right)} =\frac{1+\alpha}{1-\alpha} -\bigO(\eta)$.
\end{proof}

\begin{lemma}\label{lem:union}
Let $\algo$ be a protocol for the two party case, i.e. a pair of algorithms for Alice and Bob $\algo = (\algo_A, \algo_B)$. Let $E^{*}_A$ (resp. $E^{*}_B$) denote the edges $\algo_A$ (resp. $\algo_B$) outputs, assuming  only $o(m)$ memory is used.
With probability $1-o(1)$ over the choice of  $\sigma, \tau, G$, or alternatively with probability $1-o(1)$ under $\mu_2$, it holds that
$$\lvert E^*_A \cap E(\cutl'\cap P' \times \cutr'\cap Q')  \rvert  = o(n),
\quad
\text{and similarly}
\quad
\lvert E^*_B \cap E(\cutl'\cap P' \times \cutr'\cap Q')  \rvert = o(n).$$
\end{lemma}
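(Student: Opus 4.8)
\textbf{Proof plan for Lemma~\ref{lem:union}.}

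The plan is to reduce the two-party statement back to the single-stream guarantee provided by point $(3)$ of Definition~\ref{def:natural}, using the fact that $G_\sigma$ (resp. $G_\tau$) is an \emph{isomorphic copy} of a graph $G\sim\mu$, and that from $\algo_A$'s (resp. $\algo_B$'s) point of view the rest of the input is irrelevant noise. First I would fix attention on Alice's side (Bob's is symmetric, swapping $\sigma$ for $\tau$). The key observation is that, conditioned on $G$ and on $\sigma$, Alice's entire input is exactly the edge stream of $G_\sigma$, which is the stream of $G$ relabelled through the injection $\sigma$; since $\sigma$ is a bijection from $P\sqcup Q$ onto its image, running $\algo_A$ on $G_\sigma$ is the same (up to the fixed relabelling $\sigma$) as running a streaming algorithm on $G$ itself. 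So $\algo_A$, with $o(m)$ memory, induces a genuine streaming algorithm $\algo'$ for $G$ with $o(m)$ memory whose output is $\sigma^{-1}(E^*_A)$.

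Next I would match up the cuts. By construction $\cutl' = \sigma(\cutl)\cup\tau(P_0)$ and $\cutr' = \sigma(\cutr)\cup\tau(Q_0)$. I need to argue that an edge of $E^*_A$ (which lives inside $\sigma(P)\times\sigma(Q)$) lies in $E(\cutl'\cap P'\times\cutr'\cap Q')$ if and only if its $\sigma$-preimage lies in $E(\cutl\cap P\times\cutr\cap Q)$. This is where I would spend a little care: since $\sigma$ is $G$-compatible, $\sigma(P)\cap\tau(P_0)$ consists only of $\tau(P_0)=\sigma(P_0)\cap\tau(P_0)$ — wait, more precisely $\sigma$ and $\tau$ agree off $P_0$ and their images of $P_0$ (resp. $Q_0$) are disjoint, so $\sigma(P)\cap\cutl' = \sigma(\cutl)$ exactly, and likewise $\sigma(Q)\cap\cutr'=\sigma(\cutr)$. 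Hence for an edge $e\subseteq\sigma(P)\times\sigma(Q)$ we have $e\in E(\cutl'\cap P'\times\cutr'\cap Q')$ iff $e\in E(\sigma(\cutl\cap P)\times\sigma(\cutr\cap Q))$ iff $\sigma^{-1}(e)\in E(\cutl\cap P\times\cutr\cap Q)$. Therefore $|E^*_A\cap E(\cutl'\cap P'\times\cutr'\cap Q')| = |\sigma^{-1}(E^*_A)\cap E(\cutl\cap P\times\cutr\cap Q)|$, and point $(3)$ of Definition~\ref{def:natural} applied to $\algo'$ on $G\sim\mu$ says this is $o(n)$ with probability $1-o(1)$. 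Averaging over the choice of $\sigma$ (which does not affect the distribution of $G$, nor the validity of the single-stream bound, since $\sigma$ is just a relabelling known to Alice) preserves the $1-o(1)$ probability; the same argument with $\tau$ handles $E^*_B$, and a union bound over the two $o(1)$ failure events gives the joint statement. The final sentence translating "probability over $\sigma,\tau,G$" into "probability under $\mu_2$" is immediate since $\mu_2$ is by definition the law of $(G,\sigma,\tau)$ pushed to the pair $(G_\sigma,G_\tau)$.

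The main obstacle I anticipate is purely bookkeeping rather than conceptual: one must be scrupulous that $\algo_A$ sees \emph{only} $G_\sigma$ and no information about $\tau$ or about Bob's relabelling — in the one-way model this is automatic since Alice speaks only at the end — and that the memory bound "$o(m)$" for the two-party protocol indeed yields a genuine $o(m)$-memory single-stream algorithm, i.e. that the reduction does not secretly blow up the space. Since $\sigma$ is a fixed injection that can be hard-coded into the transition function (its complexity is not counted, as remarked after Definition of SC protocols), this costs nothing. A secondary subtlety is that point $(3)$ of Definition~\ref{def:natural} is stated for algorithms that output $E^*\subseteq E$; here $\sigma^{-1}(E^*_A)\subseteq E$ holds precisely because $E^*_A\subseteq E_\sigma$ and $\sigma$ is injective, so the hypothesis is met.
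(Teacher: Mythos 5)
Your proposal is correct and follows essentially the same route as the paper's proof: reduce to a single-stream algorithm for $\mu$ by hard-coding the relabelling $\sigma$, use the fact that $\sigma$ is uniform and independent of $G$ (so Alice's marginal under $\mu_2$ is exactly $G_\sigma$ with $G\sim\mu$), and invoke point $(3)$ of Definition~\ref{def:natural} on $\sigma^{-1}(E^*_A)$, symmetrically for Bob. Your explicit verification that the primed cut restricted to $\sigma(P)\times\sigma(Q)$ corresponds exactly to the original cut under $\sigma$ is a detail the paper leaves implicit, but it matches the intended argument.
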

\begin{proof}
The proof consists in building from $\algo_A$, and similarly $\algo_B$, a streaming algorithm for $\mu$.
Indeed, for inputs over vertices $[n]$ distributed according to $\mu$, simply pick a random $\sigma$
apply it to the input, 
 and run $\algo_A$ on the graph $G_{\sigma}$. Then output $E_{0}^* := \sigma^{-1}(E^*_{A})$.  

First observe that the distribution of $\sigma$ and the distribution of $G$ are independent. Therefore, conditioned on $G$, the injection $\sigma$ is uniform.
It follows 
that $\mu_2$'s first marginal is also the distribution of $G_\sigma$, where $G \sim \mu$ and $\sigma$ is uniform and independent.

 
Then, using Definition~\ref{def:natural}, with  probability $1-o(1)$ over the choice of $\sigma$ and $G \sim \mu$, we obtain that 
$\lvert E^*_A \cap  E(\cutl'\cap P' \times \cutr'\cap Q')\rvert =\lvert E^*_0 \cap  E(\cutl\cap P \times \cutr\cap Q)\rvert =o(n)$,
which concludes the proof.
\end{proof}

\begin{figure}[!ht]
    \centering
    \includegraphics[width=0.3\textwidth]{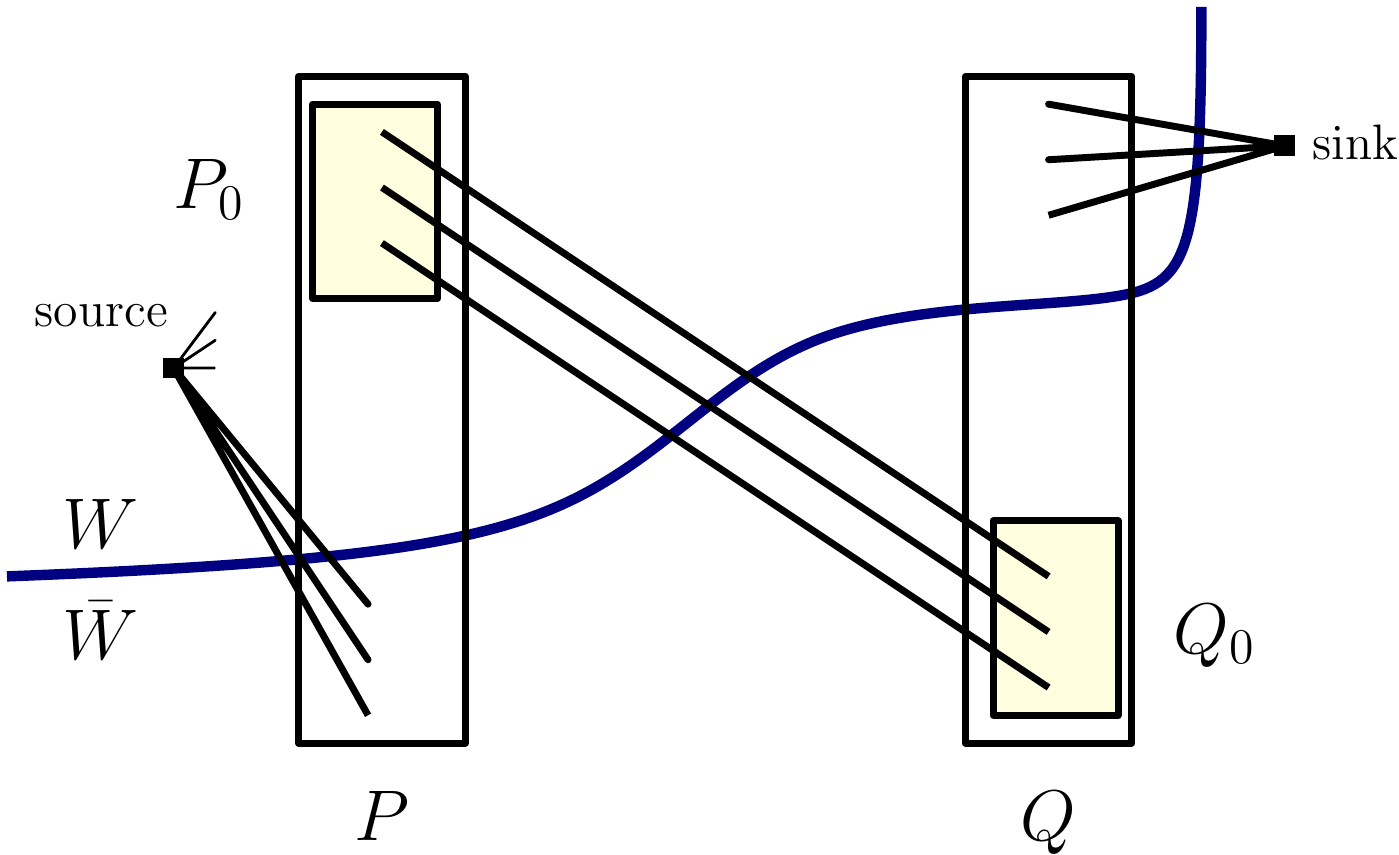}
        \caption{The figure shows how the maxflow mincut theorem is used to argue about the size of a matching in a bipartite graph $G$ over vertex set $P\times Q$ drawn from a distribution $\mu$. Only edges from the cut are drawn. The source and sink are added for the sake of the argument, they are not part of  $G$.} \label{fig:septree} \label{fig:maxflow}
\end{figure}

\begin{figure}[!ht]
    \centering
    \includegraphics[width=0.3\textwidth]{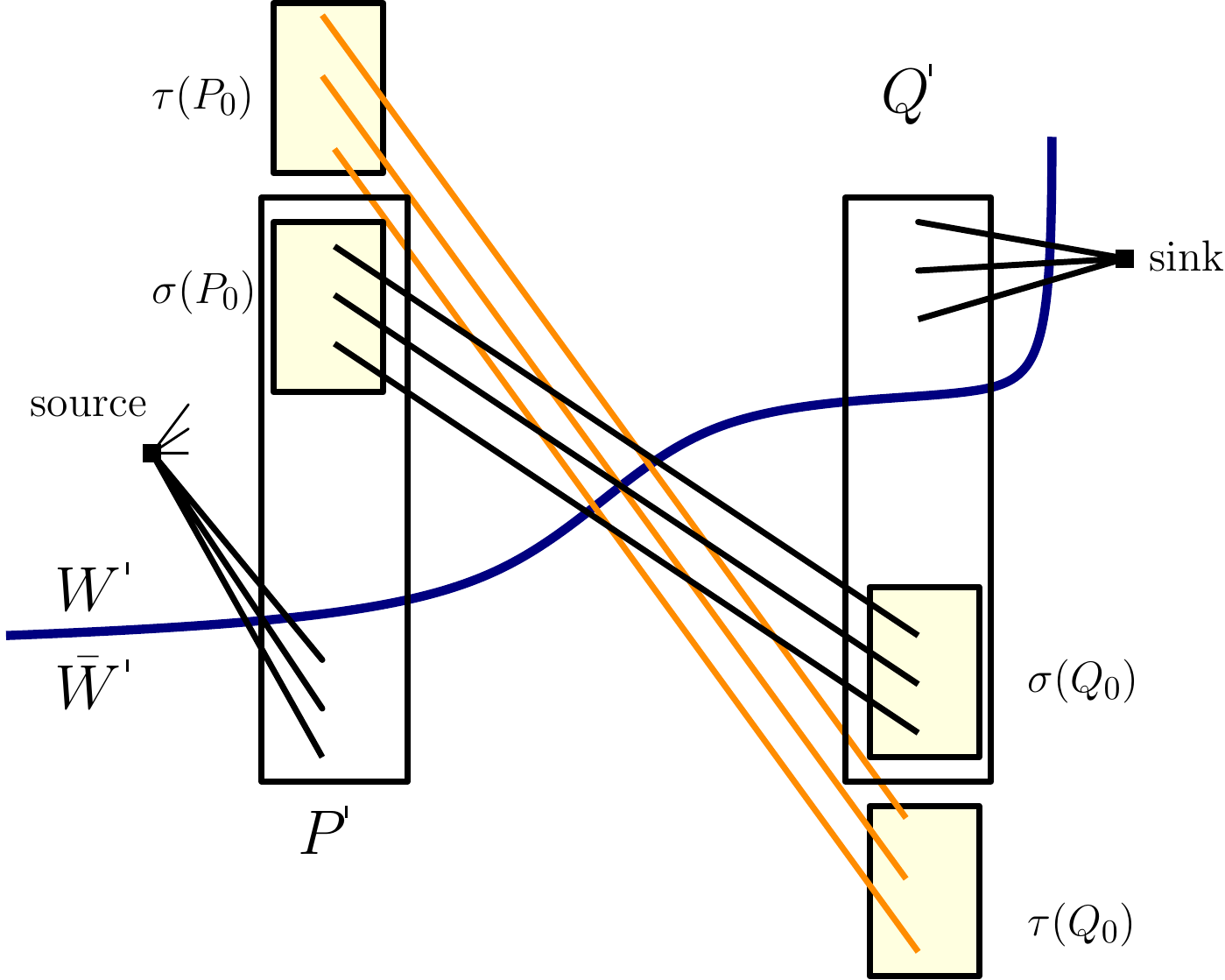}
        \caption{The construction of $\mu_2$.} \label{fig:septree} \label{fig:double2}
\end{figure}

\bibliographystyle{plain}
\bibliography{refs}

\appendix


\section{Missing proofs from Section~\ref{sec:generalities}}

\subsection{Proof of Proposition~\ref{prop:clocksimth} and Proposition~\ref{summem}}
\begin{proof}[Proof of Proposition~\ref{prop:clocksimth}]
The following protocol $\Pi'$ is a simple simulation of the protocol $\Pi$ in the usual communication model. 
We assume that both players know when the protocol has ended.
Both players update a variable named $t$ supposed to emulate time. Initially,  $t$ is set to $t:=0$.
\begin{compactenum}
 \item If Alice and Bob know that the protocol $\Pi$ has ended, then they output as in $\Pi$ and exit.
 \item Otherwise, Alice and Bob both declare at which time after $t$, say $t^A,t^B$ they would send a nonempty message in $\Pi$ given
 the current transcript and time $t$.
 \item Whoever has the minimum of $(t^A,t^B)$, sends the message in $\Pi$ corresponding to this time. 
 If $t^A=t^B$, then both send their message.
 \item They update $t= \min (t^A, t^B)$ and go to $1$.
\end{compactenum}
The protocol $\Pi'$ outputs exactly as in $\Pi$ and we have $C(\Pi') \leq R \times 2\log n + B$.
The factor  $2$ in the expression $2\log n$ comes from both Alice and Bob sending a time at step $(2)$ in $\Pi'$.
We conclude by noticing that $R \leq T$ and $T \leq R \cdot (S+1)$ (by Proposition \ref{proposition:size}).
\end{proof}

\begin{proof}[Proof of Proposition \ref{summem}]
The streaming problem associated to $f$ can be solved, using an
$SC$ protocol and only the memory of the players needs to be written down.
The input to this problem is $x$ and $y$ interleaved, 
by definition of the SC
model. 
The algorithm does not need to store the messages $m^{A/B}_i$. It simply computes them on the fly
and does the updates of $\stt^A,\stt^B$. Note $2S$ bits suffice to write $
(\stt^A,\stt^B)$.
\end{proof}

\subsection{A function that is easy in Streaming and CC but not for Streaming Communication protocols - proof of Theorem \ref{thm:example2}}\label{sec:int_vs_std}


%

Rather than exhibiting a function $f$ with the desired properties, in the proof, we consider languages. This is without loss of generality as we can freely move from languages to functions by considering the indicator of the language (for a fixed size of input, $n$).

Let DYCK($2$)
be the language consisting of all well parenthesized words with two different types of parenthesis.
We will restrict our attention to $L:=  DYCK(2)_3\subseteq DYCK(2)$, which consists of words with 
only $3$ alternations. Namely, $x \in L$ iff $x$ can be written in the form $x=u_1v_1u_2v_2u_3v_3$ and the $u_i$'s
are only made of opening
parenthesis whereas the $v_i$'s are only made of closing parenthesis.

Using linear hashing, an algorithm for $L$ is known in the standard streaming model. It 
only needs $\bigO(\log n)$ memory (see \cite{MagMN}) so that $\str(L) = \bigO(\log n)$. However,
if the word $x$ is delivered in an interleaved fashion, much more memory is needed.

We now show that $\intstr(L) =  \Omega(n)$.
 The proof follows by providing a direct reduction to a generalization of the INDEX problem from communication complexity.
 \begin{definition}
 In the INDEX problem, Alice is given a string $u \in \{0,1\}^n$, and Bob is given $u_1,\ldots,u_k,b\in\{0,1\}^{k+1}$
 for some $k$ which varies. The goal is for Bob to output $b=x_{k+1}$
 \end{definition}
 It is known that any one-way protocol where only Alice is allowed to speak, requires at least $n$ bits
 \eq{ 
 R^{one-way}(INDEX) = n.
 } 
 We use the notation $\overline{u}$ to denote the same word as $u$ but with all open parenthesis replaced
 by closing ones and viceversa.
 
 Let $n_1 \in \NN$ and set $n := 3n_1+2$.
 We will restrict inputs to the following set
 \eq{
 A_n = \{ x \in \{0,1\}^n, x=uvb\overline{b}\overline{v}w\overline{w}\overline{u}w'\overline{w'} \mid  b \in \{0, 1\}, u \in \{0,1 \}^{n_1}, v=u_1,\ldots,u_k, k\leq[n_1]\}
 }
\begin{figure}[h]\label{fig_dyck}
\centering{
\includegraphics[scale=0.5]{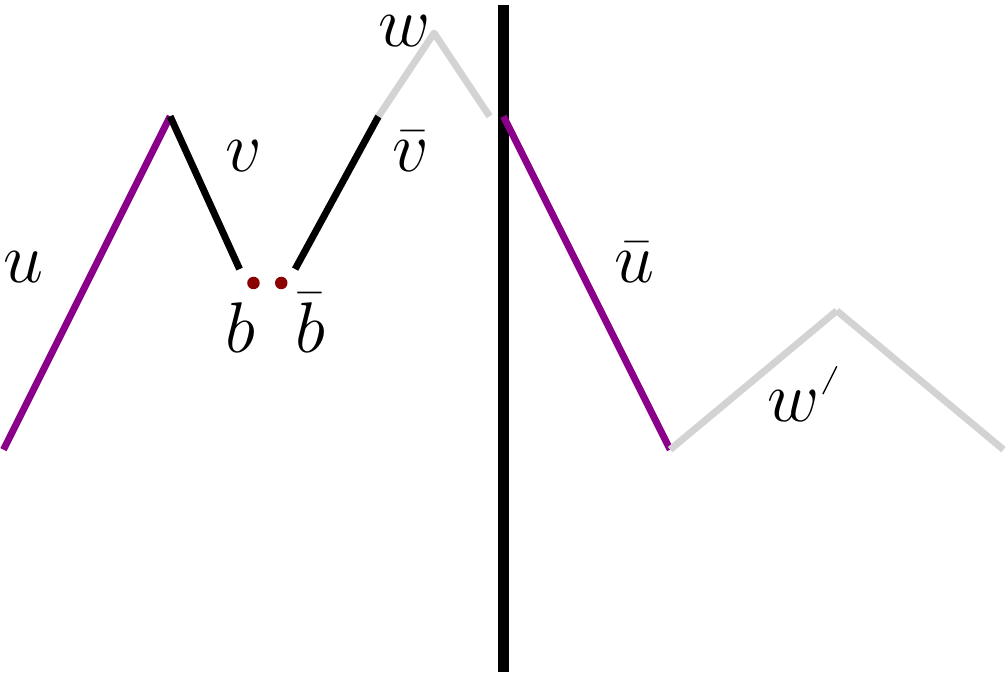}}
\caption{A word from the class $A_n$. It is represented according to the conventions of \cite{MagMN}.}
\end{figure}

 The words $w,w'$ are fixed and made of zeros only. This corresponds to a padding. By
 setting the length of $w, w'$ such that $|u| + 2(|v| + 1 + |w|) = |u| + 2|w|' = 3n_1+2$,
 we enforce that all words in $A_n$ have the same length $n:= 3n_1+2$.
Also, note that the middle of the word $x=uvb\overline{b}\overline{v}w\overline{w}\overline{u}w'\overline{w'}$ is exactly after $\overline{w}$.
 
 We easily see that $x=uvb\overline{b}\overline{v}w\overline{w}\overline{u}w'\overline{w'} \in A$ is in $L$ iff $b=x_{k+1}$.
 In the interleaved problem, the first part of the stream corresponds to $(u,\overline{u})$. The second part
 corresponds to  $(vb\overline{b}\overline{v}w\overline{w}, w'\overline{w'})$.
 The core lemma is
 \begin{lemma}
 Assume there is an interleaved streaming algorithm that recognizes $A$, using
 $m$ bits. Then, there is a one-way protocol to solve INDEX$_{n_1}$ with $m$ bits,
 hence $m \geq n_1$.
 \end{lemma}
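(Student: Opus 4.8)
The plan is a reduction from the one-way communication complexity of (generalized) $INDEX$. Suppose $\mathcal{A}$ is an interleaved streaming algorithm that, on the interleaved presentation of any $x\in A_n$, correctly decides whether $x\in L$, using a memory of $m$ bits; assume for concreteness $\mathcal{A}$ is deterministic, the randomized case being identical with the randomized $INDEX$ bound. I will build from $\mathcal{A}$ a one-way protocol (Alice speaks, Bob decides) for $INDEX_{n_1}$ whose transcript has length $m$; since any such protocol needs $\ge n_1$ bits, this forces $m\ge n_1$. Combined with the $O(\log n)$-memory algorithm for $L$ on \emph{concatenated} streams from \cite{MagMN} and $n=3n_1+2$, this yields $\intstr(L)=\Omega(n)$ and Theorem~\ref{thm:example2}.

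I would first record the two structural facts that make the reduction go through. (i) The paddings $w,w'$ are chosen (as a function of $k$) so that the two halves $\mathbf{X}:=u\,v\,b\,\overline{b}\,\overline{v}\,w\,\overline{w}$ and $\mathbf{Y}:=\overline{u}\,w'\,\overline{w'}$ of $x=uvb\overline{b}\overline{v}w\overline{w}\overline{u}w'\overline{w'}$ have equal length. Hence the interleaved stream $\mathbf{X}_1\mathbf{Y}_1\mathbf{X}_2\mathbf{Y}_2\cdots$ splits at position $2n_1$ into a first block, the interleaving of $u$ with $\overline{u}$, which is a function of $u$ alone, followed by a second block, the interleaving of $v\,b\,\overline{b}\,\overline{v}\,w\,\overline{w}$ with $w'\,\overline{w'}$, which is a function of $(k,u_1\ldots u_k,b)$ alone. (ii) For every $x\in A_n$, $x\in L$ iff $b=u_{k+1}$: within the $3$-alternation class $L$ the nesting and the parenthesis types of $x$ can all be made consistent only when the bit $b$ equals the $(k+1)$-st symbol of $u$.

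Given (i) and (ii) the protocol is immediate. On input $u$, Alice forms the first block (she knows $u$, hence $\overline{u}$), runs $\mathcal{A}$ on it, and sends Bob the resulting memory configuration, which is $m$ bits. On input $(k,u_1\ldots u_k,b)$, Bob forms the second block (he knows $v=u_1\ldots u_k$, $b$, and the lengths of $w,w'$), resumes $\mathcal{A}$ from the received configuration --- legitimate because a transition of $\mathcal{A}$ depends only on its current state and the next stream symbol --- reads whether $\mathcal{A}$ accepts, and outputs ``$b=u_{k+1}$'' accordingly. Correctness is exactly fact (ii), and the communication is $m$ bits, so $R^{one-way}(INDEX_{n_1})\le m$ and $m\ge n_1$.

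The main obstacle is the combinatorial claim (ii): checking carefully that the $DYCK(2)$ gadget really encodes the index test, and in particular that no value of $b\neq u_{k+1}$ accidentally yields a word in $L$. The remaining points are routine: fact (i) is arithmetic on $|w|,|w'|$; the resumption of a streaming computation from a shipped state is standard; and the hardness of the generalized $INDEX$ (where Bob also holds $u_1\ldots u_k$) reduces to the usual one by pigeonhole --- if Alice's message has $\le n_1-1$ bits, two inputs $u\neq u'$ share a message, and taking $k+1$ to be the first coordinate on which they differ (so $u_{\le k}=u'_{\le k}$) makes Bob unable to decide ``$b=u_{k+1}$''.
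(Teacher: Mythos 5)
Your proposal is correct and is essentially the paper's own argument: Alice simulates the interleaved algorithm on the block $(u,\overline{u})$, ships her $m$-bit memory state, Bob resumes on $(vb\overline{b}\overline{v}w\overline{w},\,w'\overline{w'})$ and answers as the algorithm would, with correctness resting on the same observation that $x\in L$ iff $b=u_{k+1}$ (which the paper also asserts without further proof). The extra remarks you add --- the equal-length split of the word at $\overline{w}$ and the pigeonhole reduction from the generalized INDEX to the standard bound --- are consistent with, and only flesh out, the paper's treatment.
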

 \begin{proof}
 We show how to derive such a protocol for INDEX.
 Let $u \in \{0,1\}^{n_1}$ be Alice input and $u_1,\ldots,u_k,b$ be Bob's input.
 Let $v= u_1,\ldots,u_k$.
 Alice pretends she is getting $(u,\overline{u})$ in the interleaved streaming problem.
She then sends the content of her memory to Bob, which pursues as if he was getting 
$(vb \overline{b}\overline{v} w\overline{w}, w'\overline{w'})$ and he answers as the streaming algorithm would.
The correctness follows from the remark that $x=uvb\overline{b}\overline{v}w\overline{w}\overline{u}w'{w}' \in A_n$ iff
$b=x_{k+1}$.
 \end{proof}
 Notice that, on the class of inputs $A_n$ we considered $n_1\sim \frac{n}{3}$.
 It follows from the lemma, that any algorithm for $DYCK(2)_3$ must use $\Omega(n)$ bits of memory. 


\subsection{A  function that is easy for streaming protocols but not in the interleaved variant - proof of Theorem \ref{thm:example}}\label{sec:gap_sc}
%

First note that the memory $S$ available to an $SC$ protocol should be $\geq \log n$, otherwise we know from \ref{lowerb} the problem is not solvable.
Let $X,Y',Y,X' \in \{0,1\}^{\frac{n}{2}}$. Let $f(XY',YX')$ be the function 
\eq{ 
f(XY',YX') := IP(X,Y) \bigwedge X=X' \bigwedge Y=Y'.
}
\begin{proposition}
 The randomized communication complexity of $f$ is $\bigO(\log n)$.
\end{proposition}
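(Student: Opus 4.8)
The plan is to give a randomized one-way (or simultaneous) protocol of cost $O(\log n)$ for $f(XY',YX') = IP(X,Y) \wedge (X=X') \wedge (Y=Y')$. The key observation is that although $IP_n$ has randomized communication complexity $\Omega(n)$ in the worst case, here the promise structure collapses it: if $X \neq X'$ or $Y \neq Y'$ then $f = 0$ regardless of the inner product, so the players only need to compute $IP(X,Y)$ on the ``diagonal'' where $X'=X$ and $Y'=Y$. On that diagonal, Alice holds both $X$ and $Y'$ and Bob holds both $Y$ and $X'$; in particular when $X=X'$ and $Y=Y'$, \emph{Alice already knows $X$ and Alice already knows $Y'=Y$}, so Alice can compute $IP(X,Y)$ herself with no communication at all. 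The entire task therefore reduces to the two equality checks $X \stackrel{?}{=} X'$ and $Y \stackrel{?}{=} Y'$.

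First I would set up the protocol: using shared public randomness, the players run the standard randomized equality protocol $EQ_{n/2}$ twice in parallel, once to test $X=X'$ (Alice has $X$, Bob has $X'$) and once to test $Y=Y'$ (Alice has $Y'$, Bob has $Y$). Each such test costs $O(\log(1/\epsilon))$ bits via fingerprinting (hash both strings with a random hash from a pairwise-independent family over a field of size $\mathrm{poly}(n)$ and compare), so with $\epsilon = 1/9$ constant this is $O(1)$ bits, or $O(\log n)$ bits if one wants error polynomially small. Second, Bob sends Alice the two fingerprints (or Alice sends hers to Bob); whoever collects both comparison outcomes, together with Alice's locally computed value $IP(X,Y)$, outputs the AND. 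Since in the one-way SC reduction only Bob outputs, I would instead have Alice send Bob the single bit $IP(X,Y)$ together with her fingerprints of $X$ and $Y'$, and Bob outputs $IP(X,Y) \wedge [h(X)=h(X')] \wedge [h(Y')=h(Y)]$.

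For correctness: if $f = 1$ then $X=X'$, $Y=Y'$, and $IP(X,Y)=1$, so all three conjuncts are computed correctly except for the $O(\epsilon)$ probability of a fingerprint collision; if $f=0$ then either one of the equalities genuinely fails — caught except with collision probability $O(\epsilon)$ — or both equalities hold and $IP(X,Y)=0$, in which case Alice's locally computed bit is $0$ and Bob outputs $0$ with certainty. A union bound over the two fingerprint tests gives total error $O(\epsilon) \le 1/3$. The total communication is the two fingerprints plus one bit, i.e. $O(\log n)$, establishing $C(f) = O(\log n)$.

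I do not expect a genuine obstacle here; the only thing to be careful about is making sure the inner product is computed on the \emph{correct} pair of strings — namely that Alice's input block labelled $Y'$ is the one that equals $Y$ under the promise, so that Alice's local computation of $IP(X, Y')$ agrees with $IP(X,Y)$ exactly on the diagonal — and that the fingerprint field size is $\mathrm{poly}(n)$ so that a single hash has collision probability $O(1/\mathrm{poly}(n))$ while still fitting in $O(\log n)$ bits. One should also note that this $O(\log n)$ bound is for the standard CC model and is exactly what is needed to contrast with the $\Omega(n)$ lower bound for $\intstr(f)$ (which would be argued separately, presumably by a reduction showing that an interleaved streaming algorithm must essentially remember $X$ to later verify $X=X'$ and compute the inner product).
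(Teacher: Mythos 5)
Your proposal is correct and follows essentially the same route as the paper's proof: reduce $f$ to the two equality tests $X\stackrel{?}{=}X'$ and $Y\stackrel{?}{=}Y'$, handled by randomized fingerprinting in $O(\log n)$ bits, and observe that under the promise $X=X'$, $Y=Y'$ the inner product can be computed locally (Alice holds $X$ and $Y'=Y$), so no further communication is needed. Your extra care about computing $IP(X,Y')$ rather than $IP(X,Y)$ and about the one-way form is a harmless refinement of the same argument.
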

\begin{proof}
A communication protocol to solve $f$, has Alice and Bob check the two equalities $X=X'$ and $Y=Y'$ using $\bigO(\log n)$ bits.
The inner product can then be evaluated without communicating, since both players know $X$ and $Y$.
\end{proof}
\begin{proposition}
 In the interleaved streaming model, which corresponds to the two streams combined coming as one, the problem can be solved using $\bigO( \log n)$ bits of memory
 \eq{
 \intstr(f) = \bigO(\log n).
 }
\end{proposition}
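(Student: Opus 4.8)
The plan is to exploit two features of the interleaved stream
$X_1,Y_1,X_2,Y_2,\ldots,X_{n/2},Y_{n/2},Y'_1,X'_1,Y'_2,X'_2,\ldots,Y'_{n/2},X'_{n/2}$.
First, in the initial half the bits $X_i$ and $Y_i$ arrive in consecutive positions, so $IP(X,Y)=\sum_i X_iY_i \bmod 2$ can be accumulated on the fly with $O(1)$ bits: keep the last $X_i$ read, and when the matching $Y_i$ arrives update a single parity bit. Second, the two tests $X=X'$ and $Y=Y'$ only compare strings whose coordinates are fed in increasing index order, so classical polynomial fingerprinting suffices.

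Concretely, from the public randomness I would draw a prime $p$ with $n^2\le p\le 2n^2$ and a uniform $r\in\ZZ_p$, both stored in $O(\log n)$ bits. Scanning the stream while maintaining a position counter, I keep the parity bit $I$ for $IP(X,Y)$ together with the four fingerprints $\phi_X=\sum_i X_i r^i \bmod p$ and $\phi_Y=\sum_i Y_i r^i \bmod p$ (built during the first half) and then $\phi_{X'}=\sum_i X'_i r^i \bmod p$ and $\phi_{Y'}=\sum_i Y'_i r^i \bmod p$ (built during the second half), each updated incrementally with one multiply--add per bit. At the end of the stream I output $I$ if $\phi_X=\phi_{X'}$ and $\phi_Y=\phi_{Y'}$, and output $0$ otherwise. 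Every stored value is $O(\log n)$ bits, hence the algorithm uses $O(\log n)$ memory, which gives $\intstr(f)=O(\log n)$.

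For correctness: if $X=X'$ and $Y=Y'$ the fingerprints agree and the algorithm outputs $I=IP(X,Y)=f(XY',YX')$; otherwise $f=0$ and the algorithm errs only on a fingerprint collision. But in that case $\sum_i (X_i-X'_i)r^i$ (and similarly the one built from $Y$ and $Y'$) is a nonzero polynomial of degree at most $n/2$ over $\ZZ_p$, so it vanishes at a uniform $r$ with probability at most $(n/2)/p\le 1/(2n)$; a union bound over the two tests gives total error $o(1)$, well below $1/3$. The only point needing care is checking that the interleaving order is benign in the two ways described above (adjacency of $X_i,Y_i$ for the inner product, and increasing index order for each fingerprint); the remainder is the textbook Rabin--Karp fingerprinting argument, so I do not anticipate a genuine obstacle here.
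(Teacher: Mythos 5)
Your proposal is correct and follows essentially the same route as the paper's (much terser) proof: accumulate the parity of $IP(X,Y)$ on the fly from the adjacent pairs $X_i,Y_i$ in the first half of the interleaved stream, and use $O(\log n)$-bit hashes (the paper says ``linear hash,'' you instantiate it with polynomial fingerprints) to check $X=X'$ and $Y=Y'$. Your version merely spells out the fingerprinting and error analysis that the paper leaves implicit, so there is nothing to fix.
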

\begin{proof}
A streaming protocol checks $IP(X,Y)$ bit by bit, on the fly as they arrive. It also produces a linear
hash of length $O(\log n)$ for $X$ and $Y$ so as to check $X=X'$ and $Y=Y'$.
\end{proof}
\begin{proposition}\label{proposition:lastex}
Any protocol in the $SC$ model that computes $f$ using $S$ bits of memory and $R$ communication rounds must have
 \eq{
 R \cdot S= \Omega(n).
 } 
\end{proposition}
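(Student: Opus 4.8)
The goal is to show that any SC protocol for $f(XY',YX') = IP(X,Y) \wedge (X=X') \wedge (Y=Y')$ with $S$ bits of memory and $R$ communication rounds satisfies $R\cdot S = \Omega(n)$. The natural strategy is to reduce the hard inner-product instance to $f$ in a way that makes Theorem~\ref{thm:genlem} (or the $IP_n$ corollary, Theorem~\ref{thm:corgenlem}) applicable. First I would recall the stream layout: Alice reads $XY'$ and Bob reads $YX'$, so at time step $i \le n/2$ Alice sees $X_i$ while Bob sees $Y_i$, and at step $n/2 + i$ Alice sees $Y'_i$ while Bob sees $X'_i$. The key observation is that the first half of the interleaved stream is exactly an instance of $IP_{n/2}$ presented to an SC protocol, with Alice holding $X$ and Bob holding $Y$; the second half is a ``consistency check'' that forces the protocol to actually have respected the semantics of $IP$ on genuine inputs rather than exploiting some promise.

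The plan is as follows. Suppose toward a contradiction that there is an SC protocol $\Pi$ for $f$ using $S$ memory and $R$ rounds with $R\cdot S = o(n)$. I would use $\Pi$ to build an SC protocol $\Pi'$ for $IP_{n/2}$: on input $(X,Y)$ with $X$ streamed to Alice and $Y$ streamed to Bob, run $\Pi$ on the stream where Alice feeds $X$ then a locally-generated copy $Y'$, and Bob feeds $Y$ then a locally-generated copy $X'$ --- but the catch is that Alice does not know $Y$ and Bob does not know $X$, so they cannot literally produce $Y'=Y$ and $X'=X$ on their own. To get around this I would instead observe that $\Pi$ restricted to the first $n/2$ time steps already operates as a legitimate partial SC computation on $(X,Y)$, and then have the two players run the remaining $n/2$ steps of $\Pi$ feeding $Y'=Y$ and $X'=X$ by having each player \emph{remember nothing extra} --- no, this still fails. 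The correct fix: run $\Pi$ twice, or better, note that to evaluate $f$ correctly on the sub-distribution where $X=X'$ and $Y=Y'$ (a sub-distribution on which $f = IP(X,Y)$), the protocol $\Pi$ must in particular be correct, and on that sub-distribution the second-half stream is determined by the first half. So I would consider the following reduction to the one-way / interleaved world: take a hard distribution $D$ for $IP_{n/2}$ (uniform inputs, per Theorem~\ref{thm:av_std}), and consider the induced distribution on $f$-inputs with $X'=X, Y'=Y$. An SC protocol for $f$ yields, via Proposition~\ref{prop:clocksimth}, a standard CC protocol of cost $O(R(S+\log n))$; but on this sub-distribution computing $f$ means computing $IP_{n/2}$, so by Theorem~\ref{thm:av_std} this cost is $\Omega(n)$, giving $R(S+\log n) = \Omega(n)$ and hence $R\cdot S = \Omega(n)$ since $S \ge \log n$.

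I expect the reduction step of ``simulating'' to be mostly bookkeeping once framed correctly; the only genuinely delicate point is ensuring that restricting attention to the sub-distribution $\{X=X', Y=Y'\}$ does not let the protocol cheat. Here the second-half consistency checks in the definition of $f$ are essential: if a protocol were only ever fed inputs with $X=X', Y=Y'$ it could in principle ignore the check, but $\Pi$ must be correct on \emph{all} inputs, including those where the checks fail, so $\Pi$ genuinely computes a function whose restriction to the diagonal is $IP_{n/2}$; combined with the fact that the first-half stream is a verbatim $IP_{n/2}$ interleaved stream, the CC lower bound transfers. An alternative and perhaps cleaner route is to invoke Theorem~\ref{thm:genlem} directly by exhibiting $f$ restricted to the diagonal as $\bigoplus$-decomposable into $IP_k$ gadgets with $k = \Theta(S)$, exactly as in the proof of Theorem~\ref{thm:corgenlem} for $IP_n$; the $X=X', Y=Y'$ clauses only make the function harder, so the lower bound for $IP_{n/2}$ applies a fortiori and yields $R\cdot S = \Omega(n)$. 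The main obstacle, and the place I would be most careful, is making the ``a fortiori'' precise: one must check that a protocol for $f$ indeed induces a protocol for $IP_{n/2}$ of no greater SC-cost, which follows because Alice and Bob can pad their streams with the required diagonal copies using zero extra memory (they simply re-stream bits they have already decided to stream), so no resource blowup occurs.
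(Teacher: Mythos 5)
There is a genuine gap at the crucial step, in both routes you sketch. Your main route converts the purported SC protocol for $f$ into a standard CC protocol via Proposition~\ref{prop:clocksimth} and then invokes Theorem~\ref{thm:av_std} on the sub-distribution where $X'=X$ and $Y'=Y$. But the CC protocol you obtain is a protocol for $f$ under $f$'s own input partition, where Alice holds all of $XY'$ and Bob holds all of $YX'$; on the diagonal this means each player knows \emph{both} $X$ and $Y$, so the restricted two-party problem is solvable with $O(\log n)$ (essentially $O(1)$) communication, and no $\Omega(n)$ bound applies -- indeed $C(f)=O(\log n)$ is exactly the point of Theorem~\ref{thm:example}, so no argument that merely compiles the SC protocol into a CC protocol for $f$ can yield the claim. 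The hardness of $IP$ is hardness relative to the partition in which Alice sees only $X$ and Bob only $Y$, and that partition is destroyed by the compilation. The ``cleaner'' alternative via Theorem~\ref{thm:genlem} fails for the same underlying reason: that theorem needs a blockwise decomposition over the unrestricted product domain, whereas on the diagonal Bob's second-half input $X'$ equals Alice's first-half input, and your claim that the players can ``pad their streams with the required diagonal copies using zero extra memory'' is exactly backwards -- the padding Alice's stream needs in the second half is $Y'=Y$, known only to Bob, and Bob's stream needs $X'=X$, known only to Alice. This is precisely the obstacle you correctly flagged at the start (``the catch is that Alice does not know $Y$\dots'') and then set aside without resolving.

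The missing idea is how to carry a simulation of $\Pi$ through the second half despite this, and it is the heart of the paper's proof: embed $IP$ instances of size $\Theta(S)$ into blocks consisting of matching intervals of the first and the second half, simulate $\Pi$ jointly on the first-half part of the block (cost about $R_{\ell}(S+\log n)$ by the clock simulation), and then \emph{swap} which real player drives which simulated player: Bob sends the simulated Bob's $S$-bit memory state to Alice, who continues its updates feeding $X'=X$ (which she knows), while Bob continues the simulated Alice's updates feeding $Y'=Y$ (which he knows); a further $O(S)$-bit exchange lets Bob recover the output, which on these inputs equals $IP$ of the block. Comparing the resulting cost $R_{\ell}(S+\log n)+O(S)$ with the $\Omega(\alpha S)$ bound for $IP_{\alpha S}$, with $\alpha$ a large enough constant, forces $\Omega(1)$ expected rounds per block, and summing over the $\Theta(n/S)$ blocks gives $R\cdot S=\Omega(n)$ (this is Lemma~\ref{lem:mainex}); the additive $O(S)$ cost of the memory swaps is the reason the embedding is done at scale $\Theta(S)$ rather than by a single appeal to a global communication bound on $f$.
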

The proof is very similar to the one of Theorem \ref{thm:genlem} (see Appendix B), so we omit some details and advise the reader to read that proof first. 

Inputs will be partitioned into blocks of equal size $\alpha S= \Omega(S)$.  The parameter $\alpha$ is a constant to be determined later.
 Previously
we refered to the 
$\ell$th block as indices in $[\ell \alpha S,(\ell+1) \alpha S]$. Now we call the $\ell$th block indices
in $ [\ell \alpha S,(\ell+1) \alpha S] \cup [\frac{n}{2} + \ell \alpha S, \frac{n}{2} + (\ell+1) \alpha S]$,
for reasons that will soon be clear. The parameter $n$ stands for the size of the input,
which corresponds to $|X| + |Y|$.

 The construction is by induction.
 More precisely, we argue that there must exist inputs $x,y$ such that some communication happens
 on each of the $\ell$ blocks \emph{on average over the randomness}, otherwise we obtain a contradiction 
 by deriving a protocol for $IP_{\alpha S}$ using less than $\alpha S$ bits.
 
Let us fix $\Pi$ a protocol for $f$ in the $SC$ model. 
 \begin{lemma}\label{lem:mainex}
For any given $x^{\leq \ell},y^{\leq \ell}$, the inputs on block $\ell+1$ denoted
$x^{\ell+1}, y^{\ell+1}$ can be chosen such that 
when running $\Pi$ on $(x^{\leq \ell+1}wy^{\leq \ell+1}w',y^{\leq \ell+1}w'x^{\leq \ell+1}w)$, 
  where $w$ is some arbitrary word,
  the expected number of rounds on block $\ell+1$ is at least $\Omega(1)$, if $\alpha$ is chosen big enough.
 \end{lemma}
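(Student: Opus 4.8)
The statement to prove is Lemma~\ref{lem:mainex}: for any prefix $x^{\le\ell},y^{\le\ell}$, we can choose the block-$(\ell+1)$ inputs so that running $\Pi$ forces $\Omega(1)$ expected communication rounds on that block, provided the block size $\alpha S$ is large enough. The plan is to argue by contradiction: if no such choice of $x^{\ell+1},y^{\ell+1}$ existed, then for \emph{every} pair of block-$(\ell+1)$ inputs the expected number of rounds during the time slots of block $\ell+1$ would be smaller than some small constant $c$. We will then turn $\Pi$, together with the fixed surrounding context, into a cheap one-way-ish communication protocol for $IP_{\alpha S}$ and derive a contradiction with Theorem~\ref{thm:av_std}.

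First I would fix the context. Since $x^{\le\ell},y^{\le\ell}$ are given, we hardcode them: Alice (resp.\ Bob) runs $\Pi$ on the first $\ell$ blocks using her (resp.\ his) part of $x^{\le\ell}$ (resp.\ $y^{\le\ell}$); note that because of the way blocks are defined, after reading block $i$ Alice has consumed the $X$-part of her stream $x^{\le\ell}$ and Bob the $Y$-part of $y^{\le\ell}$, so that when block $\ell+1$ begins, Alice is about to receive $x^{\ell+1}$ as a fresh chunk of her $X$-stream and Bob is about to receive $y^{\ell+1}$ as a fresh chunk of his $Y$-stream. Call $\sigma^A_\star,\sigma^B_\star$ the (random, depending on $r$) memory states of the two players at that moment; these are distributions determined by the fixed prefix and $r$. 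Now feed $IP_{\alpha S}$ inputs $a\in\{0,1\}^{\alpha S}$ to Alice and $b\in\{0,1\}^{\alpha S}$ to Bob as the block-$(\ell+1)$ input (the $Y',X'$-coordinates in this block are irrelevant for what happens inside the block, so set them to anything fixed; the equality checks $X=X'$, $Y=Y'$ only matter for the final output, which we do not use). Run $\Pi$ through the $\alpha S$ time slots of block $\ell+1$. The key point, exactly as in the proof of Theorem~\ref{thm:genlem}: the transcript produced during these time slots, together with $\sigma^A_\star$ and $\sigma^B_\star$ as "free" advice, is a legal communication protocol for $IP$ on $\alpha S$ bits (the players can simulate the whole block once they know each other's messages, since $\Phi,\Psi$ are public, and the outer context is fixed); and if some $a,b$ already force $\Omega(1)$ expected rounds we are done, so assume \emph{all} $a,b$ give $\le c$ expected rounds. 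Each round carries a message of $\le S+1$ bits (Proposition~\ref{proposition:size}), so the expected communication of this induced protocol is $\le c(S+1)$; but wait — this is not yet smaller than $\alpha S$ if $\alpha$ is a constant, so we need the clock trick.

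The resolution, following Proposition~\ref{prop:clocksimth} / the proof in Appendix~A: in the induced communication protocol the players must also communicate \emph{when} inside the block a nonempty message is sent, costing $2\log(\alpha S)$ bits per round, so the expected communication is $\le c\,(S + 2\log(\alpha S) + 1)$. Meanwhile by Theorem~\ref{thm:av_std} (and the average-vs-worst-case reduction, Proposition~[\cite{NK}]), any protocol computing $IP_{\alpha S}$ with error $\varepsilon+\delta$ has expected communication $\Omega(\alpha S)$ — more precisely, there is an absolute constant $\kappa>0$ with $\ravg_{\varepsilon}(IP_{\alpha S})\ge \kappa\,\alpha S$. Combining, $c\,(S+2\log(\alpha S)+1)\ge \kappa\,\alpha S$, i.e. $c\ge \kappa\alpha S/(S+2\log(\alpha S)+1)$. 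Since $S\ge\log n$ we have $2\log(\alpha S)+1 = O(S)$, so the right-hand side is $\ge \kappa'\alpha$ for an absolute constant $\kappa'$ once $\alpha$ (and hence $S$) is large enough; choosing $\alpha$ with $\kappa'\alpha > c$ — say, recalling $c$ is the constant we wanted to beat and working backwards, we pick $\alpha$ so that this forces $c=\Omega(1)$, contradicting $c$ being too small. Hence some $x^{\ell+1},y^{\ell+1}$ must force $\Omega(1)$ expected rounds on block $\ell+1$, which is the claim.

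**Main obstacle.** The delicate point — the same one as in Theorem~\ref{thm:genlem} — is the reduction's correctness: one must check that the memory states $\sigma^A_\star,\sigma^B_\star$ at the start of the block, which the simulating players "know" in order to continue $\Pi$ inside the block, are genuinely available to them and do not themselves require communication, and that the outer function $G$ being non-trivial (here the trivial outer structure $\bigwedge$ with the equality checks, which we essentially bypass because we only track the internal transcript, not the output) does not interfere. Concretely the care needed is: the induced protocol for $IP$ does not actually need to \emph{output} anything correct unless we route the $IP$ answer through $f$'s output, so one should instead argue directly that the \emph{transcript length} inside the block is an information/communication lower bound — i.e., run the argument at the level of "a protocol that produces this transcript lets both parties reconstruct $IP(a,b)$," which is where Theorem~\ref{thm:av_std} bites. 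Getting the quantifiers right ("for all $a,b$ few rounds" $\Rightarrow$ "cheap protocol for all inputs" $\Rightarrow$ contradiction) and the bookkeeping of the $2\log(\alpha S)$ clock overhead against the constant $\alpha$ is the crux; everything else is routine and parallels the Theorem~\ref{thm:genlem} proof, which the lemma statement explicitly tells us to consult first.
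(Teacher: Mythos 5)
There is a genuine gap, and it sits exactly at the point you flag as the ``main obstacle'' and then try to wave away. Your induced protocol never computes $IP_{\alpha S}$: you fix the $X',Y'$ coordinates of block $\ell+1$ to arbitrary values and only simulate $\Pi$ \emph{inside} the block, so the correctness of $\Pi$ for $f$ is never invoked (with arbitrary $X',Y'$ the equality checks fail and $\Pi$'s output carries no information about $IP(x^{\ell+1},y^{\ell+1})$). Your fallback --- that ``a protocol that produces this transcript lets both parties reconstruct $IP(a,b)$,'' i.e.\ that the transcript length inside the block is by itself a communication lower bound --- is not an argument and is false as stated: the block transcript may well be empty (that is precisely the scenario one must rule out), and an empty transcript plus the two $S$-bit memory states at the start of the block encode nothing like $IP(a,b)$. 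The contradiction with Theorem~\ref{thm:av_std} can only be extracted by routing the $IP$ answer through the global correctness of $\Pi$ on $f$, which forces you to continue the simulation past the block to the end of both streams --- and there Bob must feed $x'^{\ell+1}=x^{\ell+1}$ into Daniel's stream, which he does not know. This is the crux the paper's proof resolves and your proposal does not.

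The paper's fix uses the fact that block $\ell+1$ is deliberately defined as the \emph{pair} of intervals $[\ell\alpha S,(\ell+1)\alpha S]\cup[\tfrac n2+\ell\alpha S,\tfrac n2+(\ell+1)\alpha S]$ (you instead treat the $X',Y'$ part as irrelevant). After the joint simulation of the first sub-interval, Alice sends Carole's memory to Bob and Bob sends Daniel's memory to Alice; then the roles swap: Alice continues Daniel's updates plugging $X'=X$ (which she knows), Bob continues Carole's updates plugging $Y'=Y$ (which he knows), they jointly simulate $\Pi$ again on the second sub-interval of the block (these rounds are also counted in $R_{\ell+1}$, which is why the block must include that interval), and finally Alice returns Daniel's memory so Bob can output $f(XY,YX)=IP(x^{\ell+1},y^{\ell+1})$, with error inherited from $\Pi$. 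Only then does the cost accounting $R_{\ell+1}(S+\log(\alpha S))+O(S)\ge \Omega(\alpha S)$ apply and yield $R_{\ell+1}=\Omega(1)$ for $\alpha$ a large enough constant. Your clock-overhead bookkeeping and the choice of $\alpha$ are fine, but without the memory-exchange/role-swap step the reduction has no correct output and the contradiction never materializes.
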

 
 The proof follows from the lemma by using it repeatedly to build inputs such that the lower bound holds simultaneously for every block.

\begin{proof}[Proof of Lemma \ref{lem:mainex}]
The proof of Proposition \ref{proposition:lastex} proceeds by reducing to $IP$
in the standard model of $CC$.

The following two paragraphs assume the reader is familiar with the Section \ref{sec:genlem}. When adapting the strategy we used in Section \ref{sec:genlem}, the protocol needs to be changed after step \ref{p3},
it is not clear that Bob can finish the simulation. 
Computing $f$ using $\Pi$, after step \ref{p3}, Bob needs to plug in values for $X',Y'$. 
By definition of the function $f$, the protocol will output $IP(X,Y) = IP(x^{\ell +1},y^{\ell+1})$ only if Bob is 
also able to pretend he is getting $X'=X$ and $Y'=Y$ or in other words if he can plug $x'^{\ell+1} = x^{\ell+1}$
and $y'^{\ell+1} = y^{\ell+1}$ as SC inputs for $f$.

However, he only knows his input $y^{\ell+1}$ and not $x^{\ell+1}$. We get around this issue by the following trick. After step \ref{p3},
Bob also sends Daniel's state to Alice so that \emph{she} does the updates of his memory pretending he is getting $X'=X$ (she knows $X$).
In parallel Bob updates Carole's memory pretending she is seeing $Y'$.

Let us now give more details.
Consider $\Pi$ over inputs starting with $x^{\leq \ell+1},y^{\leq \ell+1}$. As before, we denote by $R_{\ell+1}$
 the expected number of rounds, on block $\ell +1$.

%

We are going to find a protocol 
 $\Pi_{\ell+1}$ 
 for $IP_{\alpha S}$, in the usual model (without streams and bounded memory)
 using less than $R_{\ell+1} \times (S+log \alpha S) + 20S$ bits in expectation and error $\leq \varepsilon$. 
In turn, using Theorem \ref{thm:av_std} this quantity is greater than $c\alpha S$ if $\epsilon$ is small enough and $c$ is a small enough constant. Choosing a fixed $\alpha$ bigger than $100c^{-1}$ leads to $R_{\ell+1}\geq 80 \frac{S}{S + \log \alpha S} = \Omega(1)$ and this concludes the proof.
 
It only remains to explain how protocol $\Pi_{\ell+1}$ is derived from $\Pi$, and this is what we do next.
 
 \RestyleAlgo{boxruled}
\LinesNumbered
 \begin{algorithm}[ht]
 \caption{The protocol $\Pi_{\ell+1}$ solving $IP_{\alpha S}$, based on $\Pi$.}
\label{r1}
  Over inputs $x^{\ell +1},y^{\ell+1}$, Alice simulates the SC protocol $\Pi$
  on $x^{\leq \ell},y^{\leq \ell}$ for players Carole and Daniel. 
  \\
  \label{r2} Then Alice sends Bob the current 
  memory state of Daniel and message
 of the round from Carole to Daniel.
  \\
  \label{r3} Alice and Bob jointly simulate $\Pi$ on block $\ell+1$ pretending they are Carole and Daniel 
  receiving $x^{\ell +1},y^{\ell+1}$
  \\
  \label{r33} Alice sends Carole's memory state to Bob
  \\
  \label{r4} Bob sends Daniel's memory to Alice
  \\
  \label{r5} Alice and Bob on their own update Carole and Daniel memory until reaching indices in 
  $[\frac{n}{2} + \ell \alpha S, \frac{n}{2} + (\ell+1) \alpha S]$. At this point, they jointly simulate the protocol $\Pi$ again.
\\
\label{r6} Again Alice sends Daniel's memory to Bob.
  \\
  \label{r7} Bob knows and outputs $IP(x^{\ell+1},y^{\ell+1})$ (details below).
 \end{algorithm}
 
  We will now analyze the correctness and cost of $\Pi_{\ell+1}$.

  \emph{Cost :} Using \ref{sec:simclock}, we can already evaluate the cost to be in expecatation $R_{\ell+1} \times \left(S + \log(\alpha S) \right)$
  for parts \ref{r33} and \ref{r6} together and less than $20S$ for all the other memory exchanges
  at \ref{r3},\ref{r4},\ref{r5},\ref{r7}. The total cost in expectation
  is thus less than 
  \eq{
  R_{\ell+1} \times \left(S + \log(\alpha S) \right) + 20S.
  }
  \emph{Correctness :}
Ultimately, at line \ref{r6}, Alice sends back the memory of Daniel and Bob has
the final memory of Carole and Daniel, as if they each received $X'=X$ and $Y=Y'$, so he can answer
$f(XY,YX)=IP(X,Y)=IP(x^{\ell +1},y^{\ell+1})$.

The protocol $\Pi_{\ell+1}$ errs only if the initial procedure$\Pi$ errs.
Hence $\Pi_{\ell+1}$ solves $IP_{\alpha S}$ in the usual model of $CC$ with expected number of bits $\leq \frac{2\alpha S}{3}$, which is a contradiction.
\end{proof}

\section{The composition Theorem}
We start by exposing the idea of our general bound  by providing a tight tradeoff for communication rounds and memory in the SC model for the EQ function. Note that for the public coin randomized case EQ is a trivial problem. 
Then we show Theorem \ref{thm:genlem} in Section \ref{sec:genlem}.

\subsection{Deterministic EQ}\label{sec:deteq}

\begin{theorem}
Any deterministic protocol in the SC model computing $EQ_n$ with $R$ communication rounds and $S$ memory, must have $R \cdot S = \Omega(n)$.
\end{theorem}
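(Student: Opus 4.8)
The plan is to prove this lower bound for deterministic $EQ_n$ directly, by a round-elimination / block-reduction argument that is the deterministic skeleton of the general composition theorem (Theorem~\ref{thm:genlem}). First I would partition the $n$ input coordinates into $L = n/(\alpha S)$ consecutive blocks $I_1,\dots,I_L$, each of size $\alpha S$ for a constant $\alpha$ to be fixed later. Observe that $EQ_n$ is $\mathrm{AND}$-decomposable with primitives $g_\ell = EQ_{\alpha S}$: indeed $x = y$ iff $x^\ell = y^\ell$ for every $\ell$, and $\mathrm{AND}$ is a non-trivial outer function (take $a = b = 1^L$). So the structure we need is already in place; I just need the deterministic version of the core step.

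The core claim I would isolate is the following block lemma, proved by induction on $\ell$: for any fixed prefix inputs $x^{\le \ell}, y^{\le \ell}$ on which the protocol's behaviour up to the end of block $\ell$ is already determined, there is a choice of $x^{\ell+1}, y^{\ell+1}$ forcing at least one communication round to occur while the players read block $\ell+1$. The proof is by contradiction: suppose that for \emph{every} choice of inputs on block $\ell+1$ the players stay silent throughout block $\ell+1$. Then I extract a one-round (in fact zero-extra-communication-during-the-block) protocol for $EQ_{\alpha S}$ in the standard communication model as follows. Alice and Bob share the fixed prefixes; Alice simulates the SC protocol on Carole's side reading $x^{\le \ell}$ and then $x^{\ell+1}$, Bob simulates Daniel reading $y^{\le \ell}$ and then $y^{\ell+1}$; since no message is sent during block $\ell+1$, after the block Alice's knowledge is entirely contained in Carole's $S$-bit memory state and Bob's in Daniel's $S$-bit memory state. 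Alice sends Carole's state ($\le S$ bits, using Proposition~\ref{proposition:size}) and Bob can then finish the SC simulation (the remaining blocks carry no $EQ_{\alpha S}$-relevant information once we hard-wire the rest of the inputs to a fixed value, e.g. all equal), outputting $EQ_n(x,y) = EQ_{\alpha S}(x^{\ell+1}, y^{\ell+1})$ because all other blocks agree by construction. This is a deterministic protocol for $EQ_{\alpha S}$ using $O(S)$ bits; but the deterministic communication complexity of $EQ_m$ is $m+1$, so for $\alpha$ a large enough constant ($\alpha S > cS$) this is a contradiction. Hence some round must occur on block $\ell+1$.

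Applying the block lemma iteratively for $\ell = 0, 1, \dots, L-1$ produces inputs $x, y$ on which the SC protocol communicates at least once during each of the $L$ blocks, hence $R \ge L = n/(\alpha S)$, i.e. $R \cdot S = \Omega(n)$. Closing the loop, I would also note the symmetric/clock bookkeeping: because of the implicit clock the simulation of "finish the SC protocol in the CC model" costs the extra $O(\log)$ factors from Proposition~\ref{prop:clocksimth}, but these only affect constants here since $EQ_{\alpha S}$ has linear deterministic complexity and we are free to enlarge $\alpha$. The step I expect to be the main obstacle is the bookkeeping in the reduction when the protocol is \emph{not} one-way: a round on block $\ell+1$ might be initiated by Bob, and messages in later blocks depend on future inputs, so I have to be careful that "hard-wiring the remaining blocks to a fixed equal input" genuinely makes Bob able to complete the simulation knowing only Carole's transmitted memory plus his own side — essentially I need the analogue of the Carole/Daniel double-simulation trick used in the proof of Proposition~\ref{proposition:lastex}, adapted to the fact that here the later blocks carry no new information rather than needing an extra memory exchange. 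Everything else is routine.
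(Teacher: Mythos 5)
Your overall strategy (blocks of size $\alpha S$, viewing $EQ_n$ as an AND of $EQ_{\alpha S}$ over the blocks, and forcing one round per block by extracting a cheap standard protocol for the block function) is sound, and it is essentially the deterministic skeleton of Theorem~\ref{thm:genlem}; note that it differs from the paper's dedicated proof for $EQ_n$, which instead pigeonholes a silent stretch of length $k=n/R$ (whose position depends on the input), has Alice simulate both players on $(X,X)$ with $X=x^{n/k}$, and reduces $EQ_k$ to a consistency check (Daniel would stay silent on $y$ and reach the same end state), costing about $2S+\log k$ bits one-way. However, as written your induction has a genuine gap: the iteration needs the invariant that all previously chosen block pairs are \emph{equal} (you use it when you claim $EQ_n(x,y)=EQ_{\alpha S}(x^{\ell+1},y^{\ell+1})$ ``because all other blocks agree by construction''), but your block lemma only asserts the existence of \emph{some} pair $(x^{\ell+1},y^{\ell+1})$ forcing a round, and that pair may be unequal. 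Once an unequal pair is committed to the prefix, $EQ_n$ is identically $0$ on all continuations, so the extracted protocol's output carries no information about $EQ_{\alpha S}$ on later blocks, and the contradiction (hence the lemma) is no longer available there; indeed a protocol could spend one round effectively announcing ``our blocks differ'' and then legally remain silent for the rest of the stream, so rounds genuinely cannot be forced from such a prefix. The step you flagged as the main obstacle (Bob finishing the simulation after receiving Carole's state) is actually unproblematic under your silence hypothesis; the equal-pair invariant is the real issue.

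The gap is fixable, and you should fix it explicitly. One route is to restrict the contradiction hypothesis to diagonal block inputs: if for every $z$ the players are silent throughout block $\ell+1$ on $(z,z)$, then, deterministically, they are silent on every pair $(x',y')$ --- at the first in-block time slot where someone would speak, that player has received no in-block messages, so her behaviour coincides with her behaviour on the corresponding diagonal input, contradicting diagonal silence. With this observation your extraction is faithful for all pairs, the one-way deterministic bound of $\alpha S$ bits for $EQ_{\alpha S}$ gives the contradiction for any constant $\alpha>1$, and the negation now yields an \emph{equal} pair forcing a round, which preserves the invariant and produces a final input $x=y$ with at least one round in each of the $L=n/(\alpha S)$ blocks. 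Alternatively, reuse the mechanism of Lemma~\ref{lem:lowerbmain}: quantify the hypothesis only over block inputs with $EQ_{\alpha S}=1$ and have Bob abort and output $0$ as soon as the in-block communication exceeds the budget; this is correct on unequal pairs and never triggers on equal ones, and it is exactly how the paper's composition theorem maintains the constraint $g_{\ell+1}(x^{\ell+1},y^{\ell+1})=a_{\ell+1}$.
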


\begin{proof}
We are going to use  a protocol for $EQ$ in the SC model with inputs of length $n$, to solve $EQ$ on smaller inputs
of length $k \leq n$, in the standard communication model. 

Assume we have a protocol $\Pi$ in the SC model with $R$ number of rounds.
Let $k := \frac{n}{R}$ and for simplicity we assume that $k$ is integer. With
this choice of $k$, for any pair of inputs, $X,Y$ in the original protocol there is a silence of length $k$,
by the pigeonhole principle. 
By \emph{a silence}
of length $k$, we mean a period of length $k$ where all messages sent are empty.
However, note that the position of this silent period might be different for different inputs.

We will describe a deterministic protocol for $EQ_k$ in the usual communication model, for some appropriate choice of $k$, using only $2S + \log k$ bits, which will imply $2S + \log k \geq k$ using
the bound for $EQ_k$. We can assume $k \geq S$, otherwise the desired bound holds, and for $k$ large enough
$k \leq 2.1 S$, which implies $R \cdot S = \Omega(n)$.

\RestyleAlgo{boxruled}
\LinesNumbered

 \begin{algorithm}[ht]
 \caption{Let $\Pi_k$ be the following protocol
for $EQ_k$, on inputs $x,y \in \{0,1\}^k$.}
Set $X := x^{\frac{n}{k}} \in \{0,1\}^n$, the concatenation of $x$, $\frac{n}{k}$ times. \\
Alice simulates $\Pi$ on $X,X$, without communication:
she simulates \emph{both sides}, call them Carole and Daniel,
assuming they \emph{both} have input $X$.\\
 She sends Bob the index $i \in [k]$ of $x$ she is reading one step after the beginning of the silent period.\\
\label{s21} She sends as well Daniel's initial and final state $\stt_{init},\stt_{finish}$.\\
Bob then outputs $1$ iff $(1)$ Daniel would also have remained silent if he was getting $y$ (shifted by $i$)
and $(2)$ starting at $\stt_{init}$ Daniel would have reached $\stt_{finish}$ by the end of the silent period.
\end{algorithm}
We now prove the correctness of the protocol. 

If $x=y$, then conditions $(1)$ and $(2)$ hold.
If 
$x \neq y$ then conditions $(1)$ and $(2)$ cannot hold. Assume they do, 
then we can build two inputs $X,Y$ to the original protocol
such that the answer on $(X,X)$ and $(X,Y)$ are the same, which is a contradiction to the correctness of $\Pi$. 

The word $X = x^{\frac{n}{k}}$ is chosen to be the same as above,
and $Y$ is equal to $X$
except on the first $k$ bits of the silent period where $x$ is replaced by $y$.

All together, we have $x=y$ iff conditions $(1)$ and $(2)$ are correct.
In other words, $\Pi_k$ is a valid protocol for $EQ_k$.
\end{proof}

\subsection{Proof of Theorem \ref{thm:genlem}}\label{sec:genlem}
In this section we prove Theorem \ref{thm:genlem}.
The bounds are proven in the same way in the deterministic/randomized setting and for bits or rounds. The difference between bits and rounds comes from the difference in the bounds in Proposition \ref{prop:clocksimth}. In what follows, we focus on lower bounding the rounds in the randomized setting and leave to the reader the slight modifications needed for the other cases. The modifications required for the case $G = \bigoplus$ are discussed in the end of the proof.

The idea of the proof is the following: we start by any protocol $\Pi$ for $f$ in the SC model and we demonstrate that there exist inputs $x,y$ that require a lot of communication rounds on average under $\Pi$.
 More precisely, we argue that there must exist inputs $x,y$ such that some communication happens
 on each of the $L$ blocks \emph{on average over the randomness}. This is achieved by deriving a protocol, denoted $\Pi_{\ell}$ in the usual communication model for $g_\ell$ with error $\leq \varepsilon + \delta$ and that uses less than (roughly) $R_{\ell}\cdot S$ bits, where $R_{\ell}$ is the expected number of rounds of communication under $\Pi$ at block $\ell$. This shows a lower bound on $R_{\ell}$. The protocol $\Pi_{\ell}$ is obtained by embedding inputs of $g_\ell$ in bigger inputs for $f$, and running the purported efficient protocol $\Pi$ for $f$.
 
 More precisely, let $\Pi$ be a protocol for $f$ in the SC model. Let $a,b$ be the strings given by the assumption that $G$ is non trivial. 

Our goal is to find inputs $x = x^1,\ldots , x^L$ and $y=y^1,\ldots, y^L$ for $f$ with some specific properties. To find these inputs, we first find $(x^1,y^1)$, then given the values of $(x^1,y^1)$ we find $(x^2,y^2)$ and so forth, until we complete the inputs for $f$.

For each $\ell \in [L-1]$, we will prove that given any so-far picked inputs $x^{\leq \ell}=x^1,\ldots,x^\ell$ and $ y^{\leq \ell}=y^1,\ldots,y^\ell$ that satisfy $g_1(x^1,y^1),\ldots,g_{\ell}(x^\ell,y^\ell) = a_{\leq \ell}$, we can always find $(x^{\ell+1},y^{\ell+1})$ with the following two properties.
First, $g_{\ell+1}(x^{\ell+1},y^{\ell+1}) = a_{\ell+1}$ for the string $a$ that witnesses $G$ is nontrivial. 
Note that $a_{\ell+1} \in \{0,1\}$ and there should always be inputs for which $g_{\ell+1}(x^{\ell+1},y^{\ell+1}) = a_{\ell+1}$, otherwise $g_{\ell+1}$ is constant and we can ignore it from the computation. 
Second, in the protocol $\Pi$ for $f$, the expected number of communication rounds on block $\ell+1$ is at least      $\frac{\delta C_{\varepsilon+\delta}\left(g_{\ell+1}\right) - \co S}{S + 2\log t_{\ell +1}+1}$.
In Lemma \ref{lem:lowerbmain} we prove that it is indeed possible to find such inputs for all $\ell$ and for any prefix. This concludes the proof of the theorem since we first use the lemma to pick $(x^1,y^1)$ with high expected communication, then we use it again to pick $(x^2,y^2)$ with high expected communication (given the prefix $(x^1,y^1)$) and so forth until we pick the entire inputs $x,y$ in a way that the bounds in the theorem hold. For the case when $G=\bigoplus$, we can remove the $\delta$ from the bound by using Lemma \ref{lem:lowerboplus}.

In the following lemma, observe that we ask for a lower bound on the expected (over public randomness $r$ in the protocol)
 number of rounds instead of the more standard \emph{maximum} number of rounds over $r$. This is because,
 given $\Pi$ and inputs $x,y$,
the strings $r$'s which give high communication can a priori be different on each block.
 What we are after is one fixed string $r$ for which the communication is high 
 on many blocks. Of course, proving our lower bound for the expected number of rounds also implies it for the worst case.
 
\begin{lemma}\label{lem:lowerbmain}
For any $\ell \in [L-1]$ and for any pair of prefixes $x^{\leq\ell}, y^{\leq \ell}$ such that $g_1(x^1,y^1),\ldots,g_{\ell}(x^\ell,y^\ell) = a_{\leq \ell}$ it is possible to find a pair of inputs for block $\ell+1$, $(x^{\ell+1},y^{\ell+1})$ such that
\begin{compactitem}
\item $g_{\ell +1}(x^{\ell+1},y^{\ell +1}) = a_{\ell +1}$;
\item The expected 
number of rounds of communication on block $\ell+1$ 
under $\Pi$, for inputs starting with $x^{\leq \ell+1}, y^{\leq \ell+1}$ is at least
$R_{\ell+1} := \frac{\delta C_{\varepsilon+\delta}\left(g_{\ell+1}\right)  - S}{S + 2\log t_{\ell +1}+1}$.
\end{compactitem}
\end{lemma}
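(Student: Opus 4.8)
The plan is to fix a protocol $\Pi$ for $f$ in the SC model and, given the prefixes $x^{\leq\ell},y^{\leq\ell}$ with $g_1(x^1,y^1),\ldots,g_\ell(x^\ell,y^\ell)=a_{\leq\ell}$, to design a protocol $\Pi_{\ell+1}$ for $g_{\ell+1}$ in the ordinary (two-party, unbounded-memory) communication model that, on input $(x^{\ell+1},y^{\ell+1})$, embeds these as the block-$(\ell+1)$ inputs for $f$ and simulates $\Pi$. The embedding uses the nontriviality of $G$: Alice and Bob jointly know the fixed strings $a,b$, so they both can compute what the full inputs $x^{\leq\ell}x^{\ell+1}b^A$ and $y^{\leq\ell}y^{\ell+1}b^B$ should be, where $b_{\ell+1}$ (the postfix from the definition of non-triviality, split between the two sides) makes $f$ equal to $G(a_{\leq\ell}\,u\,b_{\ell})$ which depends on $u=g_{\ell+1}(x^{\ell+1},y^{\ell+1})$. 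Since $x^{\leq\ell},y^{\leq\ell}$ and $b_\ell$ are fixed and known to both players, each player can locally feed to $\Pi$ the part of the stream before block $\ell+1$ — Alice simulating ``Carole'' (the Alice-side of $\Pi$) and Bob simulating ``Daniel'' (the Bob-side) — updating the memory states $\sigma^A,\sigma^B$ with no communication, because on the shared prefix there is no genuine new information; whenever $\Pi$ would have the two sides exchange a message on these rounds, one player simply computes it (it is a function of data both know). At the start of block $\ell+1$, Alice knows $\sigma^A$ and Bob knows $\sigma^B$, and now they run $\Pi$ honestly across block $\ell+1$: each round of $\Pi$ on this block is simulated by exchanging the $\le S+1$-bit message of Proposition~\ref{proposition:size}, plus the clock overhead $2\log t_{\ell+1}$ from Proposition~\ref{prop:clocksimth} (a round of $\Pi$ may be separated from the next by silence, whose length must be communicated). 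After block $\ell+1$ both players again know the suffix $b_\ell$ and can locally finish the simulation without communication; whoever is supposed to output in $\Pi$ then knows $f$'s value, hence $u=g_{\ell+1}(x^{\ell+1},y^{\ell+1})$, with error $\le\varepsilon$ of $\Pi$.

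Next I would account for the extra additive $\co S$ (a constant times $S$) term: at the boundaries of block $\ell+1$ the simulation needs $O(S)$ bits of ``handoff'' — e.g. communicating the relevant memory state once before and once after the block so the right player can continue — which is where the $-\,\co S$ in the numerator of $R_{\ell+1}$ comes from, exactly as in the companion Lemma~\ref{lem:mainex} proof. The cost of $\Pi_{\ell+1}$ in expectation is then at most $R_{\ell+1}^{\Pi}\cdot(S+2\log t_{\ell+1}+1)+\co S$, where $R_{\ell+1}^{\Pi}$ is the expected number of communication rounds $\Pi$ spends inside block $\ell+1$ on the chosen full inputs. By the average-versus-worst-case relation (the proposition of \cite{NK} quoted earlier, with parameters $\varepsilon,\delta$), the fact that $\Pi$ errs with probability $\le\varepsilon$ on $f$ translates into $\Pi_{\ell+1}$ being a randomized protocol for $g_{\ell+1}$ whose expected transcript length is at least $\delta\cdot C_{\varepsilon+\delta}(g_{\ell+1})$ (one truncates $\Pi_{\ell+1}$ at $1/\delta$ times its expected length, incurring an extra $\delta$ error by Markov). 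Combining the two inequalities yields
$$R_{\ell+1}^{\Pi}\cdot(S+2\log t_{\ell+1}+1)+\co S\ \ge\ \delta\, C_{\varepsilon+\delta}(g_{\ell+1}),$$
which rearranges to the claimed lower bound on $R_{\ell+1}^{\Pi}$.

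The one remaining point is the first bullet of the lemma, that we can choose $(x^{\ell+1},y^{\ell+1})$ with $g_{\ell+1}(x^{\ell+1},y^{\ell+1})=a_{\ell+1}$ while \emph{simultaneously} the expected round count on block $\ell+1$ is large. This is an averaging argument: among the inputs consistent with $g_{\ell+1}(\cdot,\cdot)=a_{\ell+1}$, if none forced many rounds on block $\ell+1$, then restricting $\Pi$ to block-$(\ell+1)$ inputs of this restricted form and running the above embedding would give a protocol for $g_{\ell+1}$ restricted to the preimage of $a_{\ell+1}$ that is too cheap — but since $g_{\ell+1}$ is non-constant and we only need the \emph{worst-case} input over that preimage, the lower bound $C_{\varepsilon+\delta}(g_{\ell+1})$ still applies to the worst such input, so some input in the preimage of $a_{\ell+1}$ must be expensive. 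I expect the main obstacle to be the bookkeeping of \emph{expectations}: the string $r$ of public coins that makes block $\ell+1$ expensive may differ from the one making block $\ell$ expensive, so one must work throughout with the expectation over $r$ (as the lemma statement carefully does) and only at the very end extract a single $r$ that is simultaneously good on many blocks via linearity of expectation and a Markov/union argument over the $L$ blocks. Handling the silences — making sure that the clock simulation's $2\log t_{\ell+1}$ overhead, not $2\log n$, suffices because the relevant timing uncertainty inside block $\ell+1$ is bounded by $t_{\ell+1}$ — is the other place requiring a little care. For the $G=\bigoplus$ refinement, since $\bigoplus(a_{\le\ell}\,u\,b_\ell)$ depends on $u$ for \emph{every} choice of the other bits, one does not need to pin down a specific $a$ and can instead average over random completions, which lets one replace $C_{\varepsilon+\delta}(g_\ell)$ by $C^{avg}_\varepsilon(g_\ell)$ and drop the $\delta$; this uses Lemma~\ref{lem:lowerboplus} as indicated.
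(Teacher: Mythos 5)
Your simulation skeleton is essentially the paper's: both players privately simulate $\Pi$ on the fixed prefix (no communication needed there), jointly simulate block $\ell+1$ paying $S+1$ bits per message (Proposition~\ref{proposition:size}) plus the clock overhead of Proposition~\ref{prop:clocksimth}, then a single $S$-bit handoff of the simulated Alice-side memory lets Bob finish the run alone on a fixed suffix forcing $b_\ell$, and non-triviality of $G$ lets him decode $g_{\ell+1}(x^{\ell+1},y^{\ell+1})$ from the output of $\Pi$. Up to that point you are on the same route as the paper.

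The genuine gap is in how you turn this simulation into a \emph{worst-case}-cost protocol that can be compared against $C_{\varepsilon+\delta}(g_{\ell+1})$. The expected cost of the simulation is controlled only for block-$(\ell+1)$ inputs in the preimage of $a_{\ell+1}$ (that expected round count is exactly the quantity being lower bounded); on inputs with $g_{\ell+1}(x^{\ell+1},y^{\ell+1})=1-a_{\ell+1}$ there is no bound at all on how often $\Pi$ speaks during block $\ell+1$. Consequently ``truncate at $1/\delta$ times its expected length'' does not by itself yield a protocol with error $\varepsilon+\delta$ on \emph{all} inputs, and your fallback in the last paragraph---that $C_{\varepsilon+\delta}(g_{\ell+1})$ ``still applies to the worst input in the preimage of $a_{\ell+1}$''---is not valid: the worst-case complexity is a maximum over all inputs, and a protocol that is only cheap (or only correct) on that preimage gives no contradiction; a protocol required to be correct only on the preimage could even answer the constant $a_{\ell+1}$. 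The missing device in the paper's proof is the choice of the abort output: when the transcript reaches the budget $A_{\ell+1}=\frac{1}{\delta}\left(R_{\ell+1}(S+2\log t_{\ell+1}+1)+S\right)$, Bob outputs $1-a_{\ell+1}$. Early abort is then automatically \emph{correct} precisely on the inputs for which no cost bound is available, while on the preimage of $a_{\ell+1}$ Markov's inequality bounds the abort probability by $\delta$; hence the truncated protocol has worst-case cost at most $A_{\ell+1}$ and error at most $\varepsilon+\delta$ on all inputs, and $A_{\ell+1}\geq C_{\varepsilon+\delta}(g_{\ell+1})$ rearranges into the claimed bound on $R_{\ell+1}$. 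Without this (or an equivalent device) your argument does not close.
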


\begin{proof}[Proof of Lemma \ref{lem:lowerbmain}]
Note that we have assumed without loss of generality that the functions $g_\ell$ are not constant. 
For any $\ell \in [L-1]$, denote by $b_\ell \in \{0,1\}^{L-\ell-1}$ the input such that $G(a_{\leq \ell}0b_\ell) \neq G(a_{\leq \ell}1b_\ell)$.

We now describe a protocol 
 $\Pi_{\ell+1}$ 
 for $g_{\ell +1}$ in the usual communication model that computes $g_{\ell +1}$ with error $\leq \varepsilon + \delta$ and worst case communication cost in bits
 \eql{
\embedcost := \frac{1}{\delta} \left( R_{\ell+1} \left(S+ 2\log t_{\ell+1} +1 \right) + S\right).}{eq:embedcost}
 But this quantity has to be greater than $C_{\varepsilon+\delta}(g_{\ell+1})$ and this implies the lower bound on $R_{\ell+1}$ given in the statement.
 Note that we need to describe a protocol $\Pi_{\ell+1}$ that works \emph{for all inputs} even though the guarantee we have for $\Pi$ on block $\ell+1$ is only for inputs $x^{\ell+1}, y^{\ell+1}$ such that $g_{\ell +1}(x^{\ell+1},y^{\ell +1}) = a_{\ell +1}$. 
 The protocol $\Pi_{\ell+1}$
 is described in pseudocode below. Let us call Alice and Bob the players that want to use $\Pi_{\ell+1}$ to compute $g_{\ell+1}$ and Carole and Daniel the fictitious players simulating $\Pi$. 
 
\RestyleAlgo{boxruled}
\LinesNumbered
 \begin{algorithm}[ht]
 \caption{The protocol $\Pi_{\ell+1}$ for computing $g_{\ell+1}$ on inputs $x^{\ell +1},y^{\ell+1}$ based on $\Pi$}

 Alice and Bob simulate on their own the SC protocol $\Pi$
  on the lexicographically first inputs $x^{\leq\ell}, y^{\leq \ell}$ such that $g_1(x^1,y^1),\ldots,g_{\ell}(x^\ell,y^\ell) = a_{\leq \ell}$ for players Carole and Daniel. \\
    \label{p1}
 Then, Alice and Bob jointly simulate $\Pi$ on block $\ell+1$ pretending they are Carole and Daniel 
  receiving $x^{\ell +1},y^{\ell+1}$ respectively. 
  \\\label{p2} 
Alice sends Carole's memory state to Bob.\\\label{p3} 
 Bob finishes the simulation on his own, pretending Carole and Daniel are
  getting as input some strings
  $x',y'$ such that $g_{>\ell +1}(x',y') = b_\ell$. Let $c$ be the output of the protocol $\Pi$. Bob outputs $c$ if $G(a_{\leq \ell}0b_\ell)=0$ and $1-c$ if $G(a_{\leq \ell}0b_\ell)=1$. \\
  \label{p4}
  If at any point, the exchanged bits reach $\embedcost$ (defined in Equation \ref{eq:embedcost}), Bob outputs $1-a_{\ell+1}$ and the protocol stops. \label{p5}
 \end{algorithm}
 
  We now analyze the protocol $\Pi_{\ell+1}$.
  Bob has all the data to simulate exactly the protocol $\Pi$, which tries to compute the function
$f(x,y)= G\left(a_{\leq \ell}g_{\ell}(x^{\ell+1},y^{\ell+1})b_{\ell}\right)$. Note that it holds that $G\left(a_{\leq \ell}g_{\ell}(x^{\ell+1},y^{\ell+1})b_{\ell}\right) = g_{\ell}(x^{\ell+1},y^{\ell+1})$ when $G(a_{\leq \ell}0b_\ell)=0 \neq G(a_{\leq \ell}1b_\ell)$ (this is why Bob outputs $c$), and $G\left(a_{\leq \ell}g_{\ell}(x^{\ell+1},y^{\ell+1})b_{\ell})\right) =1- g_{\ell}(x^{\ell+1},y^{\ell+1})$ when $G(a_{\leq \ell}0b_\ell)=1\neq G(a_{\leq \ell}1b_\ell)$ (this is why Bob outputs $1-c$).
  
It is clear that the communication cost of the protocol $\Pi_{\ell+1}$ is $\leq \embedcost$, since the protocol exits before at step $(6)$. 
All we need to do to conclude is justify that the error is less than $\varepsilon+ \delta$. There are two cases when $\Pi_{\ell+1}$ errs. First, when the protocol $ g_{\ell+1}\left(x^{\ell+1},y^{\ell+1}\right)=a_{\ell+1}$ and the protocol exits (and hence outputs $1-a_{\ell+1}$); second, when the protocol completes the simulation of $\Pi$ and $\Pi$ errs. We can upper bound the error probability as
\begin{eqnarray}\label{errr}
 \Pr \left[\Pi_{\ell+1} \mbox{ errs}\right] \leq 
 \Pr[\Pi_{\ell+1} \mbox{ exits  when } g_{\ell+1}(x^{\ell+1},y^{\ell+1})=a_{\ell+1}] + 
 \Pr[\Pi \mbox{ errs}].
\end{eqnarray}

We know that $\Pr[\Pi \mbox{ errs}] \leq \varepsilon$ and hence it remains to show that $\Pr[\Pi_{\ell+1} \mbox{ exits when }  g_{\ell+1}(x^{\ell+1},y^{\ell+1})=a_{\ell+1}] \leq \delta$.

First, recall that for all possible inputs $(x^{\ell+1},y^{\ell+1})$ with $g_{\ell +1}(x^{\ell+1},y^{\ell +1}) = a_{\ell +1}$, the expected  number of communication rounds on block $\ell+1$ in $\Pi$ is bounded by $R_{\ell+1}$.
  
Using the results in section \ref{sec:simclock}, the protocol $\Pi$ on block $\ell +1$ can be simulated by Alice and Bob in the usual communication model with a multiplicative overhead $S+2\log t_{\ell+1}+1$.
Hence, the expected communication cost of step \ref{p2} is $R_{\ell+1} \left(S+2\log t_{\ell+1}+1 \right) $. 

The communication cost at step \ref{p3} is equal to $S$, and so, the total expected cost of $\Pi_{\ell+1}$, when $g_{\ell +1}(x^{\ell+1},y^{\ell +1}) = a_{\ell +1}$, is 
smaller than
$R_{\ell+1}(S+2\log t_{\ell+1}+1) + S = \delta A_{\ell+1}$.
Thus the probability the protocol exits because the transcript reached $A_{\ell+1}$ in this case is smaller than $\delta$ by Markov inequality.

 Hence we established that $\Pi_{\ell+1}$ is a protocol for $g_{\ell+1}$ with error $\varepsilon+\delta$ using 
  $\leq \embedcost$ bits in the worst case over inputs.
  This concludes the proof.
 \end{proof}
%
%
%

\paragraph{The case $G = \bigoplus$}
In this case, the proof is similar but simpler. In fact, for the $\bigoplus$ function, we do not need to enforce that $g_{\ell}(x^{\ell},y^{\ell}) = a_{\ell}$ for any $a_\ell$, since basically all prefixes do \emph{not} fix the output of $G$. 
The analogous part to Lemma \ref{lem:lowerbmain} reads.
\begin{lemma}\label{lem:lowerboplus}
For any given SC protocol $\Pi$ for $f$, for any $\ell \in [L-1]$ and for any pair of prefixes $x^{\leq\ell}, y^{\leq \ell}$, it is possible to find a pair of inputs for block $\ell+1$, $(x^{\ell+1},y^{\ell+1})$ such that the expected 
number of rounds of communication on block $\ell+1$ in $\Pi$ is at least
$\frac{C^{avg}_{\varepsilon}\left(g_{\ell+1}\right)  - S}{S + 2\log t_{\ell +1}+1}$.
\end{lemma}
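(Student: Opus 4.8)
The plan is to replay the proof of Lemma~\ref{lem:lowerbmain} with two simplifications that become available precisely because $G=\bigoplus$. First, flipping the last argument of an XOR always flips the value of $G$, so there is no need to constrain the value of $g_{\ell+1}$ on the inputs we pick: no matter what strings we hard-wire on the remaining blocks, the output of $\Pi$ on $f$ equals $g_{\ell+1}(x^{\ell+1},y^{\ell+1})$ XORed with a constant that the simulating players already know, and $g_{\ell+1}$ can always be recovered. Second, since the quantity to beat is the \emph{average} communication complexity $C^{avg}_{\varepsilon}(g_{\ell+1})$ rather than the worst-case one, we can discard the truncation step of the protocol $\Pi_{\ell+1}$ used in Lemma~\ref{lem:lowerbmain} (the step where it aborts once the transcript reaches $\embedcost$); this is exactly what removes the factor $\delta$ and the Markov-inequality loss it was there to absorb.

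Fix $\Pi$, the index $\ell\le L-1$, and the prefixes $x^{\le\ell},y^{\le\ell}$, and suppose for contradiction that for \emph{every} pair $(x^{\ell+1},y^{\ell+1})$ the expected number of nonempty communication rounds of $\Pi$ falling inside block $\ell+1$ is strictly less than $R^* := \frac{C^{avg}_{\varepsilon}(g_{\ell+1})-S}{S+2\log t_{\ell+1}+1}$, when $\Pi$ is run on an input whose first $\ell+1$ blocks are $x^{\le\ell+1},y^{\le\ell+1}$. We build a protocol $\Pi_{\ell+1}$ for $g_{\ell+1}$ in the usual communication model, on inputs $(x^{\ell+1},y^{\ell+1})$. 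Alice and Bob first \emph{locally} simulate $\Pi$ for a fictitious pair Carole, Daniel over the first $\ell$ blocks fed with the fixed, mutually known strings $x^{\le\ell},y^{\le\ell}$; this requires no communication. They then jointly simulate $\Pi$ on block $\ell+1$, feeding Carole the bits of $x^{\ell+1}$ and Daniel the bits of $y^{\ell+1}$; by Proposition~\ref{prop:clocksimth} each nonempty round of $\Pi$ in this block is realized in the usual model with $S+2\log t_{\ell+1}+1$ bits (resetting the clock to count within the block). Alice then sends Bob the current memory state of Carole ($S$ bits), after which Bob alone finishes the run of $\Pi$, feeding Carole and Daniel an arbitrary but fixed completion $\hat x,\hat y$ on the remaining blocks. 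Since $f(x,y)=\bigoplus_j g_j(x^j,y^j)$, the output of $\Pi$ equals $c_0\oplus g_{\ell+1}(x^{\ell+1},y^{\ell+1})$ with $c_0 := \bigoplus_{j\le\ell}g_j(x^j,y^j)\ \oplus\ \bigoplus_{j>\ell+1}g_j(\hat x^{j},\hat y^{j})$ a constant Bob knows, so Bob outputs $\Pi$'s output XORed with $c_0$.

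For the accounting, recall that $\Pi$ is causal, so whether a round falling in block $\ell+1$ is empty depends only on the inputs read up to that point; in particular the number of nonempty rounds of $\Pi$ in block $\ell+1$ does not depend on the completion $\hat x,\hat y$, and by hypothesis, for every $(x^{\ell+1},y^{\ell+1})$, its expectation $\rho$ over the public coins satisfies $\rho < R^*$. The joint simulation of block $\ell+1$ therefore costs in expectation $\rho\,(S+2\log t_{\ell+1}+1)$ bits, and the only other message is Alice's $S$-bit transmission of Carole's state, so the expected transcript length of $\Pi_{\ell+1}$ is at most $\rho\,(S+2\log t_{\ell+1}+1)+S < R^*(S+2\log t_{\ell+1}+1)+S = C^{avg}_{\varepsilon}(g_{\ell+1})$, and this holds for \emph{every} input. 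Since $\Pi_{\ell+1}$ errs only when $\Pi$ errs, its error is at most $\varepsilon$. This contradicts the definition of $C^{avg}_{\varepsilon}(g_{\ell+1})$ as the minimum over $\varepsilon$-error protocols of the worst-case expected transcript length. Hence some $(x^{\ell+1},y^{\ell+1})$ forces the expected number of rounds in block $\ell+1$ to be at least $R^*$, which is the claim; note that, unlike in Lemma~\ref{lem:lowerbmain}, no condition of the form $g_{\ell+1}(x^{\ell+1},y^{\ell+1})=a_{\ell+1}$ is needed.

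The points I would check most carefully are two. (i) ``Rounds on block $\ell+1$'' must be read as a single expectation over the public randomness $r$ of $\Pi$, not as a worst case over $r$ taken block by block --- this is the distinction emphasized just before Lemma~\ref{lem:lowerbmain}, and it is what later permits fixing one $r$ that is simultaneously good on many blocks when the per-block bounds are summed to obtain the $\bigoplus$ case of Theorem~\ref{thm:genlem} with $C^{avg}_{\varepsilon}(g_\ell)$ replacing $\delta\,C_{\varepsilon+\delta}(g_\ell)$. (ii) The local pre-simulation of the first $\ell$ blocks is genuinely communication-free, which holds because the prefix $x^{\le\ell},y^{\le\ell}$ is fixed in advance and known to both players; everything else in the bookkeeping is the verbatim restriction of Proposition~\ref{prop:clocksimth} to the single block $\ell+1$.
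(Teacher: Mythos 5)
Your proposal is correct and follows essentially the same route as the paper: simulate $\Pi$ via fictitious players Carole and Daniel, use the clock-simulation overhead $S+2\log t_{\ell+1}+1$ per round plus one $S$-bit memory transfer, exploit that $G=\bigoplus$ to drop both the constraint $g_{\ell+1}(x^{\ell+1},y^{\ell+1})=a_{\ell+1}$ and the truncation step, and compare the resulting expected transcript length against $C^{avg}_{\varepsilon}(g_{\ell+1})$ with error exactly that of $\Pi$. Your phrasing as a contradiction (every block-$(\ell+1)$ input has small expected round count) is just the contrapositive of the paper's direct existence argument, and your explicit remarks on causality and on the known prefix being simulated for free are points the paper uses implicitly.
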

Given the lemma, the conclusion of Theorem \ref{thm:genlem} for the case of $G=\bigoplus$ follows easily.
\begin{proof}[Sketch of proof for Lemma \ref{lem:lowerboplus}]
Again we use $\Pi$ to derive  a protocol for $g_{\ell+1}$ in the usual communication model, using $R_{\ell+1} (S+ 2 \log t_{\ell+1} +1) +S$ bits in expectation over block $\ell+1$, and error $\varepsilon$.
This should be at least $C^{avg}_{\varepsilon}(g_{\ell+1})$ by definition of $C^{avg}_{\varepsilon}(\cdot)$ and the bound follows.
 Note that in this case, in step $4$, Bob can choose any suffix he wants to finish the simulation of $\Pi$. And since we look at the average case complexity, we do not need step $5$ and the protocol $\Pi_{\ell+1}$ can always finish the simulation of $\Pi$ on block $\ell+1$. Last, for the error, we simply have this time
\eq{
 \Pr \left[\Pi_{\ell+1} \mbox{ errs}\right] &= \Pr [\Pi \mbox{ errs}].
}
\end{proof}

\section{Removing overlapping edges}\label{sec:disjoint}
In our previous construction, edges sent to Alice and Bob may overlap.
We now explain how to transform a distribution where edges sent to Alice and Bob may overlap to a distribution where  edges are disjoint.
The transformation is built on a gadget, which replaces $G$ by two copies $G_{\times}$ or $G_{\parallel}$ defined as follows.

\begin{definition}[The graph $G_{\times}$ and the graph $G_{\parallel}$]
Starting from a graph $G = (P,Q,E)$ with edge set $E$ defined over vertices $P \times Q$ denote by $\ell : P, Q\rightarrow P_1,Q_1$ and $r : P,Q \rightarrow P_2,Q_2$ two bijections mapping $P,Q$ to disjoint copies $P_i,Q_i$, for $i = 1,2$.
We define the graph  $G_{\times}$ (resp. $ G_{\parallel}$)
over $(P_1 \sqcup P_2, Q_1 \sqcup Q_2)$
by putting the two edges $(\ell(x), r(y)), (r(x), \ell(y))$  (resp. $(\ell(x), \ell(y)), (r(x), r(y))$) for every  $(x,y) \in E$ (see Figure \ref{fig:parallel}).
\end{definition}

\begin{corollary}
We may assume in the reduction of Theorem \ref{thm:reduction} that the edges of both players are non overlapping.
\end{corollary}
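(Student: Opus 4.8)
The goal is to show that the overlapping-edge reduction of Theorem~\ref{thm:reduction} can be made to produce instances in which Alice's edge set and Bob's edge set are disjoint, without degrading the lower bound. The plan is to replace the graph $G$ sampled from the hard distribution $\mu$ by one of the two ``doubled'' copies $G_\times$ or $G_\parallel$, chosen so that the embeddings used for Alice and Bob land on distinct ``sides'' of the doubling whenever they would otherwise have collided, and identical sides where they already agreed. Concretely, I would keep the same pair of $G$-compatible injections $\sigma,\tau$ from Definition~\ref{def:mu2}, recall that $\sigma$ and $\tau$ agree on $P\setminus P_0$ (resp. $Q\setminus Q_0$) and are disjoint on $P_0$ (resp. $Q_0$), and then route each original edge $(x,y)\in E$ through the $\ell/r$ copies according to whether its endpoints lie in the ``shared'' part or the ``split'' part. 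For the shared part I use $G_\parallel$-style edges $(\ell(x),\ell(y))$ and $(r(x),r(y))$, giving Alice the $\ell$-copy and Bob the $r$-copy; for edges touching $P_0\cup Q_0$ I use the $G_\times$-style crossing edges so that Alice's copy of the edge lives on one pair of sides and Bob's on the complementary pair. The key point is that after this routing Alice's graph and Bob's graph use disjoint vertex labels on every edge, hence the edge sets are literally disjoint.

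The next step is to verify that the three structural guarantees of a hard distribution (Definition~\ref{def:natural}) survive the gadget, applied now to the doubled graph on roughly twice as many vertices. The cut $(\cutl,\cutr)$ doubles to a cut of the doubled graph whose crossing parameter $|\cutl\cap Q|+|\cutr\cap P|$ at most doubles while $n$ also doubles, so the ratio $(1-\alpha+\eta)$ is preserved; similarly the large matching $M=M_0\cup M'$ doubles to a matching of size $2(1-\eta)n$ with the analogous $M_0$ living in the crossing part of the doubled cut, and its relative size bounds are unchanged. Point~(3), the statement that a small-memory streaming algorithm cannot retain more than $o(n)$ of the crossing edges, is inherited because a streaming algorithm on the doubled stream can be split, or simulated, into two streaming algorithms on (a relabelling of) the original $\mu$-stream, exactly as in Lemma~\ref{lem:union}; the crossing edges of the doubled graph biject with crossing edges of two independent $\mu$-copies, so the $o(n)$ bound carries over.

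With the doubled distribution established as hard, I would then re-run the proof of Theorem~\ref{thm:reduction} verbatim on it: form the cut $\cutl',\cutr'$ from $\sigma(\cutl)\cup\tau(P_0)$ and $\sigma(\cutr)\cup\tau(Q_0)$ (now inside the doubled vertex sets), use max-flow/min-cut to upper bound any output matching by $(1-\alpha+O(\eta))n$ plus the $o(n)$ from Lemma~\ref{lem:union}, and exhibit a matching of size $(1+\alpha-O(\eta))n$ coming from the $\sigma$-embedded copy together with the $\tau$-embedded copy of $M_0$. Since all the sizes scale by the same constant factor, the approximation ratio lower bound $\tfrac{1+\alpha}{1-\alpha}-O(\eta)$ and the memory bound $\Omega(m)$ are unchanged, which is exactly the assertion of the corollary.

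I expect the main obstacle to be the bookkeeping in the first step: one must choose the assignment of original edges to the $\ell$- and $r$-copies so that (a) Alice's and Bob's induced subgraphs are vertex-disjoint on their edges, (b) each of Alice and Bob still receives a faithful relabelled copy of $G\sim\mu$ (so the single-stream hardness transfers through Lemma~\ref{lem:union}), and (c) the large matching and the cut still interact as required. The delicate case is exactly the edges with one endpoint in $P_0$ (or $Q_0$) and one in the shared part, since there Alice and Bob already disagree on one coordinate but agree on the other; routing such an edge through $G_\times$ versus $G_\parallel$ has to be done consistently with the disjointness of $\sigma(P_0)$ and $\tau(P_0)$. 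Once that routing is pinned down, everything else is the routine ``everything doubles'' verification sketched above.
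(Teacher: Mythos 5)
There is a genuine gap, and it sits exactly at the point you flag as ``delicate'' and leave unresolved: the routing of edges between the $\ell$- and $r$-copies. Your plan gives Alice the $\ell$-copy of every shared--shared edge and Bob the $r$-copy, and splits the $G_\times$-style crossing copies of the edges touching $P_0\cup Q_0$ between the two players. If you make this routing consistent per vertex, it amounts to relabelling Alice's shared vertices by $\ell$ and Bob's shared vertices by $r$, i.e.\ it \emph{de-identifies} the shared part $P\setminus P_0,\ Q\setminus Q_0$ on which $\sigma$ and $\tau$ were deliberately chosen to agree. But that agreement is the engine of Theorem~\ref{thm:reduction}: it is what caps the maximum matching of the union at roughly $(1+\alpha)n$ while the cut argument caps the protocol's output at $(1-\alpha+O(\eta))n+o(n)$. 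Once the two players' shared parts live on disjoint vertex labels, the union contains two vertex-disjoint near-perfect matchings of total size about $2(1-\eta)n$, the cut bound scales to about $2(1-\alpha+\eta)n$, and the ratio you can extract degrades from $\frac{1+\alpha}{1-\alpha}$ back to the single-stream $\frac{1}{1-\alpha}$ --- the corollary's whole point (same bound, disjoint edges) is lost. If instead you make the routing \emph{not} per-vertex consistent (so that one original vertex appears as both $\ell(x)$ and $r(x)$ inside a single player's stream), then Alice's input is no longer a relabelled copy $G_{\sigma'}$ of a graph $G\sim\mu$, and the reduction in Lemma~\ref{lem:union} --- which relies precisely on Alice's marginal being $G_\sigma$ with $\sigma$ uniform and independent of $G$ --- no longer applies as stated; splitting vertices also perturbs the matching and cut structure you would need to re-verify. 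Either way, ``everything doubles'' is not a routine verification here.

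The paper's construction avoids this dilemma by never splitting the doubled edges between the players. Each player receives \emph{all} of them, but in different orientations: Alice gets $\sigma(G_\times)$ (only crossing edges, between the $\ell$-side of one part and the $r$-side of the other) and Bob gets $\tau(G_\parallel)$ (only parallel edges, within the $\ell$-sides and within the $r$-sides), with $\sigma,\tau$ $G$-compatible as in Definition~\ref{def:mu2} after doubling. Disjointness of the two edge sets is then automatic from the crossing-versus-parallel structure, even though the vertex overlap pattern (agreement off $P_0,Q_0$, disjoint images on $P_0,Q_0$) is exactly preserved; and since the $\times$ and $\parallel$ transforms of an $(\alpha,n,m,\eta)$-hard distribution are again hard with all quantities scaled by two, Lemma~\ref{lem:union} and the cut/matching computation of Theorem~\ref{thm:reduction} go through unchanged. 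To repair your argument you would have to give up on partitioning the doubled edges and adopt something of this form.
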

\begin{proof}
Assuming we have a hard distribution $\mu$ for approximate matching in streaming, we are going to build a distribution $\mu_2'$ as in Definition \ref{def:mu2}, but with the extra constraint that the edges of both players are disjoint.
We denote by $\mu_2'$ the distribution where $G$ is sampled according to $\mu$, then Alice is given $\sigma(G_{\times})$ and Bob $\tau(G_{\parallel})$ where $\sigma, \tau$ are $G$-compatible (with the appropriate
modification to account for the doubling of the input set).

Note that the marginals of $\mu_2'$ have no overlapping edges by construction.
Let us explain at a high level why Theorem \ref{thm:reduction} still works using  $\mu_2'$ rather than $\mu_2$. 

Observe that  if $\mu$ is $(\alpha, n, m(n), \eta)$ hard then so is $\mu_{\times}$ (resp. $\mu_{\parallel}$) the distribution under the $\times$ (resp. $\parallel$) transform.
Hence 
Lemma \ref{lem:union} follows in the same way as before.
 On the other hand, the size of the input has doubled but so has the maximum matching.
 
%
\begin{figure}[!ht]
    \centering
    \includegraphics[width=0.4\textwidth]{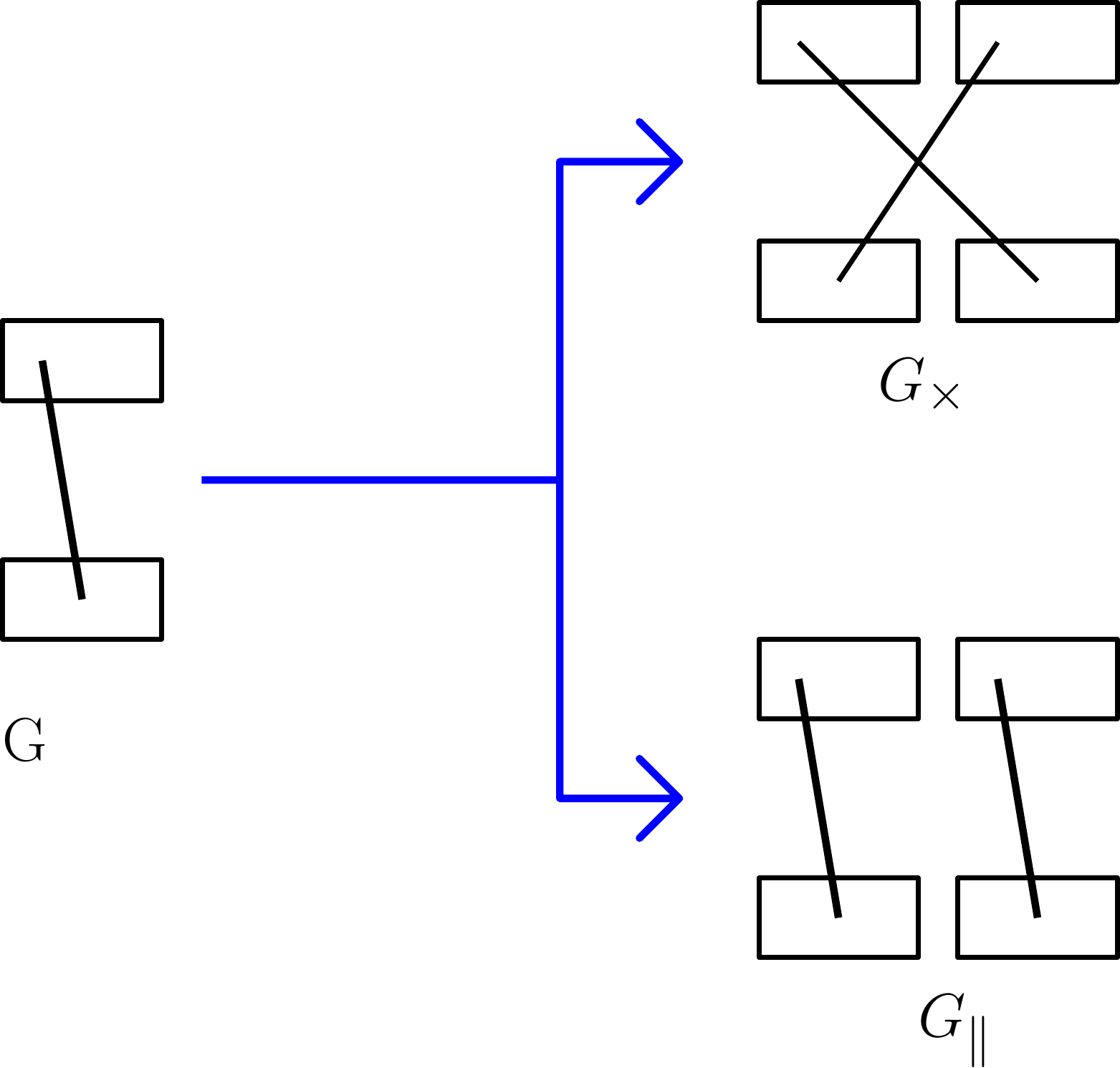}
        \caption{A bipartite graph $G$, and the associated graphs $G_{\times}$ and $G_{\parallel}$.} \label{fig:septree} \label{fig:parallel}
\end{figure}

\begin{figure}[!ht]
\vspace{-7cm}
\centering
    \includegraphics[width=0.6\textwidth]{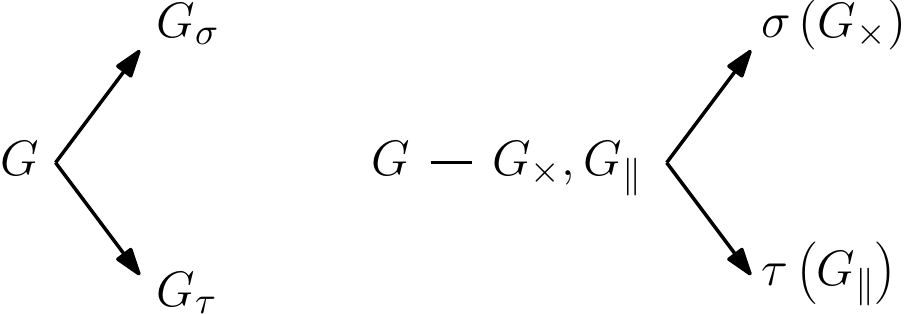}
        \caption{The difference between the two player distribution $\mu_2$ and $\mu_2'$ (no overlap).} \label{fig:septree} \label{fig:overlap}
\end{figure}

\end{proof}

\end{document}